\renewcommand{\tilde}{\widetilde}
\renewcommand{\bar}{\overline}
\newcommand{\SIS}{\mathrm{SIS}^\infty}
\newcommand{\CIS}{\mathrm{CIS}}
\newcommand{\SubsetSum}{\textsc{Subset-Sum}\xspace}
\newcommand{\FSubsetSum}{$\F_3^n$\textsc{-}\SubsetSum}
\renewcommand{\backref}[1]{}
\renewcommand{\backrefalt}[4]{% 
\ifcase #1 % 
\or 
[p.\ #2]% 
\else 
[pp.\ #2]% 
\fi}
\title{No exponential quantum speedup for $\mathrm{SIS}^\infty$ anymore}
\author{
Robin Kothari\thanks{Google Quantum AI. Email: \texttt{robin@robinkothari.com}}
\and 
Ryan O'Donnell\thanks{Computer Science Department, Carnegie Mellon University.  Email: \texttt{odonnell@cs.cmu.edu}. Part of this work was done while consulting for Google Quantum AI.}
\and 
Kewen Wu\thanks{Institute for Advanced Study. Email: \texttt{shlw\_kevin@hotmail.com}. Part of the work was done while at Google.}
}
\begin{document}
\date{}
\maketitle

\begin{abstract}
In 2021, Chen, Liu, and Zhandry presented an efficient quantum algorithm for the average-case $\ell_\infty$-Short Integer Solution ($\mathrm{SIS}^\infty$) problem, in a parameter range outside the normal range of cryptographic interest, but still with no known efficient classical algorithm. This was particularly exciting since $\mathrm{SIS}^\infty$ is a simple problem without structure, and their algorithmic techniques were different from those used in prior exponential quantum speedups.

We present efficient classical algorithms for all of the $\mathrm{SIS}^\infty$ and (more general) Constrained Integer Solution problems studied in their paper, showing there is no exponential quantum speedup anymore.
\end{abstract}

\clearpage
\tableofcontents
\clearpage

\section{Introduction}\label{sec:intro}

Finding new problems where quantum algorithms yield an exponential speedup over classical ones remains a central challenge in quantum computer science.
We have long known a few classes of problems with a likely exponential quantum speedup, such as those based on the simulation of quantum systems or based on the hidden subgroup problem (e.g., integer factorization, discrete log).  Beyond those, it is rare and exciting to find a genuinely new class of problems with potential exponential quantum speedup, particularly if the problems are natural and previously studied.

In 2021, Chen, Liu, and Zhandry~\cite{chen2022quantum} did precisely this. They presented an efficient quantum algorithm for the well-known \emph{Short Integer Solution in $\ell_\infty$ norm} ($\SIS$) problem, in a less-studied parameter regime, but one with no known classical polynomial-time algorithm. Here is an example result they proved (see \Cref{thm:CLZSIS} for more). Let $q$ be a prime.\footnote{We use $q$ (instead of $p$) to be consistent with cryptography literature. Our results also work for prime power or composite modulus. See \Cref{sec:discussion_our} for details.}
\begin{quote}
    \textbf{Sample CLZ theorem.} \emph{There is a $\poly(m,q)$-time \emph{quantum} algorithm that, given a \emph{uniformly random} $H\in\F_q^{n\times m}$ where\footnote{For simplicity, we use $C$ to hide constants that may be different from place to place and may depend on other parameters described as ``constant''. In later sections we will formally state these constant dependencies.} $m\ge Cq^4\log q\cdot n^3$, finds a ``short'' nonzero $x \in \F_q^m$ satisfying $Hx = 0$, where ``short'' means $\|x\|_\infty < \lfloor q/2\rfloor$, i.e., $x_i \neq \pm\lfloor q/2\rfloor$ for all~$i$.}
\end{quote}
The particularly exciting part about this work is that it used genuinely different techniques from existing quantum speedups. Furthermore, in a more standard and more challenging parameter regime --- e.g., $m = C n$ and ``short'' meaning\footnote{Here $x$'s entries are interpreted in $\{-\lfloor q/2 \rfloor, \dots, \lfloor q/2 \rfloor\}$.} $\|x\|_\infty \leq q/4$ --- the $\SIS$ problem is of significant cryptographic interest. Variants of the problem underlie the security of several (candidate post-quantum) cryptographic systems, such as Dilithium~\cite{DKLLSSS18} and Wave~\cite{debris2019wave}; see \Cref{sec:discussion_crypto} for more details.

The Chen--Liu--Zhandry quantum algorithm is based on a quantum reduction by Regev~\cite{regev2009lattices}, who based the \emph{hardness} of the Learning With Errors problem on the hardness of finding short lattice vectors.
A similar reduction shows that $\SIS$ can be solved efficiently if a certain decoding problem can be solved efficiently.
Instead of using Regev's reduction to show hardness, Chen, Liu, and Zhandry devise a clever, efficient quantum algorithm for the decoding problem (in a certain parameter regime), and thereby obtain an efficient quantum algorithm for $\SIS$.

These \emph{algorithmic} applications of Regev's reduction have since attracted considerable interest~\cite{chen2022quantum,chailloux2023quantum,yamakawa2024verifiable,jordan2024optimization,chailloux2024quantum}. Besides the algorithm for $\SIS$, two other quantum algorithms that use the same primitive are:
(1)~the algorithm of Yamakawa and Zhandry~\cite{yamakawa2024verifiable} that achieves an exponential quantum--classical black-box separation for a search problem relative to a random oracle; and,
(2)~the Decoded Quantum Interferometry (DQI) algorithm for various optimization problems, such as the Optimal Polynomial Intersection (OPI) problem~\cite{jordan2024optimization,chailloux2024quantum}. 
The result by Yamakawa and Zhandry can be made non-oracular by instantiating the random oracle with a cryptographic hash function, providing a concrete problem that presumably retains the exponential quantum speedup. This leaves us with at least three non-oracular problems based on this algorithmic primitive with potential exponential speedups: $\SIS$, OPI, and Yamakawa--Zhandry.

Our main result is to dequantize\footnote{We do not simulate the CLZ algorithm in a classical way. Rather, we provide a classical algorithm directly.} \cite{chen2022quantum}. We show that $\SIS$, and a further generalization they study, can be solved by a classical algorithm whose efficiency even outperforms that of their quantum algorithm.
For example, we prove (see \Cref{thm:SIS} for more) the following theorem.
\begin{quote}
    \textbf{Sample new theorem.} \emph{There is a $\poly(m,\log q)$-time \emph{classical} algorithm that, given \emph{any} $H\in\F_q^{n\times m}$ where $m\ge Cn^3$, finds a ``short'' nonzero $x\in\F_q^m$ satisfying $Hx=0$, where ``short'' means $\|x\|_\infty \leq  \lfloor q/6 \rfloor$, i.e., $-\lfloor q/6\rfloor\le x_i\le\lfloor q/6\rfloor$ for all $i$.}
\end{quote}

We highlight several advantages of our result: (1) it is classical and deterministic; (2) it works for worst-case $H$; (3) the requirement on $m$ has no dependence on~$q$; (4) it is $\poly(n)$-time even when $q = 2^{\poly(n)}$; and (5) the notion of ``short'' is much stricter. (Our algorithm's classical running time is even faster than CLZ's quantum running time in the setting of comparable~$m$; see \Cref{sec:discussion_our}.)
As we also discuss in \Cref{sec:discussion_ivanyos}, for this particular sample theorem, a similar but slightly weaker version (without features (3)(4)) can be extracted from a recent work of Imran and Ivanyos~\cite{II24}.  

In general, the main theorem in this work gives a classical algorithm that outperforms the most general quantum algorithm from \cite{chen2022quantum}.  
This quantum algorithm is for a generalization of $\SIS$ we call Constrained Integer Solution ($\CIS$) problem, in which the entries of $x$ are restricted to lie in a general set $A$, as opposed to an interval. (For more details, see \Cref{sec:cis}.) 
\begin{theorem*}\textnormal{(\cite{chen2022quantum}){\bf.}}
    Let $k \geq 2$ be a constant and $A \subseteq \F_q$ be of size $|A| = q - k + 1$.
    There is a $\poly(m,q)$-time quantum algorithm that, given a uniformly random $H \in \F_q^{n \times m}$ where
    \begin{equation}
        m \geq C q^4 \log q \cdot n^k,
    \end{equation}
    finds a nonzero $x \in A^m$ satisfying $Hx = 0$. 
\end{theorem*}

Our main contribution is a classical algorithm that improves their quantum algorithm. Note that we not only improve their $q$-dependence in all cases, but improve their $n$-dependence in most cases.

\begin{theorem*}\textnormal{(Main)\textbf{.}} 
    There is a $\poly(m,q)$-time classical algorithm for the above problem, even for
    $$
    m\ge C\log q\cdot\begin{cases}
        n^2 & \text{whenever $q>4^{k-1}$}\\
        n^{k-1} & \text{whenever $k\ge3$ and $q\ge7$}\\
        n^k & \text{in general for all $k\ge2$ and $q\ge3$.}
    \end{cases}
    $$
\end{theorem*}

\subsection{A first warm-up: \texorpdfstring{\FSubsetSum}{F3-Subset-Sum}} \label{sec:f3}

The general $\SIS$ and $\CIS$ problems have several parameters, so we will warm up to a full description of the problem with some special cases.  The simplest special case is \FSubsetSum, which was already considered to have a potential exponential quantum speedup in~\cite{chen2022quantum}.

The \FSubsetSum problem is just a vector version of the classic \SubsetSum problem, but over $\F_3^n$ rather than the integers, and with the target fixed to~$0$:
\begin{center}
    Given $m$ vectors $h_1,\ldots,h_m \in \F_3^n$, find a nonempty subset $S \subseteq [m]$\footnote{For any integer $n\ge1$, we use $[n]$ to denote the set $\{1,2,\ldots,n\}$.} such that $\displaystyle \sum_{i\in S} h_i = 0$. 
\end{center}
In general, we can consider $\F_q^n$-\SubsetSum for any~$q$, and $q = 3$ is the first interesting case. 

The reader familiar with $\SIS$ will recognize \FSubsetSum as a worst-case version of it, in which $q=3$ and a vector $x \in \F_3^m$ is considered ``short'' if it is in $\{0,1\}^m$.
Indeed, as we will discuss in \Cref{sec:sis_related}, \FSubsetSum is (roughly) equivalent to a wide variety of problems: ``Binary-Error LWE with $q = 3$'', ``list-recovery for $\F_3$-linear codes'', ``ternary syndrome decoding with large (maximal) weight'', ``LIN-SAT over~$\F_3$'', ``Learning From Disequations over~$\F_3$'', and more.

As posed, the \FSubsetSum problem might not always have a solution, particularly if $m$ is small compared to~$n$.  But it is a mathematically nontrivial fact (discussed in \Cref{sec:related_zero-sum}) that once $m$ crosses the sharp threshold $2n$, a solution \emph{always} exists.
We will only study the problem when $m > 2n$, so there will always be a solution to find.

\FSubsetSum is also very interesting in the average-case setting, where the input vectors are chosen uniformly at random and the goal is to succeed on almost all inputs. Here it simple to show that a solution exists with overwhelming probability beyond threshold $(\log_2 3)n  \approx 1.58n$, a slightly smaller threshold than the one for worst-case existence.
It is this average-case version that is considered by Chen, Liu, and Zhandry.
Although they were mainly interested in $\SIS$ with larger~$q$, their work also contained the following interesting result.

\begin{theorem}[Special case of \protect{\cite[Remark 4]{chen2022quantum}}] \label{thm:CLZ3ss}
    The \FSubsetSum problem with uniformly random input vectors can be solved by a bounded-error $\poly(m)$-time quantum algorithm provided $m \geq C n^2$.\footnote{$C$ is not explicit in~\cite{chen2022quantum}. However, we can show (proof omitted) that $m \geq(n+1)(n+2)/2$ vectors suffice. We also remark that certain conjectures would imply $0 < C < 1/2$; but these are based on a generalization of the Arora--Ge algorithm~\cite{arora2011new} without rigorous performance guarantees; see, e.g.,~\cite{Ste24}.}
\end{theorem}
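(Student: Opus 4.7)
My plan is to apply the Chen--Liu--Zhandry framework specialized to $q=3$ and $A=\{0,1\}$. At a high level, CLZ combines a Regev-style quantum reduction, which converts an SIS-type problem into an LWE-type decoding problem, with an efficient decoding algorithm that exploits the structure of the error distribution.

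First I would invoke the Regev quantum reduction. Using the input $H\in\F_3^{n\times m}$, a $\poly(m)$-time quantum procedure produces samples of the form $(a_i,b_i)\in\F_3^n\times\F_3$ such that $b_i-\langle a_i,s\rangle\in\{0,1\}\pmod{3}$ for an unknown secret $s\in\F_3^n$. Concretely, I would prepare the superposition $\frac{1}{\sqrt{2^m}}\sum_{x\in\{0,1\}^m}|x\rangle\,|Hx\bmod 3\rangle$, apply the $\F_3^m$-QFT to the first register, and measure. A careful analysis of the $\F_3$-Fourier transform of the $\{0,1\}$-indicator would show that the measurement outcome can be interpreted as an LWE sample of the claimed form, and that once $s$ is recovered the reduction yields a nonzero $x\in\{0,1\}^m$ satisfying $Hx=0$.

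Second I would solve the resulting decoding problem by an Arora--Ge style algebraic attack. Because the error $e_i:=b_i-\langle a_i,s\rangle$ lies in $\{0,1\}$, each sample satisfies the polynomial identity $e_i(e_i-1)=0$ in $\F_3$, which upon expansion becomes a quadratic equation in the $n$ unknown coordinates of $s$. Introducing a fresh variable for each degree-$\le 2$ monomial in $s$ linearizes the system; the resulting linear system has $\binom{n+2}{2}=\Theta(n^2)$ unknowns, so that collecting $m=\Omega(n^2)$ samples yields an over-determined linear system over $\F_3$ solvable by Gaussian elimination in $\poly(m)$ time. Plugging the recovered $s$ back into the CLZ reduction produces a valid Subset-Sum solution.

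The hard part is twofold. The first technical hurdle is the Fourier bookkeeping underlying the Regev reduction: verifying that the QFT-then-measure step really does produce LWE samples with errors concentrated on the size-$2$ set $\{0,1\}$, which requires explicit computation with the $\F_3$-character $\omega=e^{2\pi i/3}$ and careful tracking of coset states through the transform. The second, more delicate, hurdle is the rank argument for the linearization matrix over the random choice of $H$: one must show that, for uniformly random $H\in\F_3^{n\times m}$, the degree-$\le 2$ monomial evaluation vectors at the induced sample points $a_i$ span the entire $\Theta(n^2)$-dimensional monomial space with high probability. This is handled by a standard genericity argument bounding the probability that any fixed nonzero quadratic polynomial in $s$ vanishes on a uniformly random $a\in\F_3^n$, which is at most a constant less than $1$; chaining this over $\Omega(n^2)$ independent samples gives full rank (and hence recoverability of $s$) except with exponentially small failure probability, producing the bound $m\ge Cn^2$ for a sufficiently large constant $C$.
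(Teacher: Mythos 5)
This theorem is a citation (to Chen--Liu--Zhandry, Remark 4), and the paper does not supply a proof; the footnote merely asserts --- with ``(proof omitted)'' --- a version with $m \geq (n+1)(n+2)/2 = \binom{n+2}{2}$. So there is no in-paper argument to compare against, and your proposal must stand on its own. The high-level scaffold you describe (a Regev-style quantum reduction to a decoding problem, followed by a decoder whose complexity is governed by the count of degree-$\le 2$ monomials in $n$ variables) is the right framework, and your observation that $\binom{n+2}{2}$ is exactly the Arora--Ge linearization count is what the footnote's bound points to.

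However, there is a genuine gap in your claimed error model, and it sits precisely at the step you flag as ``Fourier bookkeeping.'' The Regev/CLZ reduction does not hand you LWE samples whose error lies literally in $\{0,1\}$. Working the duality through: the amplitude attached to error value $e \in \F_3$ is $\widehat{\indicator_{\{0,1\}}}(e) = 1 + \omega^e = 3\delta_{e,0} - \omega^{-2e}$, which has magnitudes $(2,1,1)$ at $e=0,1,2$. The error is therefore supported on \emph{all} of $\F_3$, not on a size-2 set, so the identity $e_i(e_i - 1) = 0$ simply fails on the samples coming out of the reduction and the classical Arora--Ge linearization does not apply. (One can also see the problem from the primal side: if you instead wire the Regev machine with a literal uniform superposition over $e \in \{0,1\}^m$, then the final QFT-and-measure produces a vector in $\ker H$ weighted by $4^{\#\{j : x_j = 0\}}$, and the probability of landing on a $\{0,1\}$-valued, i.e.\ subset-sum, solution is exponentially small --- that version of the reduction does not output what you need.) What CLZ actually exploit is the algebraic structure of the complex amplitude $\widehat{\indicator_A}$ (a single delta plus $k-1$ characters, $k=2$ here) inside a quantum decoder; the degree-$2$ / $\binom{n+2}{2}$ count has to be extracted from that argument, not from a classical linearization on post-measurement samples with bounded error. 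Two smaller issues: the Regev framework requires the decoder to be run \emph{coherently} as a unitary so the secret register can be uncomputed --- ``measure, then decode classically, then plug back $s$'' does not implement this --- and, even granting the error model, a union bound over all $3^N - 1$ nonzero dual vectors would force $m \gtrsim N \log 3 / \log(3/2) \approx 2.7 N$ rather than $m \ge N$, so it does not reproduce the footnote's sharper constant (though it would still give $m \ge Cn^2$).
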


However, this problem can be solved efficiently and deterministically by a classical algorithm, even in the worst case. This was already shown implicitly in a 2007 paper of Ivanyos, Sanselme, and Santha: \cite[Claim 1]{ISS12} implies that \FSubsetSum can be solved for worst-case inputs when $m\geq(n+1)(n+2)/2\sim n^2/2$. We present an improved classical algorithm.

\begin{theorem}[Proved in \Cref{sec:F3-subset-sum}]\label{thm:intro_F3-subset-sum}
    The worst-case \FSubsetSum problem can be solved by a deterministic $\poly(m)$-time classical algorithm provided $m \geq n^2/3 + Cn \log n$. 
\end{theorem}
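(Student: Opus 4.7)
The plan is to improve the Ivanyos--Sanselme--Santha (ISS) threshold of $\binom{n+2}{2}\sim n^2/2$ by a factor of $3/2$, down to $n^2/3 + O(n\log n)$. Recall that the ISS strategy lifts each $h_i\in\F_3^n$ to the rank-one symmetric matrix $v_iv_i^\top$, where $v_i=(1,h_i)^\top\in\F_3^{n+1}$, living inside the ambient space $\mathrm{Sym}(n+1,\F_3)$ of dimension $\binom{n+2}{2}$; once $m>\binom{n+2}{2}$, a nontrivial $\F_3$-dependency $\sum_i c_i v_iv_i^\top = 0$ must exist, yielding the simultaneous relations $\sum c_i = 0$, $\sum c_i h_i = 0$, $\sum c_ih_ih_i^\top = 0$, which can be massaged into an honest $\{0,1\}$-zero-sum.

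My proposed improvement has two components. First, I would replace the ambient space by a carefully chosen quotient of dimension $D=n^2/3+O(n)$. The candidate identities for such a reduction are $\F_3$-algebraic in nature: the diagonal entries $h_{ia}^2=\mathbf{1}[h_{ia}\ne 0]$ of $v_iv_i^\top$ are $\{0,1\}$-valued and therefore carry only ``support'' information about $h_i$, which can be traded against off-diagonal entries via the identity $(h_{ia}+h_{ib})^2 = h_{ia}^2 + 2h_{ia}h_{ib} + h_{ib}^2$. A triangular elimination that processes pairs $(a,b)$ in a specific order should identify roughly $\tfrac{1}{6}n^2$ redundant coordinates of $\mathrm{Sym}(n+1,\F_3)$, cutting the effective ambient dimension from $\binom{n+2}{2}\approx\tfrac{1}{2}n^2$ down to $\tfrac{1}{3}n^2$.

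Second, the reduced lift $\Phi\colon\F_3^n\to\F_3^D$ may lose enough information that a dependency $\sum_i c_i\Phi(h_i)=0$ with $c_i\in\F_3$ does not directly yield a $\{0,1\}$-combination, specifically when some $c_i=2$. I would handle this by an iterative refinement procedure: in each round, use $O(n)$ fresh vectors to construct an auxiliary dependency that either produces a $\{0,1\}$-zero-sum directly or flips a constant fraction of the remaining ``$c_i=2$'' coefficients to $1$ (or to $0$) by combining them with auxiliary relations. A potential-function argument (say, the number of $2$-coefficients) should show that $O(\log n)$ rounds suffice, which accounts for the $O(n\log n)$ additive overhead in the target bound.

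The main obstacle is the construction of $\Phi$ in the first step, which must simultaneously achieve the low dimension $D\sim n^2/3$ and retain enough information to drive the refinement to a genuine $\{0,1\}$-zero-sum. The $\F_3$-identities used to quotient out coordinates must be compatible with the refinement procedure, so that the information lost in $\Phi$ is always recoverable from the auxiliary dependencies; pinning down the exact constant $\tfrac{1}{3}$ is expected to be the delicate heart of the proof. The rest---Gaussian elimination for finding dependencies, and combinatorial bookkeeping for the refinement rounds---is comparatively routine, and the algorithm's determinism follows from the determinism of each individual step.
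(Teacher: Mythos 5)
Your plan is a genuinely different route from the paper's, but it has a gap that is not merely ``pinning down the constant'': neither half is actually constructed. For Step~1, the identity $(h_{ia}+h_{ib})^2 = h_{ia}^2 + 2h_{ia}h_{ib} + h_{ib}^2$ is an identity among \emph{functions of $h_i$}, not a linear dependence among the coordinates of $v_iv_i^\top$; the left-hand side is a fresh quadratic that is not itself an entry of the lift, so it cannot be triangularly eliminated out of $\mathrm{Sym}(n+1,\F_3)$. Worse, the monomials $1, x_a, x_ax_b$ of individual degree $\le 2$ are linearly independent as functions $\F_3^n\to\F_3$, so the affine span of $\{v_iv_i^\top : h_i\in\F_3^n\}$ has full dimension $\binom{n+2}{2}-1$; no linear quotient $\Phi$ of dimension $n^2/3$ can guarantee by pigeonhole that a dependence among $\Phi(h_i)$'s carries the information you need, and the burden necessarily falls onto the (entirely unspecified) ``iterative refinement'' in Step~2. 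You yourself flag the construction of $\Phi$ as open, and the refinement mechanism (why would auxiliary dependencies ``flip'' a constant fraction of $c_i=2$'s?) is equally unsubstantiated, so what you have is a research sketch with both halves unresolved, not a proof.

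The paper's argument avoids the quadratic lift entirely and stays in $\F_3^n$. It runs the same one-dimension-at-a-time induction that already gives $(n+1)(n+2)/2$: find a nontrivial dependence $u-u'=0$ among a small set $S$, note that $0\cdot u$, $u$, and $-u=u+u'$ are all subset-sums from $S$, project the remaining vectors orthogonal to $u$, and recurse. The improvement is a \emph{sparse dependence} lemma (\Cref{lem:f3-zero-sum-zeros}): among $\ell+\lceil\log_3(\ell+1)\rceil$ vectors in a dimension-$\ell$ subspace, one can find a zero-sum supported on only $\frac{2(\ell+1)}{3}+\lceil\log_3(\ell+1)\rceil$ of them. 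After a change of basis, take $\ell$ standard basis vectors plus $t=\lceil\log_3(\ell+1)\rceil$ extras; a uniformly random nonzero $\F_3$-combination of the extras has at most $\frac{2\ell/3}{1-3^{-t}}\le\frac{2(\ell+1)}{3}$ nonzero coordinates in expectation (each coordinate is constant-zero or uniform), can be found by brute force over $3^t-1=O(\ell)$ choices, and its support is canceled with standard basis vectors. So the per-level cost drops from $\ell+1$ to $\sim\frac{2}{3}\ell + O(\log\ell)$, and summing over $\ell=0,\dots,n$ gives $\frac{(n+1)(n+2)}{3}+O(n\log n)$. In particular, the $O(n\log n)$ term is a per-level sparsification budget, not the cost of any refinement rounds, and the $\frac{1}{3}$ is the probability that a uniform $\F_3$ element is zero---no Veronese lift or arithmetic identity is involved.
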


\subsection{Short Integer Solution in \texorpdfstring{$\ell_\infty$}{L-inf} norm, \texorpdfstring{$\SIS$}{SIS-inf}} 

We now discuss the main $\SIS$ problem studied by Chen, Liu, and Zhandry. 
Given a set of $m$ vectors over $\F_q^n$, the goal is to find a nontrivial linear combination of these that sums to $0$, where the vector $x$ of coefficients should have small infinity norm, meaning $x \in \{-s,\ldots,s\}^m$ for some positive integer $s$. 
Throughout this paper, we work with odd prime field $\F_q$ and view it as the set $\{-\lfloor q/2\rfloor,\ldots,\lfloor q/2\rfloor\}$. Thus the largest infinity norm of vectors in $\F_q^m$ is $\lfloor q/2\rfloor$. 

\begin{definition}
    In the $\SIS(n,m,q,s)$ problem for a prime $q > 3$, we are given a matrix $H \in \F_q^{n\times m}$ and an integer $s \in \{1,\ldots, \lfloor q/2\rfloor\}$. The goal is to find a nonzero vector $x\in \F_q^m$ with $\|x\|_\infty\leq s$ satisfying $Hx = 0$.
\end{definition}

This problem gets easier as $m$ gets larger and as $s$ gets closer to $\lfloor q/2\rfloor$. For example, when $s=\lfloor q/2\rfloor$, every nonzero vector is allowed and we only need $m=n+1$ vectors.

This problem is interesting both in the worst case and the average case.\footnote{Assuming worst-case hardness of some lattice problems and passing through the SIS problem in $\ell_2$ norm \cite{Ajt96,micciancio2007worst}, the average-case $\SIS$ problem is hard when $s=O(1)$, $q=\Theta(n^2\log n)$, and $m=\Theta(n\log n)$.} The main result of Chen, Liu, and Zhandry is a quantum algorithm for the problem in the average case, where the matrix $H$ is drawn uniformly at random from $\F_q^{n\times m}$.

\begin{theorem}[\protect{\cite[Theorem 2]{chen2022quantum}}]\label{thm:CLZSIS}
    For any odd constant $k \geq 3$, there is a bounded-error $\poly(m,q)$-time quantum algorithm for \emph{average-case} $\SIS$ when
    \begin{equation}
    s=\frac{q-k}{2}~~\text{and}~~m \geq C q^4\log q \cdot n^{k}.
    \end{equation}
\end{theorem}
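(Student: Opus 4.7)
The plan is to implement the CLZ algorithm, which is essentially a \emph{quantum Regev reduction run in reverse}: rather than using the reduction to base LWE hardness on lattice problems, we use the same quantum primitive as a positive algorithmic tool for $\SIS$. At a high level, I would prepare a carefully chosen quantum state, measure an ancilla to fix a coset of $\ker H$, apply the QFT modulo $q$, and measure; the outcome, after a classical post-processing step, encodes a short kernel vector of $H$.

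First, I would prepare a superposition
$$|\psi\rangle \;\propto\; \sum_{x \in \mathbb{Z}^m} \phi(x)\,|x \bmod q\rangle\,|Hx \bmod q\rangle, \qquad \phi(x) = \prod_{i=1}^{m} f(x_i),$$
where $f : \mathbb{Z} \to \mathbb{R}_{\ge 0}$ is supported near $0$. The critical choice is to take $f = g^{*k}$ as a $k$-fold convolution of a narrower nonnegative function $g$, so that $\mathrm{supp}(f) \subseteq \{-s,\ldots,s\}$ with $s = (q-k)/2$. When $k$ is odd, one can take $g$ to be the indicator of a short symmetric integer interval whose $k$-fold sumset is exactly $[-s,s]$; this makes $\hat f = \hat g^{\,k}$ highly structured in the Fourier domain. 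I would then measure the second register to obtain some $y \in \F_q^n$; for a uniformly random $H$ and $m$ sufficiently large compared to $n$, the marginal over $y$ is close to uniform, so each fixed $y$ arises with non-negligible probability. The first register now collapses to a $\phi$-weighted superposition over the coset $\{x \in \mathbb{Z}^m : Hx \equiv y \pmod{q}\}$.

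Next, I would apply the QFT modulo $q$ coordinate-wise on the first register and measure. By Poisson-style summation, the resulting distribution on $\xi \in \F_q^m$ is supported on the row span of $H$ (the $\F_q$-dual of $\ker H$) with amplitudes proportional to $\prod_i \hat f(\xi_i/q)$. Because $\hat f$ is a $k$-th power of $\hat g$, the high-amplitude outcomes concentrate on $\xi$ whose coordinates sit in a small set of Fourier peaks, and a classical decoding step extracts a short $x \in \ker H$ from such a $\xi$.

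The main obstacle is the quantitative analysis showing the procedure succeeds with inverse-polynomial probability. This splits into two pieces: (i) a Fourier-concentration bound establishing that a non-negligible fraction of QFT outcomes decodes back to a valid $\|\cdot\|_\infty \le s$ vector in $\ker H$, and (ii) a genericity argument over uniformly random $H \in \F_q^{n\times m}$, controlling the statistics of the dual lattice enough that the amplitudes above are not wiped out by atypical $H$. The parameter bound $m \ge C q^4 \log q \cdot n^k$ should arise from balancing these needs: the $n^k$ factor reflects $k$-wise coincidences captured by the $k$-fold convolution $g^{*k}$, while the $q^4 \log q$ factor controls alias and discretization errors introduced by restricting to modulus $q$. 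I expect the Fourier-analytic success-probability calculation --- linking the amplitudes $\prod_i \hat f(\xi_i/q)$ to a genuinely decodable output distribution over short kernel vectors --- to be the main technical hurdle.
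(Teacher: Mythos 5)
This theorem is not proved in the paper at all: it is quoted verbatim from Chen, Liu, and Zhandry (\cite[Theorem 2]{chen2022quantum}), and the entire point of the paper is to \emph{dequantize} it, i.e.\ to give a classical algorithm (\Cref{thm:SIS}) that supersedes it. There is therefore no ``paper's own proof'' to compare against, and you should flag this before investing effort in reconstructing the CLZ argument.

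That said, since you attempted a reconstruction: the broad shape is right (Regev's reduction run algorithmically, filter function $f = g^{*k}$ chosen so that its support is $\{-s,\dots,s\}$ and its Fourier transform is a $k$-th power), but the final step as written does not close. After you measure the syndrome register to get $y$ and then QFT the collapsed state $\sum_{Hx\equiv y}\phi(x)\,|x\rangle$, the measurement outcome $\xi$ lives in $\mathrm{rowspan}(H)=(\ker H)^\perp$, which is the \emph{wrong} space: a random dual-code vector does not classically ``decode'' to a short element of $\ker H$, and you do not say how that extraction would go. In CLZ the decoding is not a classical postprocessing step on a measured outcome; it is a \emph{coherent} decoding applied before measurement, used to disentangle registers so that the residual state, after QFT, sits over $\ker H$ with amplitudes governed by $\hat f$ (and the $k$-fold convolution structure is what makes this coherent decoding tractable). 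So the gap is the transition from ``$\xi \in \mathrm{rowspan}(H)$'' to ``short $x \in \ker H$'': as written it would fail, and the fix requires restructuring the algorithm so the decoder acts quantumly rather than classically on a measured sample.
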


This algorithm works for average-case inputs and is only $\poly(n)$-time when $q\leq \poly(n)$. The main result in our work is a classical algorithm that works for worst-case inputs, achieves a significantly smaller~$s$, and allows $q$ to be exponentially large in~$n$.

\begin{theorem}[Proved in \Cref{sec:general_cis_sis}]\label{thm:SIS}
    For any constant $k \geq 2$, there is a deterministic\\ $\poly(m,\log q)$-time classical algorithm for \emph{worst-case} $\SIS$ when
    \begin{equation}
        s = \Bigl\lfloor\frac{q}{2k}\Bigr\rfloor~~\text{and}~~m \geq Cn^k.
    \end{equation}
\end{theorem}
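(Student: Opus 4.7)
The plan is to zero out the $n$ rows of $Hx=0$ one at a time, using a fresh \emph{disjoint} batch of columns per row so that coefficients on original columns never compound across rounds. We partition the $m\ge Cn^k$ columns into $n$ batches $B_1,\ldots,B_n$ of size at least $M:=Cn^{k-1}$, and build the solution $x$ incrementally with the invariant that after round $i$, $x$ is supported on $B_1\cup\cdots\cup B_i$, satisfies $(Hx)_r=0$ for every $r\le i$, and $\|x\|_\infty\le s$. Since each original column is used in at most one batch and receives a single $\{-s,\ldots,s\}$-valued coefficient there, the final infinity-norm bound follows immediately, and nontriviality comes from making sure at least one batch contributes a nonzero $y$.

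Round $i$ reduces to an inhomogeneous \emph{SIS-with-target} subroutine on batch $B_i$: given $H':=H_{[i],B_i}\in\F_q^{i\times M}$ and target $t\in\F_q^i$ (whose first $i-1$ coordinates are zero by the invariant), find $y\in\{-s,\ldots,s\}^M$ with $H'y=t$, then set $x|_{B_i}\leftarrow y$. Provided the subroutine runs in $\poly(M,\log q)$ time for each $i\le n$, the overall algorithm runs in $\poly(m,\log q)$ as required.

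We plan to implement the subroutine by an inner induction on the number of target rows: having solved the $(i-1)$-row case (with a slightly tightened coefficient budget) to obtain a partial $y^0$ and residual $\rho\in\F_q$ in row $i$, we correct by adding an element $\delta$ of the $\F_q$-kernel of the first $i-1$ rows that has inner product $\rho$ with the $i$-th row and small $\ell_\infty$ norm. The correction amounts to a single modular linear equation in $M$ bounded integer unknowns over a sublattice of $\Z^M$ whose shortest vector is of order $q^{1/(M-i+1)}$ by a Minkowski-style count, and can be produced in $\poly(M,\log q)$ time via classical continued-fraction / Gaussian-reduction techniques on that single modular equation. The base case $i=0$ is trivial.

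The main obstacle will be the budget bookkeeping: the per-level coefficient budgets must telescope to at most $s=\lfloor q/(2k)\rfloor$ while each correction step remains feasible within its allotted budget. This is where the column hypothesis $m\ge Cn^k$ and the infinity-norm hypothesis $s=\lfloor q/(2k)\rfloor$ interact: the batch size $M\ge Cn^{k-1}$ drives down $q^{i/M}$ enough that each correction can be solved with small $\ell_\infty$ norm, and the factor $1/(2k)$ in $s$ provides exactly the slack to absorb all the budget losses as the recursion peels off rows. Verifying that these inequalities can be simultaneously enforced for all $i\le n$ and, in particular, that the resulting $y$ after correction stays nonzero and within $\{-s,\ldots,s\}^M$ is the main technical task the full proof will have to carry out.
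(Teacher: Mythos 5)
Your proposal takes a completely different route from the paper and unfortunately has a fatal gap in the budget bookkeeping, which is precisely the part you acknowledge as "the main technical task the full proof will have to carry out."

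The core problem is additive. Your inner induction builds $y$ as $y^{(0)}+\delta_1+\delta_2+\cdots+\delta_i$, where each $\delta_j$ is a \emph{nonzero} integer vector whenever the residual $\rho$ is nonzero; so $\|\delta_j\|_\infty\ge1$. In the outer round $i=n$ this induction has $n$ levels, so the budget that must be absorbed is $\sum_{j}\|\delta_j\|_\infty\ge n$ in the worst case. But the theorem requires $\|y\|_\infty\le s=\lfloor q/(2k)\rfloor$ with a column bound $m\ge Cn^k$ that has \emph{no $q$-dependence}, so it must hold in particular with $q$ a constant prime (e.g.\ $q=5$, $k=2$, $s=1$). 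Then the target budget is $s=1$ while your corrections alone cost $\ge n$. No Minkowski-type refinement of the per-level bound $q^{1/(M-i+1)}$ can fix this, because each nonzero integer correction already costs at least $1$; the telescoping simply cannot close. Two further weaknesses, though moot given the above: (i) your Minkowski count bounds the shortest vector of the \emph{lattice}, not of the required \emph{coset} $\{\delta: H'_{[i-1]}\delta\equiv0,\ \langle h_i,\delta\rangle\equiv\rho\}$; finding short coset representatives is CVP, which is not a "single modular equation" solvable by continued fractions, and for worst-case $H$ the coset can have shortest vector $\Theta(q)$ (take $H'$ with standard-basis rows); (ii) your approach needs nontrivial kernel dimension in each batch, which fails for $k=2$ if you try to disjointify the inner corrections.

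The paper's actual proof avoids row-by-row elimination entirely. It partitions the coefficient range $\F_q$ into $k$ blocks $H_0\cup H_1\cup\cdots\cup H_{k-1}$ with $H_0=\{-s,\ldots,s\}$, and works with \emph{reducible vectors}: a $(\pm1)$-sum $u$ of a small set of input vectors with the property that for every $c\in\F_q$, $c\cdot u$ can be re-expressed as a sum with coefficients in $H_0$ over the same set. The key technical ingredient (Lemma~\ref{lem:general_cis_larger_k}) builds such a $u$ via a depth-$k$ tree of nested linear dependences and an antisymmetrization over permutations, which makes the cross terms from the large coefficient blocks $H_1,\ldots,H_{k-1}$ cancel as differences $B_i-B_i\subseteq H_0$. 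This is a multiplicative (not additive) cost in $m$: one pays $n^k$ columns but the coefficient magnitude is bounded \emph{once}, not accumulated over $n$ stages. That is why the paper's bound is $q$-free and worst-case, while a stage-by-stage CVP correction cannot be.
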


We also have a variant of \Cref{thm:SIS} which has an additional assumption on $k$, but a substantially simpler proof and better control on the constants.
\begin{theorem}[Simpler version; proved in \Cref{sec:simple_sis}]\label{thm:simple_halving_sis_intro}
    For any $k \leq \lfloor q/2\rfloor$ that is a positive-integer power of~$2$, there is a deterministic $\poly(m,\log q)$-time classical algorithm for \emph{worst-case} $\SIS$ when 
    \begin{equation}
        s = \Bigl\lfloor\frac{q}{2k}\Bigr\rfloor~~\text{and}~~m 
        \geq \left(\frac{n+1}{c_q}\right)^{k},
    \end{equation}
    where $c_q = \sqrt{2 + \frac{2}{q-1} - o(1)}\ge1$.
\end{theorem}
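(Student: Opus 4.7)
The plan is induction on $\ell$ where $k = 2^\ell$, built around a \emph{halving lemma} that upgrades any algorithm solving $\SIS(n, M, q, s')$ into one solving $\SIS(n, M^2, q, \lfloor s'/2\rfloor)$. Starting from a base case at $k = 2$ on $M_1 = \lceil(n+1)/c_q\rceil^2$ columns with $s_1 = \lfloor q/4\rfloor$, iterating the halving $\ell - 1$ times yields an algorithm for $\SIS$ at norm bound $\lfloor q/(2k)\rfloor$ on $M_\ell = M_1^{2^{\ell-1}} = ((n+1)/c_q)^k$ columns, as required by the theorem. Note that the entire $c_q$-saving must come from the base case; the halvings themselves are pure squarings.

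The halving step takes $M^2$ columns $(h_{i,j})_{i,j\in[M]}$ arranged as an $M\times M$ grid. For each row $i$, it invokes the inductive algorithm on that row's $M$ columns to produce $y^{(i)} \in \F_q^M$ with $\sum_j y^{(i)}_j h_{i,j} = 0$ and $\|y^{(i)}\|_\infty \le s'$. A second combination step, most naturally another invocation of the inductive algorithm on a meta-matrix built from the $y^{(i)}$'s, produces a global $x \in \F_q^{M^2}$ with $Hx = 0$ and $\|x\|_\infty \le \lfloor s'/2\rfloor$. The subtle part is arranging this combination so the norm truly halves rather than merely remaining bounded (the naive tensor product $x_{i,j} = \lambda_i y^{(i)}_j$ does \emph{not} halve); the natural candidate is a pairing/differencing trick that leverages sign-aligned row-solutions, or equivalently a reparameterization that introduces the halving via an auxiliary affine structure.

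The base case and the tight $c_q$ factor form the main technical content. For $k = 2$, one seeks $x \in \{-\lfloor q/4\rfloor, \ldots, \lfloor q/4\rfloor\}^m$ with $Hx = 0$ using only $m \ge ((n+1)/c_q)^2$ columns---a factor of $c_q^2 = 2q/(q-1)$ better than the naive pigeonhole bound $m \gtrsim (n+1)^2$. Since $1/c_q^2 = (q-1)/(2q)$ is exactly the fraction of $\F_q$ consisting of positive (equivalently, negative) representatives, the gain should come from a Fourier / second-moment argument over $\F_q^n$: restricting to ``positive-half'' coefficient vectors (entries in $\{0, 1, \ldots, \lfloor q/2\rfloor\}$) and applying Parseval's identity lower-bounds the density of kernel elements in $\F_q^n$ by roughly a factor of $c_q^{2n}$ above uniform, enough to guarantee a nonzero kernel element with $N \approx (n+1)/c_q$ rows in the grid.

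The main obstacle is turning this density estimate into a \emph{constructive and deterministic} algorithm, since the character-sum argument is a priori existential. I expect the resolution to use either an explicit algebraic construction in the spirit of Ivanyos, Sanselme, and Santha, a conditional-expectations derandomization of the Fourier argument, or a determinant-based selection from the positive-half combinations that certifies cancellation in polynomial time. Once the base case is in hand, the halving iterations should follow routinely, although careful bookkeeping of the $\lfloor\cdot/2\rfloor$ operations across the $\log_2 k$ levels is needed to make sure the exponent of $(n+1)/c_q$ in the final bound lands cleanly at $k$ rather than incurring additional polylogarithmic factors.
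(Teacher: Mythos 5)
Your recursive skeleton matches the paper's: repeatedly apply a halving step that turns a $(\pm h)$-zero-sum algorithm on $M$ vectors into a $(\pm\lfloor h/2\rfloor)$-zero-sum algorithm on $M^2$ vectors (\Cref{cor:iterate_sis_halving}), starting from a sharpened base case at $k=2$. You also correctly diagnose that naive tensoring cannot halve and that a differencing trick is required; this is exactly \Cref{lem:reducible_sis_halving}, whose key move is to take $u = \sum_{\alpha_i > \lfloor h/2\rfloor} v_i - \sum_{\alpha_i < -\lfloor h/2\rfloor} v_i$ from a $(\pm h)$-zero-sum $\sum_i\alpha_i v_i = 0$ (after rescaling the $\alpha_i$ so that $\max_i|\alpha_i| > \lfloor h/2\rfloor$); then for any $|c|\leq h$, rewriting $c\cdot u$ as $c\cdot u - \sum_i\alpha_i v_i$ makes every coefficient land in $\{-\lfloor h/2\rfloor,\ldots,\lfloor h/2\rfloor\}$.

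The base case is where your proposal has a genuine gap, and the route you sketch is not the paper's. You propose a Fourier/Parseval density estimate over positive-half coefficient vectors and then flag as ``the main obstacle'' that this argument is existential, offering candidate derandomizations without committing to one. The paper's \Cref{thm:sis_halving_start_improve} never passes through such a density estimate; the constant $1/c_q^2\approx(q-1)/(2q)$ arises from two independent, directly constructive gains. The factor $1/2$ comes from dimension reduction: one descends through dimensions $n,n-1,\ldots,0$, at level $\ell$ building a reducible vector from roughly $\ell$ input vectors and projecting the rest onto its orthogonal complement, so the total consumption is $\sum_{\ell\leq n}\ell\approx n^2/2$. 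The factor $1-q^{-1}$ comes from \Cref{lem:sis_halving_sparse}: among $\ell+r$ vectors spanning a rank-$\ell$ space one can find a linear dependence using only about $(1-q^{-1})\ell$ of them, because a random nonzero combination of the $r$ extra vectors makes each standard-basis coordinate vanish with probability $q^{-1}$; this is made deterministic by conditional expectations. That the product $(1-q^{-1})/2$ numerically coincides with the density of nonzero positive representatives in $\F_q$, which motivates your Fourier route, is incidental --- none of the paper's reasoning invokes that density, and the derandomization your route requires remains unsupplied on your end.
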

We present the proof of this weaker theorem in \Cref{sec:halving_trick}, as a warm-up to the more elaborate proof needed for \Cref{thm:SIS}.  As we discuss in \Cref{sec:discussion_ivanyos}, this easier proof is based on a method of Imran and Ivanyos~\cite{II24}, but improves upon it by allowing exponentially large~$q$.

\subsection{\texorpdfstring{$\F_q^n$-\SubsetSum}{Fq-Subset-Sum}}

It is natural also to study the $\F_q^n$-\SubsetSum problem, which is like $\SIS$ with $s = 1$, but even harder: rather than seeking a nonzero $x \in \{-1,0,1\}^m$ with $Hx = 0$, we seek $x \in \{0,1\}^m$.  In order to achieve $m = \poly(n)$, it seems necessary to focus on constant $q$. We also exclude $q = 3$, which is already studied in \Cref{sec:f3}.

There are two prior algorithms  we can compare against: a quantum algorithm by Chen, Liu, and Zhandry \cite{chen2022quantum} and a classical one by Imran and Ivanyos \cite{II24}.

\begin{theorem}[Follows from \protect{\cite[Remark 4]{chen2022quantum}}]  \label{thm:clzss}
    For constant prime $q > 3$, the $\F_q^n$-\SubsetSum problem with $m$ uniformly random input vectors can be solved by a bounded-error $\poly(m)$-time quantum algorithm provided $m \geq C n^{q-1}$.
\end{theorem}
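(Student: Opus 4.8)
The plan is to obtain this theorem as an immediate specialization of the Chen--Liu--Zhandry $\CIS$ theorem quoted earlier in this section. The first step is to recognize that $\F_q^n$-\SubsetSum is exactly the $\CIS$ problem with coefficient set $A = \{0,1\} \subseteq \F_q$. Indeed, if $H \in \F_q^{n\times m}$ is the matrix whose columns are the input vectors $h_1,\dots,h_m$, then a nonempty $S \subseteq [m]$ with $\sum_{i\in S} h_i = 0$ is the same object as a nonzero $x \in \{0,1\}^m$ with $Hx = 0$ (namely the indicator vector of $S$), and uniformly random input vectors give a uniformly random~$H$.

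Next I would match parameters. Since $|A| = 2$, the $\CIS$ theorem's size requirement $|A| = q - k + 1$ forces $k = q - 1$; as $q > 3$ is a constant, $k = q-1$ is a constant integer that is at least~$3 \ge 2$, so all hypotheses of the $\CIS$ theorem hold. Invoking it produces a bounded-error $\poly(m,q)$-time quantum algorithm that succeeds on a uniformly random $H$ whenever
\[
    m \;\geq\; C q^4 \log q \cdot n^{k} \;=\; C q^4 \log q \cdot n^{q-1}.
\]
Treating $q$ as a constant, the prefactor $q^4 \log q$ is a constant that can be absorbed into $C$, and $\poly(m,q) = \poly(m)$; this yields exactly the stated bound $m \geq C n^{q-1}$.

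There is no real obstacle here beyond this bookkeeping: the content is entirely contained in the $\CIS$ theorem, and the only point meriting a sanity check is that the ``nonzero $x \in A^m$'' guarantee corresponds precisely to the ``nonempty subset'' requirement of \SubsetSum, which it does because $0 \in A = \{0,1\}$, so the all-zero $x$ (the empty subset) is the unique excluded solution.
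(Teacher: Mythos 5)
Your proposal is correct and matches the paper's intended derivation: the paper states \Cref{thm:clzss} as a direct consequence of \cite[Remark 4]{chen2022quantum} (restated as \Cref{thm:CLZCIS}), and your specialization $A=\{0,1\}$, $k=q-1$, with the $q$-dependent prefactor absorbed into $C$, is exactly the intended parameter bookkeeping.
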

\begin{theorem}[Follows from \protect{\cite[Proposition 4]{II24}}]  \label{thm:iiss}
    For constant prime $q > 3$, the $\F_q^n$-\SubsetSum problem with $m$ \emph{worst-case} input vectors can be solved by a deterministic $\poly(m)$-time classical algorithm provided $m \geq C n^{\frac14 \ol{q} \log_2 \ol{q}}$, where $\ol{q}$ is $q$ rounded up to the next integer power of~$2$.  
    
    Note that the exponent here is $\Omega(q\log q)$, which exceeds $q-1$ for $q \neq 7$.
\end{theorem}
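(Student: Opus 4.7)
}
My plan is to derive the bound by iteratively invoking the halving-trick algorithm that proves \Cref{thm:simple_halving_sis_intro}. The first move is to apply that theorem with the largest power of two bounded by $\lfloor q/2\rfloor$, namely $k_0 = \ol{q}/4$, which yields a nonzero $x \in \F_q^m$ with $\|x\|_\infty \le \lfloor q/(2k_0)\rfloor = 1$ and $Hx = 0$, starting from any $m \ge ((n+1)/c_q)^{\ol{q}/4}$ worst-case columns. This immediately reduces the alphabet of the coefficient vector from $\F_q$ to $\{-1,0,1\}$, and matches one layer of the bookkeeping toward the final exponent.

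The nontrivial extra work is converting such a $\pm 1$-valued solution into a genuine \SubsetSum solution, whose coefficients must lie in $\{0,1\}$ rather than $\{-1,0,1\}$. I would proceed by recursion: split $x = x^+ - x^-$ with $x^+, x^- \in \{0,1\}^m$ of disjoint support, so that $Hx^+ = Hx^-$. If $x^- = 0$ or $Hx^+ = 0$ we are done; otherwise the identity $Hx^+ = Hx^-$ collapses $\mathrm{supp}(x^+) \cup \mathrm{supp}(x^-)$ into a smaller instance on which to recurse. Each recursion layer reinvokes the halving trick at a cost of an $n^{\ol{q}/4}$ factor and cuts the remaining sign ambiguity in half, so after $\log_2 \ol{q}$ rounds one obtains the claimed bound $m \ge Cn^{\frac14 \ol{q}\log_2 \ol{q}}$.

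The main obstacle I expect is making this recursion rigorous. The collapsing step must not inflate the effective dimension $n$, so that each recursive halving-trick invocation still operates with the same parameter $k_0 = \ol{q}/4$; and the termination argument must show that the sign ambiguity genuinely \emph{halves} each round, rather than merely decreasing by one. An appealing cleaner alternative I would also try is to bypass the explicit recursion and absorb the sign reduction directly into a single Imran--Ivanyos-style inner loop that operates over the alphabet $A = \{0,1\}$ from the start, halving the ``combinatorial complexity'' of the forbidden set $\F_q \setminus A$ per round; such a formulation would produce the extra $\log_2 \ol{q}$ factor in the exponent by design rather than by accident, at the (expected) price of the worse $q \neq 7$ bound noted after the theorem.
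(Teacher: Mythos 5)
The paper does not prove \Cref{thm:iiss}; it cites it as a consequence of~\cite[Remark 4]{II24} and only sketches the method in \Cref{sec:discussion} (``Imran and Ivanyos proved a reduction from $(\pm1)$-zero-sum algorithms to subset-zero-sum algorithms that works for worst-case instances but suffers from a blowup of $O(\log q)$ on the exponent''). So your task is really to reconstruct that reduction, and your bookkeeping is on target --- one invocation of \Cref{thm:simple_halving_sis_intro} with $k_0=\ol{q}/4$ per round, $\log_2\ol{q}$ rounds, exponent $\frac14\ol{q}\log_2\ol{q}$ --- but the mechanism you propose for each round is not right.

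The problematic step is ``the identity $Hx^+ = Hx^-$ collapses $\mathrm{supp}(x^+)\cup\mathrm{supp}(x^-)$ into a smaller instance on which to recurse.'' Splitting a $(\pm1)$-zero-sum as $x = x^+ - x^-$ with disjoint supports gives a \emph{collision}, i.e., a vector $u := Hx^+ = Hx^-$, but there is no smaller $\SubsetSum$ instance hiding inside a single collision, and ``sign ambiguity halves'' is not a quantity that the split tracks. If the collapse is meant as a rank/dimension reduction (the collision is a linear relation on the columns), that would cap the whole recursion at $O(n)$ rounds and land you in a $\poly(n)$ bound, not $n^{\Theta(q\log q)}$. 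If it is meant as ``replace the colliding columns by the single vector $u$ and recurse,'' then you are no longer solving the same problem, since a zero $\{0,1\}$-combination of the $u$'s does not expand to a $\{0,1\}$-combination of the original columns (the two routes to $u$ --- via $x^+$ and via $x^-$ --- would have to be chosen consistently). Either way, nothing ``halves.''

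What actually does the work is a \emph{doubling} recursion at a meta-level, not a collapse. From one collision you learn that $0\cdot u$, $1\cdot u$, and $2\cdot u$ are all realizable as genuine subset-sums of the originals (take $\emptyset$, $\mathrm{supp}(x^+)$, $\mathrm{supp}(x^+)\cup\mathrm{supp}(x^-)$). Collect $M = C n^{\ol{q}/4}$ such collisions $u_1,\dots,u_M$ from \emph{disjoint batches}, then run the $(\pm1)$-zero-sum algorithm once more on $u_1,\dots,u_M$ to get $w := \sum_{j\in J^+} u_j = \sum_{j\in J^-} u_j$; because each $u_j$ already admits multiples $\{0,1,2\}u_j$ as subset-sums and the batches are disjoint, every $c\in\{0,\dots,4\}$ gives $c\cdot w = a\sum_{J^+}u_j + b\sum_{J^-}u_j$ with $a+b=c$, $a,b\in\{0,1,2\}$, a genuine subset-sum. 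Iterating, after $\ell$ rounds one has a vector $w^{(\ell)}$ with $\{0,\dots,2^\ell\}w^{(\ell)}$ all realizable as subset-sums, at a cost of $M^\ell$ original columns. Once $2^\ell\ge q$, i.e., $\ell = \log_2\ol{q}$, the multiple $q\cdot w^{(\ell)}$ is simultaneously a nontrivial subset-sum (since $q>0$ as an integer) and equal to $0$ in $\F_q^n$, finishing the argument; and if any intermediate $w^{(j)}=0$, we are done early. This is the step your ``collapse'' is trying to be; without it the recursion neither terminates with the right object nor matches the exponent you are counting toward.
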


We prove the following result for $\F_q^n$-\SubsetSum, which strictly improves \Cref{thm:clzss}.
Note that our result is incomparable with \Cref{thm:iiss}, as we have a noticeably better exponent, but works in the average case.

\begin{theorem}[Proved in \Cref{sec:general_cis_subset-sum}]\label{thm:FpSS}
    For constant prime $q > 3$, the $\F_q^n$-\SubsetSum problem with $m$ uniformly random input vectors can be solved by a deterministic $\poly(m)$-time classical algorithm provided $m  \geq Cn^{\left\lceil\! \lceil q/2 \rceil\!\right\rceil}$, where $\left\lceil\!\lceil q/2 \rceil\!\right\rceil$ denotes the smallest even integer exceeding~$q/2$.  
    
    Note that the exponent is $q/2+o(1)$, which is always at most $q-1$ and strictly so for $q\neq 5$.
\end{theorem}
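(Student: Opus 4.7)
The plan is to build on \Cref{thm:SIS} via a two-level argument: first, use it to obtain $\{-1,0,1\}$-valued solutions on sub-blocks of columns; second, assemble these into a $\{0,1\}$-solution of the full $\F_q^n$-\SubsetSum instance. Since \Cref{thm:SIS} cannot directly output $\{0,1\}$-solutions (its output lies in a symmetric interval), some sign-tracking is needed to land in $\{0,1\}^m$, and the factor-two gap between the exponent $\lceil q/4\rceil$ appearing in \Cref{thm:SIS} with $s=1$ and the target exponent $\lceil\!\lceil q/2\rceil\!\rceil = 2\lceil q/4\rceil$ suggests exactly such a two-level structure.

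For the blockwise reduction, I partition the $m$ columns of $H$ into $N$ disjoint blocks $B_1,\dots,B_N$ of size $m_0 = Cn^{\lceil q/4\rceil}$ each, so that $N = m/m_0 \ge Cn^{\lceil q/4\rceil}$ (using $m \ge Cn^{\lceil\!\lceil q/2\rceil\!\rceil} = Cn^{2\lceil q/4\rceil}$). To each block I apply \Cref{thm:SIS} with $k = \lceil q/4\rceil$, which yields $s = \lfloor q/(2k)\rfloor = 1$ (since $2k\le q < 4k$ for prime $q\ge 5$) and requires exactly $m_0$ columns. This produces, for each $j$, a nonzero $y^{(j)} \in \{-1,0,1\}^{B_j}$ with $Hy^{(j)} = 0$. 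I then split $y^{(j)} = y_+^{(j)} - y_-^{(j)}$ with $y_\pm^{(j)} \in \{0,1\}^m$ of disjoint support inside $B_j$, and set $v_j := Hy_+^{(j)} = Hy_-^{(j)} \in \F_q^n$.

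For the assembly step, the key observation is that any nonempty $T \subseteq [N]$ with $\sum_{j\in T} v_j = 0$ yields a valid $\{0,1\}$-solution $x := \sum_{j\in T} y_+^{(j)}$ of $Hx = 0$: block disjointness gives $x \in \{0,1\}^m$, and by construction $Hx = \sum_{j\in T} v_j = 0$, with $x$ nonzero whenever $T$ is. Moreover, for each selected $j$ I am free to swap $y_+^{(j)}$ for $y_-^{(j)}$ (both produce the same $v_j$) or to use the complement $\mathbf{1}_{B_j} - y_\pm^{(j)}$ (which contributes $u_j - v_j$, where $u_j := H\mathbf{1}_{B_j}$). This per-block flexibility is the essential resource for the final step.

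The main obstacle is closing the outer recursion. A naive recursive application of the theorem to solve $\F_q^n$-\SubsetSum on the $v_j$'s would require $N \ge Cn^{\lceil\!\lceil q/2\rceil\!\rceil}$, whereas I have only $N \approx n^{\lceil q/4\rceil}$, half the needed exponent. I plan to recover the missing factor by passing to a derived matrix in $\F_q^{2n}$ whose columns encode the multiple per-block options (the pair $(v_j, 0)$ from $y_+^{(j)}$ and the pair $(v_j - u_j, u_j)$ from $\mathbf{1}_{B_j} - y_+^{(j)}$, etc.) and then invoking \Cref{thm:SIS} once more with $s = 1$ on this $2n$-row derived system; the resulting $\{-1,0,1\}$-combination will translate, via the per-block choice of option, into a $\{0,1\}$-vector $x$ with $Hx = 0$. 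The hard part is showing both that (i) the derived $2n\times\Theta(N)$ matrix retains enough of the average-case randomness of $H$ for \Cref{thm:SIS} to succeed at the reduced column threshold, and (ii) the outer $\{-1,0,1\}$-combination can be assembled without sign conflicts within or across blocks. Verifying these two points is where I expect the technical core of the proof to reside.
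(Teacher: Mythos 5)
Your skeleton is right — two levels, with the $(\pm 1)$-zero-sum primitive at weight~$\lfloor q/(2k)\rfloor=1$ as the inner workhorse, and the exponent bookkeeping $\lceil\!\lceil q/2\rceil\!\rceil = 2\lceil q/4\rceil$ is correct (this matches the paper's $k=\lfloor(q+3)/4\rfloor$, $2k$). But the assembly step, which you yourself flag as the ``technical core,'' is genuinely unresolved, and the sketch you give of how to close it does not match the paper and is not convincing on its own terms.

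The paper (\Cref{thm:general_cis_subset-sum}, following the template of \Cref{thm:subset-sum_avg}) takes a different route through $(0,1,2)$-zero-sums as the intermediate object, and this is the idea your proposal is missing. Concretely: (a) \Cref{lem:subset-sum_012_from_pm1} converts the $(\pm1)$-zero-sum algorithm into a $(0,1,2)$-zero-sum producer on \emph{random} input, by adding the shift vector $v^*=-\sum_i v_i$ and then finding a $(\pm1)$-sum of the remaining vectors equal to $v^*$; the average-case assumption enters precisely so that the batch coefficients $\beta_j$ (the ``syndromes'' of the $(\pm1)$-zero-sums projected onto $\spn\{v^*\}$) are uniform and independent, which makes the dynamic-programming search for a subset $T$ with $\sum_{j\in T}\beta_j=1$ succeed with high probability. (b) \Cref{lem:subset-sum_012_to_01} then combines $\bar m$ disjoint $(0,1,2)$-zero-sums into a single $(0,1)$-zero-sum, by splitting each as $u_i + 2u_i'=0$, running the $(\pm1)$-zero-sum algorithm once more on the $u_i'$'s, and observing that the resulting signed combination telescopes into a $\{0,1\}$-combination of the originals. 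Step (b) is the mechanism that lands you in $\{0,1\}^m$ at the cost of only one more application of the $(\pm1)$ algorithm; step (a) is the mechanism that makes each $(0,1,2)$-zero-sum cost only $\approx n^{\lceil q/4\rceil}$ columns. Your plan has neither, and the proposed fix (a derived $2n$-row system encoding per-block options) is vague: you have not shown the derived matrix is close enough to uniform for \Cref{thm:SIS} to apply at the reduced column budget, nor how a $\{-1,0,1\}$-combination of the derived columns would lift to a $\{0,1\}$-vector against the original $H$ without sign conflicts.

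One local error worth flagging: swapping $y_+^{(j)}$ for $y_-^{(j)}$ provides \emph{no} flexibility, since by construction $Hy_+^{(j)}=Hy_-^{(j)}=v_j$ and only the value $v_j$ matters to the outer problem. The only genuine per-block degree of freedom you identify is the complement $\mathbf{1}_{B_j}-y_\pm^{(j)}$, and that alone is not enough; it is loosely analogous to the shift by $-\sum_i v_i$ in \Cref{lem:subset-sum_012_from_pm1}, but without the $(0,1,2)\to(0,1)$ conversion of \Cref{lem:subset-sum_012_to_01} it does not close the argument within the stated column budget.
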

Once again, the proof of \Cref{thm:FpSS} is somewhat involved, and we present a less complicated version as \Cref{thm:subset-sum_avg}, for $m \geq C n^{\overline{q-1}}$, where $\overline{q-1}$ is $q-1$ rounded up to the next power of~$2$. 

\subsection{Constrained Integer Solution (\texorpdfstring{$\CIS$}{CIS})} \label{sec:cis}

While the main result presented by Chen, Liu, and Zhandry is about $\SIS$, their algorithm works for a more general problem, as pointed out in \cite[Remark 4]{chen2022quantum}.
In this version, instead of requiring that the solution $x$ has small infinity norm, we require that each entry of $x$ belongs to a specified allowed set $A \subseteq \F_q$.\footnote{We remark that the results in \cite{chen2022quantum} work for the more general setting where each entry of $x$ gets its own restricted subset. Our algorithms also work in this setting; see \Cref{sec:discussion_our} for a discussion.}
We call this problem ``Constrained Integer Solution ($\CIS$)'' or sometimes $A$-$\CIS$ to emphasize the set $A$.

Note that setting $A = \{-s,\ldots,s\}$ recovers the $\SIS$ problem; and setting $A=\{0,1\}$ recovers the $\SubsetSum$ problem. We also describe how $\CIS$ unifies Yamakawa-Zhandry \cite{yamakawa2024verifiable} and OPI \cite{jordan2024optimization} in \Cref{sec:discussion_regev}.

\begin{definition}
    In the $\CIS(n,m,q,A)$ problem for a prime $q\ge3$, we are given a matrix $H \in \F_q^{n\times m}$ and a set $A \subseteq \F_q$ of size $2\le|A|\le q-1$.
    The goal is to find a nonzero vector $x\in A^m$ satisfying $Hx = 0$.
\end{definition}

Like \cite{chen2022quantum}, we will only study this problem in the average case, where $H$ is a uniformly random matrix in $\F_q^{n \times m}$, as the worst-case instances may not guarantee solutions.
We also remark that the problem seems quite hard unless $|A|$ is close to~$q$; thus we parameterize $|A| = q- k+1$. 

It turns out that the same quantum algorithm in \cite{chen2022quantum} that solves $\SIS$ can be used for $\CIS$ with similar bounds.

\begin{theorem}[\protect{\cite[Remark 4]{chen2022quantum}}]\label{thm:CLZCIS}
    Let $k \geq 2$ be a constant and $A \subseteq \F_q$ be of size $|A| = q - k + 1$.
    There is a bounded-error $\poly(m,q)$-time quantum algorithm for average-case $A$-$\CIS$ when 
    \begin{equation}
        m \geq C q^4 \log q \cdot n^k.
    \end{equation}
\end{theorem}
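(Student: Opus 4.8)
The plan is to follow the blueprint of Chen, Liu, and Zhandry~\cite{chen2022quantum}: combine a Regev-style quantum reduction~\cite{regev2009lattices} from average-case $\CIS$ to an average-case decoding problem with a classical Arora--Ge-style linearization~\cite{arora2011new} that solves the decoding problem once $m=\poly(n)$ is large enough. Throughout, write $B=\F_q\setminus A$ for the forbidden set, so $|B|=k-1$.

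For the quantum reduction, given uniformly random $H\in\F_q^{n\times m}$ we seek a nonzero $x\in A^m$ with $Hx=0$, i.e.\ a point of $\Lambda_q^\perp(H)$ whose reduction mod $q$ lies in $A^m$. I would start from Regev's iterative procedure, which quantumly converts a bounded-distance-decoding oracle for the dual lattice $\tfrac{1}{q}\Lambda_q(H)$ into samples from the primal lattice, and replace the discrete-Gaussian weight -- whose Fourier dual is the error distribution fed to the decoding oracle -- by a product weight $\prod_i f(x_i)$ designed so that the measured vector is forced into $A^m$ (the ``filtering'' step). After preparing the corresponding superposition, applying $x\mapsto Hx$, measuring the image register, and performing a quantum Fourier transform over $\mathbb{Z}_q$, one reads off a candidate solution \emph{provided} one can solve the following task: given uniformly random $G\in\F_q^{m\times n}$ (essentially $H^\top$) and $b=Gs+e\bmod q$, where each $e_i$ is promised to lie in a fixed set of size at most $k$ determined by $B$ (this is precisely why the exponent becomes $k$ and not $|A|$ -- morally this set is the Fourier-dual of the size-$(k-1)$ set $B$), recover $s$, and hence $e$. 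As in~\cite{regev2009lattices}, the oracle is applied to a smoothed instance; choosing the Gaussian width so that the $q$-ary smoothing condition holds while the decoding radius stays inside the promised error set requires $O(\log q)$ rounds of the reduction, each with polynomial-in-$q$ overhead, and amplifying the per-round success probability costs further polynomial factors. The stated $q^4\log q$ is merely a convenient upper bound on the total.

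For the decoding subroutine, run classically inside the quantum algorithm, note that for each row $\prod_{c}\bigl(b_i-\langle g_i,s\rangle-c\bigr)=0$, where $c$ ranges over the at most $k$ admissible error values -- a polynomial identity of degree at most $k$ in the unknowns $s_1,\dots,s_n$ over $\F_q$. Linearize by introducing one new variable per monomial $\prod_\ell s_\ell^{d_\ell}$ with $\sum_\ell d_\ell\le k$; since $k$ is constant there are $N=\binom{n+k}{k}=O(n^k)$ of them, and the $m$ rows become $m$ linear equations over $\F_q$ in these $N$ variables. Solve by Gaussian elimination in time $\poly(m,N)=\poly(m,\log q)$, read off $s$, then $e$, then the $\CIS$ solution $x$. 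One must check that the linearized system pins down the genuine (Veronese) solution rather than a spurious one: since $G$ is uniformly random, a Schwartz--Zippel / union-bound argument shows $m\ge CN$ random equations isolate it with overwhelming probability, which is where the constant $C$ is absorbed. Finally verify that the returned $x$ is nonzero, lies in $A^m$, and satisfies $Hx=0$; since the filter succeeds with non-negligible probability and a random $H$ of this size admits a solution with overwhelming probability, $O(1)$ repetitions yield bounded error, and every step is $\poly(m,q)$-time on a quantum computer.

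The main obstacle is the quantum reduction: implementing the filter so the post-measurement state retains non-negligible amplitude, and then verifying that the induced decoding instance genuinely confines each error coordinate to a size-$O(k)$ set -- rather than merely to $A$ -- so that Arora--Ge stays polynomial. Simultaneously satisfying the Gaussian-width and $q$-ary smoothing inequalities, and tracking how the error set evolves across the $O(\log q)$ iterations, is the delicate part; the classical linearization and the verification step are comparatively routine.
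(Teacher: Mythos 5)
The paper does not prove \Cref{thm:CLZCIS}; it imports it verbatim as a citation of \cite[Remark 4]{chen2022quantum}, so there is no internal proof to compare your sketch against. That said, your sketch is a reasonable high-level account of the cited source's strategy as the paper itself summarizes it in \Cref{sec:intro}: a Regev-style quantum reduction turns average-case $A$-$\CIS$ into a structured decoding problem, and an Arora--Ge-style linearization handles the decoding once $m = \Omega(n^k)$, with the exponent $k$ traced to the degree of the linearization polynomial and hence to the size of the error set induced by $|\F_q\setminus A| = k-1$. The paper's own footnote to \Cref{thm:CLZ3ss} corroborates this reading: it notes that $C$ is not made explicit in CLZ and that sharper constants would rest on heuristic generalizations of Arora--Ge \cite{arora2011new}, which is consistent with your ``absorbed into $C$'' treatment of the $q^4\log q$ overhead. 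You rightly flag the filter-amplitude analysis and the tracking of the error set through the Regev iterations as the delicate steps; your sketch names them but does not carry them out, which is appropriate for a result being invoked rather than reproven here, but means the proposal should be read as a faithful outline of CLZ's plan rather than a self-contained proof.
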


We give classical algorithms for the same problem.  First, we have a simple argument to handle the case when $q$ is much larger than $k$. This is in fact the motivating setting in \cite{chen2022quantum}, as they typically describe $q$ being polynomial in~$n$ and $k$ being constant.

\begin{theorem}[Proved in \Cref{sec:simple_cis}]\label{thm:simple_cis_lev_intro}
    Let $k \geq 2$ be a constant and $A \subseteq \F_q$ be of size $|A| = q - k + 1$.
    Assume $q>4^{k-1}$.
    There is a deterministic $\poly(m,q)$-time classical algorithm for average-case $A$-$\CIS$ when 
    \begin{equation}
        m\geq C \log q \cdot n^2.
    \end{equation}
\end{theorem}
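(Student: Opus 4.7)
My plan is to translate the problem into the null space $V := \ker H \subseteq \F_q^m$ and exploit the large density $|A|/q = 1 - (k-1)/q$ combined with the dimensional slack $\dim V \geq m - n$. Since $H$ is uniformly random with $m \geq Cn^2\log q$, with overwhelming probability $H$ has full row rank and $\dim V = d := m - n$. I fix an $\F_q$-basis of $V$, packaged as a matrix $N \in \F_q^{m \times d}$ with rows $r_1,\dots,r_m \in \F_q^d$; the task becomes: find a nonzero $c \in \F_q^d$ such that $\langle r_j, c\rangle \in A$ for every $j \in [m]$.

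For a uniformly random $c \in \F_q^d$, each inner product $\langle r_j, c\rangle$ is uniform over $\F_q$ (with overwhelming probability $r_j \neq 0$ for all $j$, by the randomness of $H$), so each coordinate fails to lie in $A$ with probability exactly $(k-1)/q$. The hypothesis $q > 4^{k-1}$ ensures this failure probability is strictly less than $1/4$, which gives the quantitative margin needed downstream. Define the potential $\Phi(c) = \#\{j \in [m] : \langle r_j, c\rangle \notin A\}$; its expectation under a uniform $c$ is $m(k-1)/q$.

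The algorithm I propose has two stages. In stage one, I build $c$ coordinate by coordinate using the method of conditional expectations on $\Phi$: at step $t$, having fixed $c_1,\dots,c_{t-1}$, I compute for each $v \in \F_q$ the conditional expectation $\mathbb{E}[\Phi \mid c_1,\dots,c_{t-1}, c_t = v]$ (a tractable sum of individual probabilities over $j \in [m]$) and assign $c_t$ the minimizing value. This takes $\poly(m,q)$ time and produces $c$ with $\Phi(c) \leq m(k-1)/q$. In stage two, I iteratively repair the remaining bad coordinates: for each bad $j$, find an auxiliary $y \in V$ that is nonzero at coordinate $j$ but vanishes on every currently-good coordinate, then replace $c$ by $c + \alpha\, c_y$ (where $c_y$ is the $V$-preimage of $y$) for some $\alpha \in \F_q$ with $\langle r_j, c\rangle + \alpha\, y_j \in A$; such an $\alpha$ exists because $|A| > q/2$, which is implied by $q > 4^{k-1}$.

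The main obstacle is the correctness of stage two: at each step a suitable $y \in V$ supported off the currently-good coordinates must exist. This reduces to a rank condition on certain submatrices of $N$, and the randomness of $H$ together with the slack $m \geq Cn^2\log q$ is what I would use to guarantee---via a union bound over the possibly-relevant subsets of coordinates---that each restricted null space has the required dimension. The quadratic factor $n^2$ (rather than $n$) arises from this union bound and from the possibility of cascading repairs across up to $n$ rounds before the dimensional budget is exhausted, while the $\log q$ factor controls the failure probability of the random-matrix rank estimates down to $q^{-\Theta(n)}$.
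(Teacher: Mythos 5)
Your proposal takes a genuinely different route from the paper --- you work in the dual picture ($\ker H$) and run a greedy conditional-expectation phase followed by a coordinate-repair phase --- but stage two has a dimension-counting gap that I do not think you can close.

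The repair step asks, at each round, for a vector $y \in V := \ker H$ that vanishes on every currently-good coordinate and is nonzero at the bad coordinate $j$. Let $G$ denote the good set, $B$ the number of bad coordinates, so $|G| = m - B$. The subspace $V_G := \{y \in V : y|_G = 0\}$ is the kernel of the projection $V \to \F_q^{G}$, so $\dim V_G \geq \dim V - |G| = (m - n) - (m - B) = B - n$, and for a random $H$ this inequality is typically tight. Consequently, once $B \leq n$ you generically have $V_G = \{0\}$ and no repair vector exists. But you need to drive $B$ all the way to $0$, so the procedure necessarily stalls with about $n$ coordinates still violating. Each successful repair grows $G$ by one and shrinks $\dim V_G$ by (at least) one, so the ``dimensional budget'' you allude to is exactly $B - n$ repairs, never enough to clear the remaining $n$. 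Increasing $m$ does not help: $\dim V_G$ depends only on $B - n$, not on $m$. Also note the sign issue in stage one: if $q$ is large compared to $n$, the expected bad count $m(k-1)/q$ can already be far below $n$, which means $V_G$ is trivial from the start.

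For contrast, the paper's proof of this theorem is algebraic rather than probabilistic: since $q > 4^{k-1}$ gives $|A| \geq q - \log_4(q+2)$, Lev's theorem (\Cref{fct:lev_long_AP}) guarantees that $A$ contains an arithmetic progression of length $(q+1)/2 \geq 1 + 2\lfloor q/4 \rfloor$. After an affine change of variables (handled for average-case inputs by \Cref{lem:cis_worst-avg}, which is where the $\log q$ factor enters), the problem reduces to finding a $(\pm\lfloor q/4\rfloor)$-zero-sum, which the halving trick (\Cref{thm:sis_halving_start_improve}, essentially one application of \Cref{lem:reducible_sis_halving} plus dimension reduction) solves deterministically on worst-case inputs with $m \geq O(n^2)$. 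The AP structure is what lets the paper avoid your dimension bottleneck: it converts a single linear dependence among $n+O(1)$ vectors into a ``reducible'' vector that can absorb any coefficient in the target window, rather than trying to repair coordinates one at a time.
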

\noindent Note here that the exponent on $n$ is fixed to~$2$; it does not grow with~$k$ as in \Cref{thm:CLZCIS}.
\medskip

On the other hand, for general $q$, we have another, more involved, classical algorithm.
Compared with \Cref{thm:CLZCIS}, our requirement on $m$ always has much better $q$-dependence and has better $n$-dependence in almost all settings.

\begin{theorem}[Proved in \Cref{sec:general_cis_cis}]\label{thm:CIS_intro}
For any constant $k\ge3$ and prime $q>5$, given a set $A \subseteq \F_q$ with $|A| = q-k+1$, there is a deterministic $\poly(m,q)$-time classical algorithm for average-case $A$-$\CIS$ with 
\begin{equation}
    m \geq C\log q\cdot n^{k-1}.
\end{equation}
The above bound also holds when $k=3$ and $q=5$.

In the remaining cases of $k = 2$ or $k=4,q=5$, the algorithm requires $m \geq C\log q\cdot n^k$.
\end{theorem}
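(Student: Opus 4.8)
The plan is to work with the complementary \emph{forbidden set} $B := \F_q \setminus A$, which has size $|B| = k-1$, so that the task is to find a nonzero $x \in \ker H$ none of whose coordinates lies in $B$. We proceed by induction on $|B|$, reducing an instance with forbidden set of size $k-1$ to an average-case instance of the same type but with a forbidden set of size $k-2$, over a freshly randomized matrix with essentially the same number of rows but fewer columns. The engine for both the base case and the inductive step is the linearization-and-consistency machinery behind \Cref{thm:simple_cis_lev_intro} and the halving trick of \Cref{sec:halving_trick}: given a random $H$, one writes down the low-degree polynomial system expressing ``$Hx=0$ and each $x_i$ avoids a prescribed small set,'' solves a linear relaxation obtained by treating monomials as fresh variables, and then argues that, because $m$ is large, the relaxation's solution space is rich enough to contain (or be massaged into) a genuine consistent solution lying in $A^m$.

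The base case is a forbidden set of size $\le 2$, i.e.\ $k \le 3$. Here the associated polynomial system has degree $2$, the linearization introduces only $O(m^2)$ monomial variables against $O(nm)$ linear constraints, and a direct count shows that for a generic $H$ with $m \ge C\log q\cdot n^2$ the relaxation has a nonzero consistent solution in $A^m$. This settles $k=2$ (where the single forbidden value still forces an inherently degree-$2$ argument, hence the $n^k = n^2$ bound) as well as $k=3$ for every admissible $q \ge 5$, giving the claimed $n^{k-1} = n^2$ including the case $q=5$.

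The inductive (``peeling'') step handles a forbidden set $B$ of size $k - 1 \ge 3$. Fix any $b^\ast \in B$ and set $A' := A \cup \{b^\ast\}$, whose forbidden set $B' = B \setminus\{b^\ast\}$ has size $k-2$. The idea is to transform the given average-case $A$-$\CIS$ instance on $H \in \F_q^{n\times m}$ into an average-case $A'$-$\CIS$ instance on some $H' \in \F_q^{n'\times m'}$ with $n' \le n$ and $m' \ge m/(Cn)$, so that any nonzero $x' \in A'^{m'}$ with $H'x' = 0$ lifts efficiently to a nonzero $x \in A^m$ with $Hx = 0$ whose coordinates now also avoid $b^\ast$. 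The transformation is built from a bounded-degree substitution keyed to the single value $b^\ast$ (one that ``moves $b^\ast$ away'' while passing the remaining $B'$-constraints through linearly), together with a rank argument accounting for the loss of one factor of $n$ in the column count; crucially, only a constant-degree polynomial is ever composed, so the sole requirement on the field is the mild $q \ge 7$, not the exponential $q > 4^{k-1}$ of \Cref{thm:simple_cis_lev_intro}. Iterating the peel $k-3$ times from size $k-1$ down to size $2$ and then invoking the base case costs a factor of $n$ per step, yielding $m \ge C\log q\cdot n^{2+(k-3)} = C\log q\cdot n^{k-1}$. When $q = 5$ the constant-degree substitution degrades --- the relevant polynomial degree reaches $q$ and the nonvanishing/interpolation step fails --- so a peel then costs $n^2$ rather than $n$; since $q=5$ forces $k\le 4$, at most one such peel is ever used (from size $2$ up to size $3$ for $k=4$), which is exactly why $k=4,q=5$ lands at $n^k = n^4$ while $k=3,q=5$ still enjoys $n^{k-1}=n^2$.

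Finally, since $H$ is only random and the algorithm is to be deterministic, one must verify that a $1-o(1)$ fraction of $H \in \F_q^{n\times m}$ are ``generic'' for the entire recursion: every submatrix and projected matrix produced along the way is as full-rank as expected, the coordinate functionals stay nonzero and pairwise non-proportional, and each polynomial required to be nonzero indeed is. Each such event fails with probability $q^{-\Omega(1)}$, and a union bound over the $\poly(m,q)$ of them --- including the re-randomizations at each of the $O(k)$ recursion levels --- is what forces the extra $\log q$ factor in the requirement on $m$. I expect the main obstacle to be the peeling step: showing that a single forbidden value can be eliminated while (i) costing only one factor of $n$, (ii) keeping $H'$ random enough to recurse, and (iii) permitting solutions to be lifted back, and in particular pinning down the precise field-size threshold that produces the exceptional behaviour at $q=5$. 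A secondary obstacle is establishing the base case with the sharper $n^2\log q$ bound, which is where our improvement over both \Cref{thm:simple_cis_lev_intro} and \cite{II24} originates.
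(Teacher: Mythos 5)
Your proposal is a genuinely different route from the paper, and unfortunately it has fundamental gaps. The paper does not use linearization/Arora–Ge at any point; its own footnote on \Cref{thm:CLZ3ss} explicitly flags that Arora–Ge-style approaches for these problems lack rigorous performance guarantees, which is precisely why the paper built the ``reducible vector'' machinery (\Cref{sec:general_cis}) instead. Concretely, the paper's proof of \Cref{thm:CIS_intro} goes through \Cref{thm:general_cis_cis}: it normalizes the forbidden set $\bar A$ into antipodal pairs $\{\pm a_1,\dots,\pm a_k\}$ via the additive-combinatorics facts (\Cref{fct:simple_3AP_and_antipodal_hole}, \Cref{fct:middle_3AP}), then applies a \emph{single} shot of \Cref{lem:general_cis_reducible_to_zero-sum} with singleton blocks $B_i=\{a_i\}$ so that $B_i-B_i=\{0\}$ and $H'=A$; translation/dilation is handled by \Cref{lem:cis_worst-avg}. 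There is no induction on $|B|$ and no ``peeling'' in the paper.

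The gaps in your version: (i) Your base case asserts that with $m\ge C\log q\cdot n^2$ the degree-$2$ linearization ``has a nonzero consistent solution in $A^m$,'' but you offer no mechanism for the consistency step — with $O(m^2)$ monomial variables against only $O(nm)$ constraints, the relaxation is hugely underdetermined and almost all its solutions are spurious; identifying a consistent one is exactly the unresolved difficulty the paper's footnote warns about. (ii) Your inductive ``peeling'' step is never instantiated: there is no concrete ``bounded-degree substitution keyed to $b^\ast$'' that turns an $A$-$\CIS$ instance into a random $A'$-$\CIS$ instance at a cost of only one factor of $n$, and it is far from clear that such an object exists. (iii) Your rationale for the $q=5$ exception (``the relevant polynomial degree reaches $q$'') is invented; in the paper the $q=5$ anomalies come from the arithmetic-combinatorics facts failing or being wasteful on very small fields — e.g.\ \Cref{fct:simple_3AP_and_antipodal_hole}~(\ref{itm:fct:simple_3AP_and_antipodal_hole_2}) needs $c<(q+1)/2$, and \Cref{fct:middle_3AP} needs $q\ge 11$ — which forces the fallback through \Cref{thm:general_cis_size-two} precisely when $k=4,q=5$. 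As written, the proposal does not constitute a proof.
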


Our focus is to provide \emph{some} bound that matches or improves the known quantum algorithms; and we do not make the effort to exhaust the \emph{best possible} saving in all parameter regimes.

\subsection{Results summary}

In \Cref{tab:result_summary}, we compare efficient algorithms for $\SIS$-type problems. 
For simplicity, we use $\eps$ to hide $o(1)$ factors that go to zero as $n$ or $q$ gets larger. Footnote-sized conditions inside parentheses are additional assumptions of the problem or algorithm.

\begin{table}[h!]
\caption{Comparison of efficient algorithms for $\SIS$-type problems.}\label{tab:result_summary}
\centering
\begin{tabular}{l@{\hspace{20pt}}l@{\hspace{20pt}}l@{\hspace{20pt}}l}
\hline\\[-1em]
\bf Problem\ \textbackslash\ Work
& 
\begin{tabular}[c]{@{}l@{}}\cite{chen2022quantum}\\
\bf (quantum)\end{tabular} 
& 
\begin{tabular}[c]{@{}l@{}}\cite{ISS12,II24}\\
\bf (classical)\end{tabular}
& 
\begin{tabular}[c]{@{}l@{}}
\bf Present work\\
\bf (classical)\end{tabular}
\\[-1em]\\\hline\\[-1em]
\begin{tabular}[c]{@{}l@{}}$A$-$\CIS$\\$|A|=q-k+1$\\\\
\footnotesize (constant $k$)\\\footnotesize (average-case)\end{tabular}
& 
$m\ge Cq^4\log q\cdot n^k$
& 

& 
\begin{tabular}[c]{@{}l@{}}
$m\ge C\log q\cdot n^2$\\[-1em]\\\\[-1em]
\footnotesize ($q>4^{k-1}$)
\\\\[-1em]\hline\\[-1em]
$m\ge C\log q\cdot n^k$\\[-1em]\\\\[-1em]
\footnotesize (general $q$)\end{tabular} 
\\[-1em]\\\hline\\[-1em]
\FSubsetSum
& 
\begin{tabular}[c]{@{}l@{}}$m\ge Cn^2$
\\[-1em]\\\\[-1em]
\footnotesize (average-case)\end{tabular}
& 
\begin{tabular}[c]{@{}l@{}}$m\ge(1/2+\eps)n^2$
\\[-1em]\\\\[-1em]
\footnotesize (worst-case)\end{tabular}
& 
\begin{tabular}[c]{@{}l@{}}$m\ge(1/3+\eps)n^2$
\\[-1em]\\\\[-1em]
\footnotesize (worst-case)\end{tabular}     
\\[-1em]\\\hline\\[-1em]
\begin{tabular}[c]{@{}l@{}}$\F_q^n$-\SubsetSum\\\\
\footnotesize (constant $q$)\end{tabular}
& \begin{tabular}[c]{@{}l@{}}$m\ge Cn^{q-1}$
\\[-1em]\\\\[-1em]
\footnotesize (average-case)\end{tabular}
& 
\begin{tabular}[c]{@{}l@{}}$m\ge n^{C q\log q}$
\\[-1em]\\\\[-1em]
\footnotesize (worst-case)\end{tabular}
& 
\begin{tabular}[c]{@{}l@{}}$m\ge Cn^{q/2+\eps}$
\\[-1em]\\\\[-1em]
\footnotesize (average-case)\end{tabular}
\\[-1em]\\\hline\\[-1em]
\begin{tabular}[c]{@{}l@{}}$\SIS$\\\\
\footnotesize (constant $k$)\end{tabular}                   
& 
\begin{tabular}[c]{@{}l@{}}$s=(q-k)/2$
\\[-1em]\\\\[-1em]
$m\ge Cq^4\log q\cdot n^k$
\\[-1em]\\\\[-1em]
\footnotesize (odd $k$)\\\footnotesize (average-case)\end{tabular} 
& 
\begin{tabular}[c]{@{}l@{}}$s=\lfloor q/(2k)\rfloor$
\\[-1em]\\\\[-1em]
$m\ge q^{Ck\log k}\cdot n^k$
\\[-1em]\\\\[-1em]
\footnotesize ($k$ power of $2$)\\\footnotesize (worst-case)\end{tabular} 
& 
\begin{tabular}[c]{@{}l@{}}$s=\lfloor q/(2k)\rfloor$
\\[-1em]\\\\[-1em]
$m\ge Cn^k$\\ \ 
\\[-1em]\\\\[-1em]
\footnotesize (worst-case)\end{tabular}
\\[-1em]\\\hline
\end{tabular}
\end{table}

\paragraph{Paper organization.}
In \Cref{sec:F3-subset-sum}, we present our algorithms on the \FSubsetSum problem.
In \Cref{sec:halving_trick}, we describe a simple halving trick that improves upon the techniques in \cite{II24} to reduce the weight of solutions, which leads to simple algorithms for $\SIS$, $\SubsetSum$, and $\CIS$.
Then in \Cref{sec:general_cis}, we develop extra techniques to improve previous algorithms.
Finally in \Cref{sec:discussion}, we mention further improvements on our results and discuss problems, algorithms, and motivations related to our work.

%%%%%%%%%%%%%%%%%%%%%%%%%%%%%%%%%%%%%%%%%%%%%%%%%%%%%%%%%%%%%%%

\section{\texorpdfstring{\FSubsetSum}{F3-Subset-Sum} algorithms}\label{sec:F3-subset-sum}

The goal of this section is to handle the simplest field $\F_3=\{-1,0,1\}$ and prove \Cref{thm:intro_F3-subset-sum}.

We start with a weaker version of \Cref{thm:intro_F3-subset-sum} with the simplest proof. Then we identify some room for improvement and finally obtain \Cref{thm:intro_F3-subset-sum}.
We will use zero-sum to refer to a linear combination of vectors that equals zero; and we say it is nontrivial if it uses some nonzero coefficient in the linear combination.

\begin{theorem} \label{thm:f3-zero-sum-weak}
The worst-case \FSubsetSum problem can be solved in deterministic $\poly(m)$ time when $m\ge(n+1)^2$.
\end{theorem}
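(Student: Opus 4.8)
The plan is a two-level linear-algebra reduction. First I would partition $[m]$ into $B:=\lfloor m/(n+1)\rfloor\ge n+1$ pairwise-disjoint blocks of size exactly $n+1$ each (discarding any leftover indices), which is possible precisely because $m\ge(n+1)^2$. The goal of the first level is to ``compress'' each block into a single vector $w_j\in\F_3^n$ that can be written as a $\{0,1\}$-combination of that block's vectors in two different ways; the second level then finds a linear dependence among $w_1,\dots,w_B$ and lifts it back to an honest subset-sum of the $h_i$'s.

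For the compression of one block, relabel its vectors as $g_1,\dots,g_{n+1}$. Since these are $n+1$ vectors in the $n$-dimensional space $\F_3^n$, Gaussian elimination produces a nontrivial $\F_3$-dependence $\sum_\ell c_\ell g_\ell=0$. Splitting the support into $P=\{\ell:c_\ell=1\}$ and $N=\{\ell:c_\ell=2\}$ and rearranging over $\F_3$ (using $-2\equiv1\pmod 3$) gives $\sum_{\ell\in P}g_\ell=\sum_{\ell\in N}g_\ell=:w_j$. If $P$ or $N$ is empty, or if $w_j=0$, then one of $P,N$ is already a nonempty zero-sum subset and I am done; so I may assume $P,N$ are nonempty and $w_j\ne0$.

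The step I expect to be the crux is the coefficient laundering. A dependence over $\F_3$ can legitimately use the coefficient $2$, which is forbidden in a subset-sum; but here $w_j$ has two faces, $w_j=\sum_{\ell\in P}g_\ell$ and $2w_j=w_j+w_j=\sum_{\ell\in P}g_\ell+\sum_{\ell\in N}g_\ell=\sum_{\ell\in P\cup N}g_\ell$, the last being a genuine $\{0,1\}$-sum because $P\cap N=\emptyset$. So every $\F_3$-multiple of $w_j$ (coefficient $0$, $1$, or $2$) is realized by the block subset $\emptyset$, $P$, or $P\cup N$ respectively. Now I apply Gaussian elimination once more to the $B\ge n+1$ vectors $w_1,\dots,w_B\in\F_3^n$ to get a nontrivial dependence $\sum_j d_j w_j=0$, replace each $d_j w_j$ by its corresponding block subset, and let $S$ be the union of these subsets. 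Since distinct blocks are disjoint this is a disjoint union, so $\sum_{i\in S}h_i=\sum_j d_j w_j=0$; and $S\ne\emptyset$ because some $d_j\ne0$ forces that block's $P$ to be a nonempty subset of $S$. Every step is $\poly(m)$-time linear algebra over $\F_3$.

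Beyond the coefficient-$2$ issue nothing should be delicate: the existence of both dependences is immediate from dimension counting, the blocks are disjoint by construction, and the nonemptiness of $S$ is a one-line check. I would expect the quantitative improvement to $m\ge n^2/3+O(n\log n)$ in \Cref{thm:intro_F3-subset-sum} to come from iterating the compression over more than two levels (each level shrinking the ambient dimension by a constant factor, or being less wasteful in how many vectors it consumes), but for the bound $m\ge(n+1)^2$ the plain two-level argument is enough.
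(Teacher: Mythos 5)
Your proof is correct and follows essentially the same two-level argument as the paper's: partition the input into $n+1$ disjoint blocks of $n+1$ vectors, collapse each block's $\F_3$-dependence into a vector $w_j$ whose scalar multiples $0,w_j,2w_j$ are all realized by genuine $\{0,1\}$-subsets of that block (via $\emptyset$, $P_j$, $P_j\cup N_j$), and then lift a dependence among the $w_j$'s back to a disjoint union of block subsets. Your explicit early-exit when $w_j=0$ or one of $P_j,N_j$ is empty makes the nontriviality of the final subset airtight; the paper's write-up leaves that case implicit but the mechanism is identical.
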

\begin{proof}
Take the set $S_1$ of the first $n + 1$ vectors and find a nontrivial linear combination of them equaling $0$. 
Collecting like coefficients together, this yields $u_1 - u_1' =0$, where $u_1$ and $u_1'$ are vectors formed by disjoint subset-sums from~$S_1$ and they are not both empty.
Note that $u_1 + u_1' = u_1+u_1 = -u_1$, so $-u_1$ is also a subset-sum from~$S_1$.

Repeat for subsequent sets of $n+1$ vectors, $S_2, \dots, S_{n+ 1}$, producing $u_2, \dots, u_{n+1}$.
For each $i \in [n+1]$, we have that $\pm u_i$ is a subset-sum from~$S_i$. 

Finally, find a nontrivial linear combination of $u_1, \dots, u_{n+1}$ equaling $0$.
Collecting like coefficients, this yields $v - v' = 0$, where $v$ and $v'$ are subset-sums of the $u_i$'s, not both empty. 
But now $v$ is a subset-sum of the original vectors (since the $u_i$'s are), and so too is $-v'$ (since the $-u_i$'s are).  
This gives a nontrivial zero-sum from the original vectors as desired.
\end{proof}

One way to improve \Cref{thm:f3-zero-sum-weak} is to observe a dimension reduction: once $u_1$ is created, we can project later vectors onto the space orthogonal to $u_1$; this is because $0\cdot u_1,1\cdot u_1,-1\cdot u_1$ can all be replaced by a subset-sum from $S_1$.
This leads to the following \Cref{thm:f3-zero-sum-less-weak}, which is a special case of \cite[Claim~1]{ISS12} and we present a short proof here.

\begin{theorem}\label{thm:f3-zero-sum-less-weak}
The worst-case \FSubsetSum problem can be solved in deterministic $\poly(m)$ time when $m\ge(n+1)(n+2)/2$.
\end{theorem}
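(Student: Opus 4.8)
The plan is to improve \Cref{thm:f3-zero-sum-weak} by incorporating a dimension-reduction step, exactly as foreshadowed in the remark immediately preceding the statement. The key observation is that once we have produced a vector $u_1$ such that $0\cdot u_1$, $u_1$, and $-u_1$ are each realizable as a subset-sum from $S_1$, any later appearance of a multiple of $u_1$ in a linear combination can be ``paid for'' by a subset-sum from $S_1$. Hence for the later blocks we only need to kill the projection of the new vectors onto the space orthogonal to $u_1$, which lives in dimension $n-1$ rather than $n$.

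Concretely, first I would take $S_1$ to be the first $n+1$ vectors, find a nontrivial $\F_3$-linear relation among them, collect like coefficients to write it as $u_1 - u_1' = 0$ with $u_1, u_1'$ disjoint subset-sums (not both empty), and record that $\pm u_1$ and $0$ are all subset-sums from $S_1$ (using $u_1 + u_1' = -u_1$ over $\F_3$). If $u_1 = 0$ we are already done, so assume $u_1 \neq 0$. Now pass to the quotient space $\F_3^n / \langle u_1\rangle$, which has dimension $n-1$: take the next $n$ vectors as $S_2$, find a nontrivial relation among their images in the quotient, lift it back to $\F_3^n$ to get $u_2 - u_2' = c\, u_1$ for some $c \in \F_3$, and replace $c\, u_1$ by an appropriate subset-sum from $S_1$ so that $u_2$ (adjusted) and $\pm$ of it become subset-sums from $S_1 \cup S_2$. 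Iterating: block $S_{i+1}$ needs only $n - i + 1$ vectors, since at that stage we work modulo $\langle u_1, \dots, u_i\rangle$, which generically has dimension $i$ (and if it ever has dimension $< i$ we have extra relations, only helping us). After at most $n+1$ such blocks we have produced vectors $u_1, \dots, u_{n+1}$ spanning a space of dimension $\le n$, so some nontrivial combination of them vanishes, and unwinding all the substitutions yields a nontrivial zero-sum from the original vectors. The total number of vectors used is $(n+1) + n + (n-1) + \cdots + 1 = (n+1)(n+2)/2$.

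The main technical point to get right — and the step I expect to be the real obstacle — is the bookkeeping for the substitution across blocks: when I write a relation for $S_{i+1}$ modulo $\langle u_1,\dots,u_i\rangle$, I get $u_{i+1} - u_{i+1}' = \sum_{j\le i} c_j u_j$ with coefficients $c_j \in \{-1,0,1\}$, and I must argue that each $c_j u_j$ can be rewritten as a subset-sum from the earlier blocks \emph{using coefficients in $\{0,1\}$ only}, so that the final object is genuinely a $\{0,1\}$-combination (a subset) of the original $m$ vectors rather than a general $\F_3$-combination. This is where the identity $u_j + u_j' = -u_j$ is essential: $-u_j$ is directly a subset-sum, $u_j = u_j$ is a subset-sum, and $0$ is the empty sum, so every value of $c_j$ is covered; but one has to track that the subsets chosen for different $j$ are over disjoint blocks $S_j$ and hence combine into a legitimate subset. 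I would state this cleanly by maintaining the invariant that for each $i$, both $u_i$ and $-u_i$ (and $0$) are subset-sums from $S_1 \cup \cdots \cup S_i$ using pairwise-disjoint index sets within each block. With that invariant in hand the final relation among $u_1, \dots, u_{n+1}$ unwinds to a nonempty subset $S \subseteq [m]$ with $\sum_{i\in S} h_i = 0$, completing the proof.
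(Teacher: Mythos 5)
Your proposal is correct and uses essentially the same idea as the paper: peel off a block of $\ell+1$ vectors, extract a pivot $u$ so that $0$, $u$, and $-u$ are all subset-sums of that block, and then handle the remaining vectors in a space of dimension $\ell-1$. The paper packages this as a clean induction on dimension, so the recursive call hands back a genuine subset $I$ of the later vectors that is merged with a single subset $S'\subseteq S$ of the first block — the cross-block bookkeeping you flag as the main obstacle never arises — whereas your single-pass version accumulating $u_1,\dots,u_{n+1}$ and unwinding one relation among them is equally valid but does need the carry-propagation you sketch (substitute from $u_{n+1}$ down to $u_1$, folding correction terms into the not-yet-processed coefficients).
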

\begin{proof}
It will be more convenient to work with the following more general statement: given $m\ge(\ell+1)(\ell+2)/2$ vectors in a linear subspace $V$ of $\F_3^n$ with dimension $\ell$, we can find in polynomial time a nontrivial zero-sum.
Then \Cref{thm:f3-zero-sum-less-weak} follows by setting $\ell=n$ and $V=\F_3^n$.

We prove the above general statement by induction on $\ell$ and the runtime will be clear from the analysis.
The base case of $\ell = 0$ is trivial, as the given vector must be $0$.
For the induction step, let $S$ be the set of  first $\ell +1$ vectors and $T$ the last $\ell(\ell+1)/2$.
The vectors in $S$ must be linearly dependent, so we can find a nontrivial linear combination of them that sums to $0$.
Write this linear combination as $u - u' = 0$, where $u, u'$ are disjoint subset-sums from~$S$ and are not both empty.
Similar to the proof of \Cref{thm:f3-zero-sum-weak}, $-u=u+u'$ is also a subset-sum from~$S$.
    
Now if $u =0$ then we are done.  
Otherwise write $V = \spn\{u\} \oplus W$, where $W$ has dimension $\ell-1$.
Each vector $v_i \in T$ can be written as $c_i u + w_i$ for some $c_i \in \F_3$ and $w_i \in W$. Applying induction to the $w_i$'s gives a nontrivial zero-sum, i.e., a nonempty collection $\{w_i\colon i \in I\subseteq T\}$ with $\sum_{i \in I} w_i =0$.  
Now the associated collection of original vectors $\{v_i\colon i \in I\}$ sums to $\sum_{i\in I}v_i=c\cdot u$ for $c=\sum_{i\in I}c_i\in\F_3$.
By our construction of $u$ above, $-c\cdot u$ can be expressed as a subset-sum from $S'\subseteq S$.
This means $\sum_{i\in S'\cup I}v_i=\sum_{i\in I}v_i-c\cdot u=0$ is a desired nontrivial zero-sum.
\end{proof}

To further optimize parameters, we observe that each time the dimension reduction relies on a nontrivial linear equation. While both \Cref{thm:f3-zero-sum-weak} and \Cref{thm:f3-zero-sum-less-weak} generate such equations by the simple rank$+1$ bound, we can go beyond this, thanks to the following result.

\begin{lemma}\label{lem:f3-zero-sum-zeros}
Let $\ell\ge0$ be an integer and $t=\lceil\log_3(\ell+1)\rceil$.
Given $m\ge\ell+t$ vectors in a linear subspace $V$ of $\F_3^n$ with dimension $\ell$, we can find in deterministic $\poly(n,m)$ time a nontrivial zero-sum that uses at most $2(\ell+1)/3+t$ vectors.
\end{lemma}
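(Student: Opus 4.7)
The plan is to compute the kernel of the matrix $[v_1|\cdots|v_m]$, restrict to a $t$-dimensional subspace, and extract a low-weight nonzero element by an averaging argument combined with brute-force search. The search is affordable precisely because $3^t \le 3(\ell+1)$, so the subspace has polynomially many elements in $\ell$.

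First, I would compute $K = \{c \in \F_3^m : \sum_{i=1}^m c_i v_i = 0\}$ by Gaussian elimination. Since $v_1,\ldots,v_m$ all lie in the $\ell$-dimensional subspace $V$, the rank is at most $\ell$ and hence $\dim K \ge m-\ell \ge t$. Pick any $t$-dimensional subspace $K' \subseteq K$, enumerate all $3^t$ elements of $K'$, and output one of minimum Hamming weight among the nonzero ones; the whole procedure is deterministic and runs in $\poly(n,m)$ time because $3^t \le 3(\ell+1)$. For the weight analysis, fix any coordinate $i \in [m]$: the projection $K' \to \F_3$ sending $c \mapsto c_i$ is $\F_3$-linear, so its image is either $\{0\}$ or all of $\F_3$. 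Letting $A \subseteq [m]$ be the set of coordinates where the image is all of $\F_3$, exactly $\tfrac{2}{3}|K'|$ elements of $K'$ are nonzero at each $i \in A$, so
\[
\sum_{c \in K'} \bigl|\{i : c_i \neq 0\}\bigr| \;=\; \tfrac{2}{3}\,|A|\,|K'|.
\]
The minimum weight over the $|K'|-1$ nonzero elements is at most the average $\tfrac{2|A|}{3}\cdot\tfrac{|K'|}{|K'|-1}$, and substituting $|A|\le m=\ell+t$ together with $|K'|=3^t\ge \ell+1$ yields the bound $\tfrac{2(\ell+t)(\ell+1)}{3\ell}$.

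A short algebraic rearrangement rewrites this as $\tfrac{2(\ell+1)}{3} + t - \tfrac{t(\ell-2)}{3\ell}$, which is $\le \tfrac{2(\ell+1)}{3}+t$ whenever $\ell \ge 2$ --- exactly the target. The only subtlety is the small cases $\ell \in \{0,1\}$, where the averaging expression slightly exceeds the stated target; this gap is absorbed by the integrality of weights (for $\ell = 1$, any $m \ge 2$ vectors on an $\F_3$-line admit a nontrivial zero-sum of weight at most $2$, matching the ceiling of the target $7/3$), and the $\ell = 0$ case is degenerate. No idea beyond Gaussian elimination plus averaging is needed; the main thing to get right is the short algebra tying $\tfrac{2(\ell+t)(\ell+1)}{3\ell}$ to $\tfrac{2(\ell+1)}{3}+t$.
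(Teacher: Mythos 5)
Your proof is correct, and it takes a genuinely different (though closely related) route from the paper's. The paper first applies an invertible linear transform so the input becomes the standard basis $e_1,\dots,e_\ell$ plus $t$ extra vectors $v_1,\dots,v_t$, then finds a combination $v_\alpha = \sum_i \alpha_i v_i$ with few nonzero coordinates and cancels them with basis vectors; the weight of the resulting zero-sum is bounded as $t$ (for the extras) plus the number of nonzeros of $v_\alpha$, giving the clean closed form $\tfrac{2(\ell+1)}{3} + t$. You instead work directly in the left kernel, pick a $t$-dimensional subspace $K'$, and average Hamming weight over $K'$. The averaging insight is the same in both proofs --- each coordinate projection from $K'$ is either identically zero or balanced over $\F_3$, precisely the dichotomy the paper applies coordinate-wise to $v_\alpha$, and indeed the paper's family $\{v_\alpha\}$ implicitly parametrizes such a subspace of the kernel --- but your bookkeeping is different: rather than splitting the weight into an additive $t$ plus a sparsity count, you bound it all at once and obtain $\tfrac{2(\ell+t)(\ell+1)}{3\ell}$. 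This is in fact a slightly sharper bound for $\ell \ge 2$, at the cost of needing to separately dispose of $\ell \in \{0,1\}$, which you do correctly via integrality. (For what it is worth, the lemma as literally stated is vacuously off at $\ell = 0$: a nontrivial zero-sum needs $\ge 1$ vector, but the stated bound is $2/3$. The paper's own proof also degenerates there, with $1 - 3^{-t} = 0$; that case is never invoked downstream, and you were right to flag it as degenerate.) The one implicit step you leave unsaid, as does the paper, is that when $m > \ell + t$ you should discard extra vectors so that $|A| \le \ell + t$ holds; this is a harmless restriction.
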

\begin{proof}
Without loss of generality we assume the input vectors span the full subspace $V$, since otherwise we can always work with the smaller subspace and fewer vectors.
By taking an invertible linear transform, we assume the input vector are standard basis vectors $e_1,\ldots,e_\ell$ and additional vectors $v_1,\ldots,v_t$.

For each $\alpha=(\alpha_1,\ldots,\alpha_t)\in\F_3^t$, define $v_\alpha=\alpha_1v_1+\cdots+\alpha_tv_t$.
Now it suffices to find some $\alpha\ne0^t$ such that $v_\alpha$ has at most $2(\ell+1)/3$ nonzero entries, in which case these nonzero entries can be canceled by $2(\ell+1)/3$ standard basis vectors to obtain the desired zero-sum.

To find such an $\alpha$, we consider a uniform $\alpha\in\F_3^t$.
Then the vector $v_\alpha$ has at most $2\ell/3$ nonzeros in expectation; this is because every coordinate of $v_\alpha$ is either constantly zero or uniform in $\F_3$.
Since the trivial choice $\alpha=0^t$ happens with probability $3^{-t}$, there is some $\alpha\ne0^t$ such that $v_\alpha$ has at most 
$$
\frac{2\ell/3}{1-3^{-t}}\le\frac{2\ell/3}{1-\frac1{\ell+1}}=\frac{2(\ell+1)}3
$$ 
nonzeros and $\alpha$ can be efficiently found by brute-force enumeration.
\end{proof}

We remark that the above lemma (and the upcoming generalization \Cref{lem:sis_halving_sparse}) can be viewed as a coding-theoretic result to efficiently find a sparse codeword. Along this line, similar result can be derived from algorithmizing the Plotkin bound \cite{plotkin1960binary}, though they are stuck at the same $2/3$ ratio.

Now we are ready to prove \Cref{thm:intro_F3-subset-sum}.
\begin{proof}[Proof of \Cref{thm:intro_F3-subset-sum}]
We follow the same strategy of the inductive arguments in \Cref{thm:f3-zero-sum-less-weak}. The only difference is we use \Cref{lem:f3-zero-sum-zeros} to find the set $S$ of linear dependent vectors. The detailed calculation is as follows.

Let $m_0=1$ and 
\begin{equation}\label{eq:thm:intro_F3-subset-sum_1}
m_\ell=\max\left\{\ell,m_{\ell-1}+\frac{2(\ell+1)}3\right\}+\lceil\log_3(\ell+1)\rceil.
\end{equation}
We will prove the statement: given $m\ge m_\ell$ vectors in a linear subspace $V$ of $\F_3^n$ with dimension $\ell$, we can find in polynomial time a nontrivial zero-sum. 
The base case $\ell=0$ is again trivial.
Now consider the induction step at dimension $\ell\ge1$.
Define $t=\lceil\log_3(\ell+1)\rceil$.
Since we have $m\ge m_\ell\ge\ell+t$ vectors, we can use \Cref{lem:f3-zero-sum-zeros} to find a set of at most $m'=2(\ell+1)/3+t$ vectors with linear dependence; then apply induction hypothesis for dimension $\ell-1$ with the remaining $m-m'\ge m_{\ell-1}$ vectors, similar to the proof of \Cref{thm:f3-zero-sum-less-weak}.

Finally, to analyze \Cref{eq:thm:intro_F3-subset-sum_1}, we define 
$$
\bar m_\ell=\frac{(\ell+1)(\ell+2)+1}3+\ell\cdot\lceil\log_3(\ell+1)\rceil.
$$
Then a simple inductive argument proves $m_\ell\le\bar m_\ell$: the base case $\ell=0$ is trivial; the inductive case $\ell\ge1$ is proved by observing $\ell+\lceil\log_3(\ell+1)\rceil\le\ell\cdot\lceil\log_3(\ell+1)\rceil\le\bar m_\ell$ and
$$
m_{\ell-1}+\frac{2(\ell+1)}3+\lceil\log_3(\ell+1)\rceil
\le\bar m_{\ell-1}+\frac{2(\ell+1)}3+\lceil\log_3(\ell+1)\rceil\le\bar m_\ell.
$$
This completes the proof of \Cref{thm:intro_F3-subset-sum} by setting $\ell=n$ and $V=\F_3^n$.
\end{proof}

%%%%%%%%%%%%%%%%%%%%%%%%%%%%%%%%%%%%%%%%%%%%%%%%%%%%%%%%%%%%%%%%%%%%%%%%%%%%%%%%%%%%%%%%%%%%%

\section{A halving trick and its applications}\label{sec:halving_trick}

The main feature we explore in \Cref{sec:F3-subset-sum} is to use linear dependence to construct a subset-sum $u\in\F_3^n$ such that $0\cdot u,1\cdot u,-1\cdot u$ can all be expressed as subset-sums.
For larger fields and considering signs, this can be generalized as the following definition of \emph{reducible} vector, which will be helpful in designing algorithms for $\SIS$, $\SubsetSum$, and $\CIS$.

The goal of this section is to give a primitive exposition of the halving trick using reducible vectors.
We note that our halving trick is inspired by a similar one in \cite{II24}, and our main contribution here is to remove the dependence on the field size $q$ in \cite{II24}.

\begin{definition}[Bounded Vector Sum]\label{def:vector-sum_sis}
Let $v_1,\ldots,v_m$ be vectors and let $h\ge1$ be an integer.
We say vector $u$ is a $(\pm h)$-sum of the $v_i$'s if we can choose integer $-h\le\alpha_i\le h$ for each $i\in[m]$ such that $u=\sum_{i\in[m]}\alpha_iv_i$; and it is a $(\pm h)$-zero-sum if $u=0$.
We say the sum is nontrivial if $\alpha_i$'s are not all zero.
\end{definition}

\begin{definition}[Reducible Vector]\label{def:reducible_sis}
Let $v_1, \dots, v_m$ be vectors and let $1 \leq h' < h$ be integers. 
We say that vector $u$ is $(\pm h \to\pm h')$-reducible for the $v_i$'s if
\begin{itemize}
    \item $u$ is a nontrivial $(\pm1)$-sum of the $v_i$'s;
    \item for every integer $-h \leq c \leq h$, the vector $c\cdot u$ is a $(\pm h')$-sum of the~$v_i$'s.
\end{itemize}
\end{definition}

For clarity, our presentation in this section will focus on the existence of vector sums with certain properties (e.g., a solution to $\SIS$) and the runtime of such a construction will be obvious from the proof and thus omitted.

We begin with showing that zero-sums give reducible vectors.

\begin{lemma}\label{lem:reducible_sis_halving}
Let $h > 1$ be an integer. 
Given a nontrivial $(\pm h)$-zero-sum of vectors $v_1, \dots, v_m$,
\begin{equation}\label{eq:lem:reducible_sis_halving_1}
\alpha_1 v_1 + \cdots + \alpha_m v_m = 0
\quad\text{where each }
-h \leq \alpha_i \leq h
\text{ is an integer},
\end{equation}
we can find in deterministic $\poly(m,\log h)$ time a $(\pm h\to \pm\lfloor h/2\rfloor)$-reducible~$u$ for the $v_i$'s.
\end{lemma}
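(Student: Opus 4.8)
The plan is to split the zero-sum relation $\sum_i \alpha_i v_i = 0$ according to the parity of the coefficients $\alpha_i$, and use the zero part to "neutralize" the parity. Write each $\alpha_i = 2\beta_i + \epsilon_i$ where $\epsilon_i \in \{0,1\}$ is the parity (say $\epsilon_i = \alpha_i \bmod 2$ taken in $\{0,1\}$, adjusting sign conventions so that $|\beta_i| \le \lceil h/2\rceil$; in fact $|\beta_i| \le \lfloor h/2 \rfloor$ if we pick $\epsilon_i \in \{-1,0,1\}$ cleverly, but the clean choice is $\epsilon_i \in \{0,1\}$). Define $u = \sum_i \epsilon_i v_i$. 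Then $u$ is a $(\pm 1)$-sum of the $v_i$'s, since the $\epsilon_i$ lie in $\{0,1\} \subseteq \{-1,0,1\}$. It is nontrivial precisely when some $\epsilon_i \ne 0$, i.e. when some $\alpha_i$ is odd; I will handle the "all $\alpha_i$ even" case separately below.

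The key identity is that $u = \sum_i \epsilon_i v_i = \sum_i (\alpha_i - 2\beta_i) v_i = -2\sum_i \beta_i v_i$, using $\sum_i \alpha_i v_i = 0$. Hence for any integer $c$ with $-h \le c \le h$,
\begin{equation}
c \cdot u = c\sum_i \epsilon_i v_i = \sum_i \left( \tfrac{c}{2}\epsilon_i' - c\beta_i\right)\cdots
\end{equation}
— more carefully: from $u = -2\sum_i \beta_i v_i$ we get $c\cdot u = -2c\sum_i \beta_i v_i$, but I instead want to write $c\cdot u$ using the two representations $u = \sum_i \epsilon_i v_i$ and $u = -2\sum_i \beta_i v_i$ in combination. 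Writing $c = 2d + r$ with $r \in \{0,1\}$ and $|d| \le \lceil h/2 \rceil - 1 \le \lfloor h/2\rfloor$ (check the boundary case $c = h$ with $h$ odd), we have $c\cdot u = 2d\cdot u + r\cdot u = 2d \cdot \big({-2}\sum_i \beta_i v_i\big) + r\sum_i \epsilon_i v_i = \sum_i ({-4d\beta_i} + r\epsilon_i)\cdots$ — this blows up the coefficients, so that is the wrong split. The correct move is: $c\cdot u = (c - 2\lfloor c/2\rfloor)\cdot u + 2\lfloor c/2\rfloor \cdot u$, and substitute $u = \sum \epsilon_i v_i$ in the first term (coefficient $\in\{0,1\}$) and $u = -2\sum\beta_i v_i$ in the second, giving $c\cdot u = \sum_i\big((c - 2\lfloor c/2\rfloor)\epsilon_i - 4\lfloor c/2\rfloor\beta_i\big) v_i$ — still too big. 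So the genuinely right identity must keep only a single substitution: use $u = -2\sum_i\beta_i v_i$, so $c\cdot u = \sum_i(-2c\beta_i)v_i$ with $|2c\beta_i| \le 2h\cdot\lceil h/2\rceil$, far too big; or use $u=\sum\epsilon_i v_i$ directly with coefficients $|c\epsilon_i|\le h$, which gives a $(\pm h)$-sum, not $(\pm\lfloor h/2\rfloor)$.

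Therefore the real idea (and the step I expect to be the main obstacle, i.e. getting the arithmetic exactly right) is to combine \emph{both} representations with coefficients that partially cancel: for each $i$, write the coefficient of $v_i$ in $c\cdot u$ as $a_i\epsilon_i + b_i\beta_i$ where $a_i - 2b_i$ together realize the multiplier $c$ in a way that keeps both $|a_i| \le \lfloor h/2\rfloor$ and $|b_i|\le\lfloor h/2\rfloor$; concretely, since $u = -2\sum\beta_i v_i$ and also $u=\sum\epsilon_i v_i$, for any integers $s,t$ with $s - 2t = c$ we have $c\cdot u = s u - 2t u \cdot\tfrac{1}{1}$... hmm, rather: $c\cdot u = s\cdot u + t\cdot(-2 u) = s\sum_i\epsilon_i v_i + t\cdot(2\cdot 2\sum_i\beta_i v_i)$. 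I will pick $t = \lfloor c/2 \rfloor$ so $|t|\le\lfloor h/2\rfloor$ and... the $\beta$ coefficients are $4t\beta_i$, still too large.

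Given the stated conclusion, I am fairly confident the intended argument is the clean one: take $\epsilon_i \in \{-1,0,1\}$ to be $\alpha_i$ reduced mod $3$? No — mod $2$, but into $\{0,1\}$, \emph{and then observe} that $c\cdot u$ for $|c|\le h$ can be written as $\sum_i \gamma_i v_i$ with $\gamma_i := c\epsilon_i - \big(\text{correction from } 2\sum\beta_i v_i = -u\big)$, i.e. $c\cdot u = \sum_i c\epsilon_i v_i$ and also $0 = \sum_i \alpha_i v_i = \sum_i(2\beta_i+\epsilon_i)v_i$, so $\sum_i\epsilon_i v_i = -2\sum_i\beta_i v_i$ hence $c\cdot u = \frac{c}{2}\sum_i\epsilon_i v_i + \frac{c}{2}\cdot(-2\sum_i\beta_i v_i)$ when $c$ even — so for even $c$, $c\cdot u = \sum_i(-c\beta_i)v_i$ needs $|c\beta_i|\le\lfloor h/2\rfloor$... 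The plan, then, concretely: \textbf{first} reduce to the case where all $\alpha_i \in \{-h, \dots, h\}$ and argue that if all are even we may divide through by $2$ and recurse/iterate until some $\alpha_i$ is odd (this terminates in $O(\log h)$ steps, matching the runtime), obtaining a nontrivial zero-sum with at least one odd coefficient; \textbf{second}, set $\epsilon_i := (\alpha_i \bmod 2) \in \{0,1\}$ and $u := \sum_i \epsilon_i v_i$, which is a nontrivial $(\pm 1)$-sum; \textbf{third}, for $-h \le c \le h$ write $c\cdot u = \sum_i c\epsilon_i v_i = \sum_i c(\alpha_i - 2\beta_i)v_i = -2c\sum_i \beta_i v_i$, and then re-express $-2c\sum_i\beta_i v_i$ by folding in $u$ again: since $|{-2c\beta_i}|$ could be large, instead write $c\epsilon_i \in \{-h,\dots,h\}$ and then for each such coefficient that is "too big" (i.e. $|c\epsilon_i| > \lfloor h/2\rfloor$, which forces $\epsilon_i=1$ and $|c|>\lfloor h/2\rfloor$) replace $c\cdot 1$ by $c - \operatorname{sign}(c)\cdot 2\beta_i \cdot(\text{...})$ — the clean statement being: $c\cdot u = c\sum_i\epsilon_i v_i$ and since $\sum_i\epsilon_i v_i + 2\sum_i\beta_i v_i = 0$, for the sign $\sigma=\operatorname{sign}(c)$ we get $c\cdot u = (c - \sigma\cdot 2)\sum_i\epsilon_i v_i - \sigma\cdot 2\sum_i\beta_i v_i \cdot(-1)$, i.e. iterating we can bring the multiplier $c$ down below $\lfloor h/2\rfloor$ at the cost of adding $\beta_i$-terms whose coefficients stay within $\pm\lfloor h/2\rfloor$ because each $|\beta_i| = |(\alpha_i - \epsilon_i)/2| \le \lfloor h/2 \rfloor$. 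I will write this out carefully; the bookkeeping to keep \emph{both} families of coefficients within $\pm\lfloor h/2\rfloor$ simultaneously is the crux, and I expect it to use exactly the facts that $|\beta_i|\le\lfloor h/2\rfloor$ and that $c$ can be reduced mod $2$ with quotient of magnitude $\le\lfloor h/2\rfloor$.
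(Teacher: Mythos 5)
Your proposal never converges to a working argument, and the underlying idea (split by parity of $\alpha_i$) is the wrong split; you repeatedly notice that the coefficients ``blow up'' and then hedge that ``the bookkeeping is the crux,'' but the bookkeeping genuinely does not close. Here is a concrete counterexample to the parity-based $u$. Take $h=10$, $h'=5$, and the zero-sum $1\cdot v_1 + 9\cdot v_2 - 10\cdot v_3 = 0$. Your parities are $\epsilon=(1,1,0)$, so $u = v_1 + v_2$. Now take $c = 10$. To write $10u = 10v_1 + 10v_2$ as $\sum_i \gamma_i v_i$ with $|\gamma_i|\le 5$, the only freedom (oblivious to the actual vectors) is to subtract a multiple $\lambda$ of the given relation: $\gamma = (10-\lambda,\; 10-9\lambda,\; 10\lambda)$. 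The constraint $|10-\lambda|\le 5$ forces $\lambda\ge 5$, while $|10-9\lambda|\le 5$ forces $\lambda\le 15/9 < 2$; impossible, even for real $\lambda$. So your $u$ is not $(\pm h\to\pm\lfloor h/2\rfloor)$-reducible, and no amount of iterated subtraction fixes it because each subtraction shifts \emph{all} coefficients by $\alpha_i$ uniformly, which wrecks the small-$|\alpha_i|$ terms exactly when it helps the large ones.

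The paper's decomposition is by \emph{size}, not parity: partition $[m]$ into $\{i : |\alpha_i|\le h'\}$, $\{i:\alpha_i>h'\}$, $\{i:\alpha_i<-h'\}$ (after a preliminary rescaling to force $\max_i|\alpha_i|>h'$), and set $u=\sum_{\alpha_i>h'}v_i - \sum_{\alpha_i<-h'}v_i$. The point is that for $h'<c\le h$, subtracting \emph{one} copy of the zero-sum gives coefficients $-\alpha_i$ (for the small group, already $\le h'$), $c-\alpha_i$ (for the big-positive group, bounded by $h-h'-1\le h'$), and $-(c+\alpha_i)$ (for the big-negative group, similarly bounded); the subtraction is helpful on every group simultaneously precisely because $u$'s support is aligned with the big-$|\alpha_i|$ terms. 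On the example above this gives $u=v_2-v_3$ and $10u - (\text{zero-sum}) = -v_1 + v_2 + 0\cdot v_3$, all coefficients in $\{-5,\dots,5\}$. Your ``divide out by $2$ until some $\alpha_i$ is odd'' step is a sound way to ensure nontriviality, and it loosely parallels the paper's Case $a\le h'$ (where they rescale by $\lfloor h/a\rfloor$ to force a large coefficient), but after that preliminary step the two arguments diverge: you need to swap the parity-based definition of $u$ for the size-based one, and the rest of the proof is then a single subtraction and a case check on $c$.
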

\begin{proof}
Define $a=\max_{i\in[m]}|\alpha_i|$, which satisfies $1\le a\le h$ since \Cref{eq:lem:reducible_sis_halving_1} is a nontrivial $(\pm h)$-zero-sum.
Define $h'=\lfloor h/2\rfloor$.
We first handle the simpler case that $a>h'$.
In this case, we construct $u$ by
$$
u=\sum_{i\in[m]\colon\alpha_i>h'}v_i-\sum_{i\in[m]\colon\alpha_i<-h'}v_i,
$$
which is a nontrivial $(\pm1)$-sum since $a>h'$.
In addition for any $0\le c\le h$, if $c\le h'$, then $c\cdot u$ is a $(\pm c)$-sum, which is immediately a $(\pm h')$-sum; if $h'<c\le h$, then we express $c\cdot u$ by
$$
c\cdot u=c\cdot u-\sum_{i\in[m]}\alpha_iv_i
=\sum_{i\colon|\alpha_i|\le h'}(-\alpha_i)v_i+\sum_{i\colon\alpha_i>h'}(c-\alpha_i)v_i+\sum_{i\colon\alpha_i<-h'}-(c+\alpha_i)v_i,
$$
which is a $(\pm h')$-sum since $|\alpha_i|\le h$ and $h'=\lfloor h/2\rfloor$. Similar calculation works for $-h\le c\le0$.

For the other case that $1\le a\le h'$, we compute $t=\lfloor h/a\rfloor$ and consider the following equivalent form of \Cref{eq:lem:reducible_sis_halving_1}:
$$
\alpha'_1v_1+\cdots+\alpha'_mv_m=0
\quad\text{where }\alpha'_i=t\cdot\alpha_i\text{ satisfies }-h\le\alpha'_i\le h.
$$
Since $1\le a\le h'=\lfloor h/2\rfloor$, we know $t=\lfloor h/a\rfloor>h/(2a)$ and hence $\max_{i\in[m]}|\alpha'_i|=t\cdot a>h/2\ge h'$.
Therefore this reduces to the above case.

Finally we remark that the procedure above is oblivious to the vectors $v_1, \dots, v_m$; and it only works on the coefficients $\alpha_1, \dots, \alpha_m$ to construct $u$ and express each $c\cdot u$ as a $(\pm\lfloor h/2\rfloor)$-sum.
Therefore it has runtime $\poly(m,\log h)$ that is independent of $n$.
\end{proof}

\Cref{lem:reducible_sis_halving} leads to the following corollary, which halves the weight of the $\SIS$ solutions at the expense of squaring.

\begin{corollary}\label{cor:iterate_sis_halving}
Let $q$ be a prime and let $h\ge2$ be an integer.
Suppose there is a deterministic algorithm $\calA$, given $m$ input vectors in $\F_q^n$ and running in time $T$, that outputs a nontrivial $(\pm h)$-zero-sum.

Then there is a deterministic algorithm $\calB$, given $m^2$ vectors in $\F_q^n$ and running in time $(m + 1)T + \poly(m,\log h)$, that outputs a nontrivial $(\pm\lfloor h/2\rfloor)$-zero-sum.
\end{corollary}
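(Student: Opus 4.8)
The plan is a block divide-and-conquer in the spirit of \Cref{thm:f3-zero-sum-weak}. Partition the $m^2$ vectors handed to $\calB$ into $m$ pairwise-disjoint blocks $S_1,\dots,S_m$ of size $m$ each. Running $\calA$ on block $S_j$ produces a nontrivial $(\pm h)$-zero-sum of the vectors in $S_j$, and feeding this zero-sum into \Cref{lem:reducible_sis_halving} produces a vector $u_j$ that is $(\pm h\to\pm\lfloor h/2\rfloor)$-reducible for $S_j$; recall from \Cref{def:reducible_sis} that this means $u_j$ is a nontrivial $(\pm1)$-combination of the vectors of $S_j$, while $c\cdot u_j$ is a $(\pm\lfloor h/2\rfloor)$-sum of the vectors of $S_j$ for every integer $-h\le c\le h$. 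This first phase makes $m$ calls to $\calA$, plus $\poly(m,\log h)$ overhead by \Cref{lem:reducible_sis_halving}.

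In the second phase, run $\calA$ once more --- on the $m$ vectors $u_1,\dots,u_m\in\F_q^n$ --- obtaining a nontrivial $(\pm h)$-zero-sum $\sum_{j=1}^m\beta_j u_j=0$ with integer coefficients $|\beta_j|\le h$, not all zero. Using reducibility, rewrite each term $\beta_j u_j$ as an explicit $(\pm\lfloor h/2\rfloor)$-sum $\sum_{i\in S_j}\gamma_i^{(j)}v_i$ of the vectors of $S_j$, and output the combination $\sum_j\sum_{i\in S_j}\gamma_i^{(j)}v_i$. Its value is $\sum_j\beta_j u_j=0$, so it is a $(\pm\lfloor h/2\rfloor)$-zero-sum of all $m^2$ vectors; moreover, since the blocks are disjoint there is no cancellation between blocks, so the output is nontrivial as soon as the substituted expression for a single block is nontrivial. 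The total cost is $m+1$ calls to $\calA$ plus $\poly(m,\log h)$ of bookkeeping (splitting into blocks, running \Cref{lem:reducible_sis_halving}, and assembling the coefficient vector), matching the claimed $(m+1)T+\poly(m,\log h)$.

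The step I expect to be the main obstacle is establishing that nontriviality. Two degeneracies must be ruled out. First, if some $u_j=0$ in $\F_q^n$, then its defining $(\pm1)$-combination $u_j=\sum_{i\in S_j}\epsilon_i v_i$ --- with the $\epsilon_i$ not all zero --- is already a nontrivial $(\pm1)$- (hence $(\pm\lfloor h/2\rfloor)$-) zero-sum of $S_j$, and we simply output it, padded by zeros on the other blocks, skipping the second phase. So suppose every $u_j\ne0$. Second, pick $j$ with $\beta_j\ne0$; the only way block $j$'s substituted expression can be forced to be identically zero is if $\beta_j u_j=0$ in $\F_q^n$, which --- since $u_j\ne0$ and $\F_q$ is a field --- requires $q\mid\beta_j$. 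In the parameter range used throughout this section we have $h\le\lfloor q/2\rfloor<q$, so $|\beta_j|\le h$ precludes $q\mid\beta_j$; hence $\beta_j u_j\ne0$, so block $j$'s expression is nontrivial, and we are done. (For larger $h$ one would additionally have to treat coefficients that are genuine multiples of $q$, e.g., by substituting $\sum_{i\in S_j}(q\epsilon_i)v_i=q\cdot u_j=0$ in place of $\beta_j u_j$ when $q\le\lfloor h/2\rfloor$; I would not belabor this.)
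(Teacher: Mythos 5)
Your proof is correct and follows essentially the same route as the paper's: split the $m^2$ vectors into $m$ blocks, extract a $(\pm h\to\pm\lfloor h/2\rfloor)$-reducible vector $u_j$ from each block via \Cref{lem:reducible_sis_halving}, handle the degenerate case $u_j=0$ separately, run $\calA$ once more on $u_1,\dots,u_m$, and substitute each $\beta_j u_j$ with a $(\pm\lfloor h/2\rfloor)$-sum from $S_j$. Your explicit check that $q\nmid\beta_j$ (so that $\beta_j\ne0$ and $u_j\ne0$ genuinely force $\beta_j u_j\ne0$ in $\F_q^n$) is a point the paper's wording glosses over, but as you note it is automatic in the regime $h\le\lfloor q/2\rfloor$ in which the corollary is actually deployed.
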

\begin{proof}
The algorithm $\calB$ will split the input vectors $v_1, \dots, v_{m^2}$ into $m$ batches $S_1, \dots, S_m$ of size $m$. 
For each $i\in[m]$, $\calB$ applies $\calA$ to the vectors in $S_i$ to get a nontrivial $(\pm h)$-zero-sum from $S_i$, and then applies \Cref{lem:reducible_sis_halving} to get a $(\pm h \to\pm\lfloor h/2\rfloor)$-reducible $u^{(i)}$ for $S_i$.
Note that if any $u^{(i)}=0$, then we are already done by outputting $u^{(i)}$.

Assume $u^{(1)},\ldots,u^{(m)}$ are all nonzero vectors.
Then $\calB$ applies $\calA$ again, to $u^{(1)}, \dots, u^{(m)}$, thereby getting a nontrivial $(\pm h)$-zero-sum $\sum_i c_i u^{(i)} = 0$, where the $c_i$'s are not all zero.
Since each $u^{(i)}$ is $(\pm h \to \pm\lfloor h/2\rfloor)$-reducible, we can replace $c_i u^{(i)}$ with a $(\pm\lfloor h/2\rfloor)$-sum of vectors from $S_i$.
After the substitution, we obtain a $(\pm\lfloor h/2\rfloor)$-zero-sum.
To see it is nontrivial, we recall that some $c_i$ is nonzero and the corresponding $u^{(i)}$ is also nonzero, which means $c_iu^{(i)}$ is nonzero and hence its substitution cannot be a trivial sum.

Finally we remark that the runtime involves $m+1$ executions of $\calA$ with time $T$ each, and $m$ executions of \Cref{lem:reducible_sis_halving} with time $\poly(m,\log h)$ each.
\end{proof}

\subsection{Simple \texorpdfstring{$\SIS$}{SIS-inf} algorithms}\label{sec:simple_sis}

An immediate starting point for \Cref{cor:iterate_sis_halving} is the simple algorithm to find a nontrivial linear dependence in $n+1$ vectors.
This leads to the following theorem for the $\SIS$ problem.

\begin{theorem}\label{thm:halving_sis_weak}
Let $q\ge3$ be a prime and $1\le k\le\lfloor q/2\rfloor$ be an integer power of $2$.
Then there is a deterministic $\poly(m,\log q)$ time algorithm, given $m=(n+1)^k$ input vectors in $\F_q^n$, that outputs a nontrivial $(\pm\lfloor q/(2k)\rfloor)$-zero-sum.
\end{theorem}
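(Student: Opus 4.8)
The plan is to bootstrap from the trivial ``rank${}+{}$1'' argument and invoke \Cref{cor:iterate_sis_halving} to halve the allowed weight exactly $\log_2 k$ times. Write $k = 2^j$. For the base case $k = 1$ (i.e.\ $j = 0$): given $m = (n+1)^1 = n+1$ vectors in $\F_q^n$, Gaussian elimination finds in $\poly(n,\log q)$ time a nonzero $(\alpha_1,\dots,\alpha_{n+1})\in\F_q^{n+1}$ with $\sum_i\alpha_i v_i = 0$; reading each $\alpha_i$ in the range $\{-\lfloor q/2\rfloor,\dots,\lfloor q/2\rfloor\}$ exhibits this as a nontrivial $(\pm\lfloor q/2\rfloor)$-zero-sum, and $\lfloor q/2\rfloor = \lfloor q/(2\cdot 1)\rfloor$. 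Call this algorithm $\calA_1$.

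For $j \ge 1$, I would build a chain of algorithms $\calA_1,\calA_2,\calA_4,\dots,\calA_{2^j}$ maintaining the invariant that $\calA_{2^i}$ takes $(n+1)^{2^i}$ vectors in $\F_q^n$ and outputs a nontrivial $(\pm h_i)$-zero-sum, where $h_i := \lfloor q/2^{i+1}\rfloor$; the base case furnishes $\calA_1$ with $h_0 = \lfloor q/2\rfloor$. To pass from $\calA_{2^i}$ to $\calA_{2^{i+1}}$, apply \Cref{cor:iterate_sis_halving} with $h = h_i$: it converts an algorithm on $(n+1)^{2^i}$ vectors producing a $(\pm h_i)$-zero-sum into one on $\bigl((n+1)^{2^i}\bigr)^2 = (n+1)^{2^{i+1}}$ vectors producing a $(\pm\lfloor h_i/2\rfloor)$-zero-sum, and the nested-floor identity $\lfloor\lfloor q/2^{i+1}\rfloor/2\rfloor = \lfloor q/2^{i+2}\rfloor$ gives $\lfloor h_i/2\rfloor = h_{i+1}$, so the invariant persists. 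Iterating up to $i = j$ yields $\calA_{2^j} = \calA_k$ on $(n+1)^{2^j} = (n+1)^k = m$ vectors outputting a nontrivial $(\pm h_j)$-zero-sum with $h_j = \lfloor q/2^{j+1}\rfloor = \lfloor q/(2k)\rfloor$, which is the claim. The only hypothesis of \Cref{cor:iterate_sis_halving} to check is $h_i \ge 2$ at each of the $j$ invocations, i.e.\ for $0 \le i \le j-1$; the tightest case is $h_{j-1} = \lfloor q/k\rfloor$, and since $q$ is an odd prime the assumption $k \le \lfloor q/2\rfloor = (q-1)/2$ gives $2k \le q-1 < q$, hence $\lfloor q/k\rfloor \ge 2$. (When $q = 3$ we are forced to $k = 1$ and no halving step is performed.)

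For the running time I would unroll the call tree: $\calA_{2^{i+1}}$ makes $(n+1)^{2^i}+1$ recursive calls to $\calA_{2^i}$ together with the same number of calls to the coefficient-only routine of \Cref{lem:reducible_sis_halving}, so the number of leaf calls to $\calA_1$ is $\prod_{i=1}^{j}\bigl((n+1)^{2^{i-1}}+1\bigr) \le 2^j (n+1)^{2^j - 1} = k\,(n+1)^{k-1}$. Since $m = (n+1)^k \ge 2^k$ forces $k \le \log_2 m$, this is at most $(\log_2 m)\cdot m$, hence $\poly(m)$; each leaf call costs $\poly(n,\log q) \le \poly(m,\log q)$, and there are polynomially many invocations of \Cref{lem:reducible_sis_halving}, each costing $\poly(m,\log q)$ (as $h_i \le q$). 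Summing, the algorithm runs in deterministic $\poly(m,\log q)$ time, and everything in sight is deterministic.

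All the substance sits inside \Cref{lem:reducible_sis_halving} and \Cref{cor:iterate_sis_halving}, which are already established, so the remaining work is essentially bookkeeping. The one spot that needs a little care is verifying that the repeated halving never undershoots --- i.e.\ that $h_i$ stays at least $2$ throughout, which is exactly the place the hypothesis $k \le \lfloor q/2\rfloor$ is used --- and that the iterated floors collapse to $\lfloor q/(2k)\rfloor$ rather than to something slightly larger; once those are pinned down, the theorem follows immediately.
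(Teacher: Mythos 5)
Your proposal is correct and follows essentially the same route as the paper: a rank${}+{}1$ base case at $k=1$, followed by iterated (equivalently, inductive) application of \Cref{cor:iterate_sis_halving}, using the nested-floor identity to collapse $\lfloor\lfloor q/k\rfloor/2\rfloor=\lfloor q/(2k)\rfloor$. The paper's version is terser (a one-line induction on $k$), while you have additionally spelled out the unrolled call tree, the running-time bound, and the check that each intermediate $h_i$ stays at least $2$ — all of which are correct and simply made explicit.
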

\begin{proof}
The base case $k=1$ is the simple algorithm to find a nontrivial $(\pm\lfloor q/2\rfloor)$-zero-sum in $n+1$ vectors, which runs in time $\poly(n,\log q)$.
The inductive case $k\ge2$ follows from \Cref{cor:iterate_sis_halving} with the inductive hypothesis for $k/2$, and noticing $\lfloor\lfloor q/k\rfloor/2\rfloor=\lfloor q/(2k)\rfloor$.
\end{proof}

To improve \Cref{thm:halving_sis_weak}, we explore the same ideas of dimension reduction and sparser zero-sum in \Cref{sec:F3-subset-sum} to get a better starting point.
We begin with the idea of finding sparser zero-sums.

\begin{lemma}\label{lem:sis_halving_sparse}
Let $r\ge1$ be an arbitrary integer.
Given $\ell+r$ vectors in a linear subspace of $\F_q^n$ with dimension $\ell$, we can find in deterministic $\poly(n,r,\log q)$ time a nontrivial $(\pm\lfloor q/2\rfloor)$-zero-sum that uses at most
$$
\frac{1-q^{-1}}{1-q^{-r}}\cdot\ell+r
$$
vectors.
\end{lemma}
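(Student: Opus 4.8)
The plan is to adapt the argument of \Cref{lem:f3-zero-sum-zeros} from $\F_3$ to a general $\F_q$; the essentially new difficulty is that a brute-force search over the $q^r$ possible coefficient vectors is no longer affordable once $q$ is superpolynomial, so that search must be derandomized. Following \Cref{lem:f3-zero-sum-zeros}, I would first assume without loss of generality that the $\ell+r$ input vectors span the full $\ell$-dimensional subspace (otherwise pass to their span, which has a smaller dimension, noting that the claimed bound is nondecreasing in $\ell$). After applying an invertible linear change of basis, computable in $\poly(n,r,\log q)$ time by Gaussian elimination, I would assume the subspace is $\F_q^\ell$ and that $\ell$ of the input vectors are the standard basis vectors $e_1,\dots,e_\ell$, the remaining $r$ being arbitrary vectors $v_1,\dots,v_r\in\F_q^\ell$. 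Writing $v_\alpha = \alpha_1 v_1+\cdots+\alpha_r v_r$ for $\alpha\in\F_q^r$, it then suffices to find some $\alpha\neq 0^r$ for which $v_\alpha$ has at most $W := \frac{1-q^{-1}}{1-q^{-r}}\,\ell$ nonzero entries: cancelling those entries with the corresponding standard basis vectors gives the identity $\sum_{i}\alpha_i v_i - \sum_{j:\,(v_\alpha)_j\neq 0}(v_\alpha)_j e_j = 0$, which, viewing $\F_q$ as $\{-\lfloor q/2\rfloor,\dots,\lfloor q/2\rfloor\}$, is a nontrivial $(\pm\lfloor q/2\rfloor)$-zero-sum on at most $W+r$ vectors; pulling it back through the change of basis preserves everything.

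Such an $\alpha$ exists by a probabilistic argument: for $\alpha$ uniform in $\F_q^r\setminus\{0^r\}$, each coordinate $(v_\alpha)_j$ is a fixed linear functional of $\alpha$, hence either identically zero or has a hyperplane as its zero set, and in the latter case it is nonzero with probability $\frac{q^r-q^{r-1}}{q^r-1}=\frac{1-q^{-1}}{1-q^{-r}}$; summing over the $\ell$ coordinates shows the expected number of nonzero entries of $v_\alpha$ is at most $W$, so some $\alpha\neq 0^r$ attains at most $W$.

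Finding such an $\alpha$ efficiently is the only real work, since a brute-force scan of all $q^r$ coefficient vectors is too slow once $q$ is large (unlike in \Cref{lem:f3-zero-sum-zeros}, where $q=3$ and a logarithmic $r$ make it fine). I would group the vectors $\alpha\in\F_q^r\setminus\{0^r\}$ by the index $i$ of their first nonzero coordinate and normalize that coordinate to $1$; since the number of nonzero entries of $v_\alpha$ is unchanged when $\alpha$ is scaled by a nonzero scalar, its expectation under the uniform distribution on $\F_q^r\setminus\{0^r\}$ is a convex combination, over $i\in[r]$, of its expectations under the product distributions $D_i$ in which $\alpha_1=\cdots=\alpha_{i-1}=0$, $\alpha_i=1$, and $\alpha_{i+1},\dots,\alpha_r$ are independent and uniform. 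By the previous paragraph this convex combination is at most $W$, so some $D_i$ has expectation at most $W$. I would then run the method of conditional expectations inside each $D_i$, assigning $\alpha_{i+1},\dots,\alpha_r$ in turn, and keep the best of the $r$ resulting outcomes. At each step the conditional expectation of the count, given the partial assignment, is an explicitly computable sum of $\ell$ terms, each equal to $1-q^{-1}$ or to the $0/1$ indicator of an already-determined value; and --- this is the one spot where one must take care not to spend time proportional to $q$ --- as a function of the coordinate currently being set, this conditional expectation takes one fixed generic value except at a list of at most $\ell$ explicitly computable field elements where it is smaller, so an optimal assignment is found by inspecting those $O(\ell)$ candidates. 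The whole procedure runs in $\poly(n,r,\log q)$ time, which together with the reduction finishes the proof. (Alternatively, one can run conditional expectations directly on the distribution ``uniform on $\F_q^r\setminus\{0^r\}$'', whose conditional marginals are always either uniform or uniform-on-nonzero, and optimize each coordinate by the same $O(\ell)$-candidate observation.)
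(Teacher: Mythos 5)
Your proposal is correct and follows essentially the same approach as the paper: reduce WLOG to standard basis vectors plus $r$ extras, bound the expected number of nonzero entries of $v_\alpha$ over uniform $\alpha\in\F_q^r\setminus\{0\}$, and derandomize via conditional expectations, using the key observation that at each step only $O(\ell)$ candidate values need be inspected (the paper notes this in a footnote). Your decomposition into the product distributions $D_i$ indexed by the first nonzero coordinate is a minor variant that neatly sidesteps the fact that ``uniform on $\F_q^r\setminus\{0\}$'' is not a product measure, but your parenthetical shows you also see the paper's direct route.
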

\begin{proof}
Similar proof as \Cref{lem:f3-zero-sum-zeros} where we set $q=3$ and $r=\lceil\log_3(\ell+1)\rceil$. 
Let $\ell'\le\ell$ be the rank of the input vectors.
By an invertible linear transform and in $\poly(n,r)$ time, we can assume that we are given standard basis $e_1,\ldots,e_{\ell'}$ and $v_1,\ldots,v_r\in\spn\{e_1,\ldots,e_{\ell'}\}$ as $r$ extra vectors.
Then we consider linear combination $v_\beta=\beta_1v_1+\cdots+\beta_rv_r$ where the coefficient vector $\beta=(\beta_1,\ldots,\beta_r)$ is uniform in $\F_q^r\setminus\{0^r\}$.
A direct calculation shows, in expectation, the number of nonzero entries in $v_\beta$ is at most $\frac{1-q^{-1}}{1-q^{-r}}\cdot\ell'$.
Therefore, once we can find such a $v_\beta$, these nonzero entries can be canceled using $e_1,\ldots,e_{\ell'}$. The total number of used vectors is
$$
\underbrace{r}_\text{for $v_\beta$}+\underbrace{\frac{1-q^{-1}}{1-q^{-r}}\cdot\ell'}_\text{for $e_1,\ldots,e_{\ell'}$}\le\frac{1-q^{-1}}{1-q^{-r}}\cdot\ell+r.
$$
Finally we remark that searching for such a $v_\beta$ can be done efficiently in time $\poly(n,r)$ using the standard derandomization of conditional expectation to sequentially fix $\beta_1,\ldots,\beta_r$.\footnote{Since the vectors are in $\F_q^n$, for each $\beta_i\in\F_q$ there are at most $\min\{q,n+1\}$ choices with different conditional expectation. Hence the enumeration at each derandomization step is $\min\{q,n+1\}=O(n)$, independent of $q$.}
\end{proof}

We note that the $r=1$ case of \Cref{lem:sis_halving_sparse} is the most basic algorithm to find nontrivial linear dependence in $\ell+1$ vectors, as used in \Cref{thm:halving_sis_weak}.
We now incorporate the idea of dimension reduction.

\begin{theorem}\label{thm:sis_halving_start_improve}
Let $q>3$ be a prime and let $r\ge1$ be an arbitrary integer.
Given
$$
m\ge\frac{1-q^{-1}}{1-q^{-r}}\cdot\frac{n(n+1)}2+r(n+1)
$$
vectors in $\F_q^n$, we can find in deterministic $\poly(m,\log q)$ time a nontrivial $(\pm\lfloor q/4\rfloor)$-zero-sum.
\end{theorem}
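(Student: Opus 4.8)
The plan is to carry out the same dimension-reducing induction used in the proofs of \Cref{thm:f3-zero-sum-less-weak} and \Cref{thm:intro_F3-subset-sum}, but with \Cref{lem:f3-zero-sum-zeros} replaced by \Cref{lem:sis_halving_sparse} and with one layer of the halving trick on top. Given a dependence inside the ``structured'' batch of vectors, I would first use \Cref{lem:sis_halving_sparse} to extract a \emph{sparse} nontrivial $(\pm\lfloor q/2\rfloor)$-zero-sum, and then feed it into \Cref{lem:reducible_sis_halving} (legitimate since $h=\lfloor q/2\rfloor>1$ because $q>3$) to obtain a vector $u$ that is $(\pm\lfloor q/2\rfloor\to\pm\lfloor q/4\rfloor)$-reducible for those vectors, using the identity $\lfloor\lfloor q/2\rfloor/2\rfloor=\lfloor q/4\rfloor$. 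The point of reducibility is exactly that, after doing dimension reduction modulo $\spn\{u\}$, any integer multiple of $u$ that reappears can be re-expressed as a genuine $(\pm\lfloor q/4\rfloor)$-combination of the structured vectors, so the coefficient bound $\lfloor q/4\rfloor$ is preserved throughout the recursion.

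In detail I would prove the stronger statement that there is a monotone sequence $m_0\le m_1\le\cdots$ such that any $m\ge m_\ell$ vectors lying in an $\ell$-dimensional subspace $V\subseteq\F_q^n$ admit an efficiently findable nontrivial $(\pm\lfloor q/4\rfloor)$-zero-sum; the theorem is the case $\ell=n$, $V=\F_q^n$. The base case $\ell=0$ is trivial (all vectors are $0$). For $\ell\ge1$: apply \Cref{lem:sis_halving_sparse} to the first $\ell+r$ input vectors to get a nontrivial $(\pm\lfloor q/2\rfloor)$-zero-sum using at most $f(\ell):=\frac{1-q^{-1}}{1-q^{-r}}\ell+r$ of them, then convert it with \Cref{lem:reducible_sis_halving} into a $(\pm\lfloor q/2\rfloor\to\pm\lfloor q/4\rfloor)$-reducible $u$ for those $\le f(\ell)$ vectors. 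If $u=0$ we are done, since a nontrivial $(\pm1)$-zero-sum is in particular a $(\pm\lfloor q/4\rfloor)$-zero-sum. Otherwise write $V=\spn\{u\}\oplus W$ with $\dim W=\ell-1$, write each of the $\ge m-f(\ell)\ge m_{\ell-1}$ remaining vectors as $v_i=c_iu+w_i$ with $c_i\in\F_q$, $w_i\in W$, and recurse on the $w_i$'s to get a nontrivial $(\pm\lfloor q/4\rfloor)$-zero-sum $\sum_{i\in I}\alpha_iw_i=0$. Then $\sum_{i\in I}\alpha_iv_i=c\cdot u$ for the integer $c=\sum_{i\in I}\alpha_ic_i$; taking the representative $\bar c\in\{-\lfloor q/2\rfloor,\dots,\lfloor q/2\rfloor\}$ of $c\bmod q$, we have $c\cdot u=\bar c\cdot u$ in $\F_q^n$, so reducibility supplies a $(\pm\lfloor q/4\rfloor)$-combination of the structured vectors equal to $-\bar c\cdot u$; adding it to $\sum_{i\in I}\alpha_iv_i$ produces a nontrivial $(\pm\lfloor q/4\rfloor)$-zero-sum of all the input vectors, nontrivial because the $\alpha_i$ sit on the vectors $v_i$ with $i\in I$, which are disjoint from the structured batch.

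The arithmetic is then closed exactly as in \Cref{thm:intro_F3-subset-sum}: set $m_0=1$ and $m_\ell=\max\{\ell+r,\ f(\ell)+m_{\ell-1}\}$, and verify by a short induction that $m_\ell\le\frac{1-q^{-1}}{1-q^{-r}}\cdot\frac{\ell(\ell+1)}2+r(\ell+1)$, the right-hand side dominating $\ell+r$ (since $r\ge1$) and absorbing the $f(\ell)+m_{\ell-1}$ recursion. I expect the only slightly delicate point to be the coefficient bookkeeping in the lifting step: $c$ can be large a priori, and the argument works only because $c\cdot u$ depends on $c$ solely through $c\bmod q$ while \Cref{def:reducible_sis} guarantees a $(\pm\lfloor q/4\rfloor)$-representation of $\bar c\cdot u$ for \emph{every} residue $\bar c$ in the symmetric range $\{-\lfloor q/2\rfloor,\dots,\lfloor q/2\rfloor\}$; one must also keep the structured vectors used there disjoint from $\{v_i:i\in I\}$ so that nontriviality survives. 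The running time is $\poly(m,\log q)$ since each of the $O(n)$ recursion levels invokes only the polynomial-time primitives of \Cref{lem:sis_halving_sparse} and \Cref{lem:reducible_sis_halving} together with one linear-algebraic change of basis.
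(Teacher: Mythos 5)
Your proposal is correct and follows essentially the same route as the paper: both use the dimension-reducing induction from \Cref{thm:f3-zero-sum-less-weak}, extract a sparse $(\pm\lfloor q/2\rfloor)$-zero-sum via \Cref{lem:sis_halving_sparse}, convert it to a $(\pm\lfloor q/2\rfloor\to\pm\lfloor q/4\rfloor)$-reducible vector via \Cref{lem:reducible_sis_halving}, project the remaining vectors modulo $\spn\{u\}$, and recurse. The only cosmetic difference is that you define $m_\ell$ via a $\max$ recursion and then bound it by the closed form, whereas the paper defines $m_\ell$ directly as the closed form; your arithmetic verification and the ``delicate point'' about lifting $c\cdot u$ back through the reducibility property match the paper's argument.
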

\begin{proof}
The proof mimics the proof of \Cref{thm:intro_F3-subset-sum}.
Define
$$
m_\ell=\frac{1-q^{-1}}{1-q^{-r}}\cdot\frac{\ell(\ell+1)}2+r(\ell+1).
$$
We prove a more general statement: given $m\ge m_\ell$ vectors in a linear subspace $V$ of $\F_q^n$ with dimension $\ell$, we can efficiently find a $(\pm\lfloor q/4\rfloor)$-zero-sum.
\Cref{thm:sis_halving_start_improve} follows by setting $\ell=n$.

The base case $\ell=0$ is trivial, as $m_0=r\ge1$ and the given vector must be $0$.
For the induction step $\ell\ge1$, since $m_\ell\ge\ell+r$, we apply \Cref{lem:sis_halving_sparse} on the first $\ell+r$ vectors.
We will find a set $T$ of size at most $r+\frac{1-q^{-1}}{1-q^{-r}}\cdot\ell$ and a nontrivial $(\pm\lfloor q/2\rfloor)$-zero-sum from $T$.
Then by \Cref{lem:reducible_sis_halving}, this is turned into a $(\pm\lfloor q/2\rfloor\to\pm\lfloor q/4\rfloor)$-reducible vector $u$ from $T$.

Now if $u=0$ then we are done, since $u$ itself is a nontrivial $(\pm1)$-sum.
Otherwise write $V=\spn\{u\}\oplus W$, where $W$ has dimension $\ell-1$. We will put $T$ aside and work on the remaining $m'=m-|T|\ge m_{\ell-1}$ vectors.
Each remaining vector $v_i$ can be written as $c_iu+w_i$ for some $c_i\in\F_q$ and $w_i\in W$.
Applying induction to the $m'\ge m_{\ell-1}$ many $w_i$'s, we obtain a nontrivial $(\pm\lfloor q/4\rfloor)$-zero-sum of the $w_i$'s.
Now the associated collection of original remaining vectors $v_i$'s sums to $c\cdot u$ for some $c\in\F_q$. Since $u$ is reducible, this can be replaced by a $(\pm\lfloor q/4\rfloor)$-sum from $T$, after which we obtain a nontrivial $(\pm\lfloor q/4\rfloor)$-zero-sum as desired.
\end{proof}

\begin{remark}
A more optimized parameter choice for \Cref{thm:sis_halving_start_improve} is to pick a potentially different $r_i$ for each dimension $i$ and define
$$
m_\ell=\sum_{i=0}^\ell\left(\frac{1-q^{-1}}{1-q^{-r_i}}\cdot i+r_i\right),
$$
where each $r_i\ge1$ is an integer and it automatically guarantees that $m_\ell\ge\ell+r_\ell$.
By setting $r_0=1$ and $r_i=\lceil\log_q(i+1)\rceil$ for $i\ge1$, we have 
$$
m_\ell\le1+\sum_{i=1}^\ell\left(\left(1-q^{-1}\right)(i+1)+\lceil\log_q(i+1)\rceil\right)=\frac{1-q^{-1}}{2}\cdot\ell^2+\ell\log_q(\ell+1)+O(\ell).
$$
However this does not affect the asymptotics in \Cref{thm:simple_halving_sis_intro}.
\end{remark}

With \Cref{thm:sis_halving_start_improve}, we improve \Cref{thm:halving_sis_weak} and  \Cref{thm:simple_halving_sis_intro} follows immediately.

\begin{theorem}\label{thm:halving_sis}
Let $q\ge5$ be a prime and $2\le k \le\lfloor q/2\rfloor$ be an integer power of $2$. Let $r\ge1$ be an arbitrary integer.
Given
$$
m\ge\left(\frac{1-q^{-1}}{1-q^{-r}}\cdot\frac{n(n+1)}2+r(n+1)\right)^{k/2}
$$
vectors in $\F_q^n$, we can find in deterministic $\poly(m,\log q)$ time a nontrivial $(\pm\lfloor q/(2k)\rfloor)$-zero-sum.
\end{theorem}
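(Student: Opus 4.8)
The plan is to prove the statement by induction on $k$ along powers of two, with Theorem~\ref{thm:sis_halving_start_improve} serving as the base case and Corollary~\ref{cor:iterate_sis_halving} driving the inductive step. Write $N := \frac{1-q^{-1}}{1-q^{-r}}\cdot\frac{n(n+1)}{2}+r(n+1)$ for the quantity inside the parentheses; the claim is that $m \ge N^{k/2}$ vectors in $\F_q^n$ suffice to produce a nontrivial $(\pm\lfloor q/(2k)\rfloor)$-zero-sum in deterministic $\poly(m,\log q)$ time. For the base case $k=2$, the hypothesis $m \ge N^{1}=N$ is exactly the hypothesis of Theorem~\ref{thm:sis_halving_start_improve} (applicable since $q\ge 5>3$), which outputs a nontrivial $(\pm\lfloor q/4\rfloor)$-zero-sum, i.e.\ a $(\pm\lfloor q/(2\cdot 2)\rfloor)$-zero-sum, already in deterministic $\poly(m,\log q)$ time.

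For the inductive step, let $k\ge 4$ be a power of two with $k\le\lfloor q/2\rfloor$. Then $k/2$ is again a power of two satisfying $2\le k/2\le\lfloor q/2\rfloor$, so the inductive hypothesis supplies a deterministic algorithm $\calA$ that, on $N^{(k/2)/2}=N^{k/4}$ input vectors, outputs a nontrivial $(\pm\lfloor q/(2\cdot(k/2))\rfloor)$-zero-sum $=(\pm\lfloor q/k\rfloor)$-zero-sum in time $T=\poly(m,\log q)$. I then apply Corollary~\ref{cor:iterate_sis_halving} with $h:=\lfloor q/k\rfloor$; this is legitimate because $k\le\lfloor q/2\rfloor$ forces $q/k\ge q/\lfloor q/2\rfloor\ge 2$, hence $h=\lfloor q/k\rfloor\ge 2$. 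The corollary yields a deterministic algorithm $\calB$ that, on $(N^{k/4})^2=N^{k/2}$ vectors, outputs a nontrivial $(\pm\lfloor h/2\rfloor)$-zero-sum in time $(N^{k/4}+1)T+\poly(N^{k/4},\log h)$, which is $\poly(m,\log q)$ since $N^{k/4}\le N^{k/2}=m$ and the recursion has depth $O(\log_2 k)=O(\log q)$. Finally, exactly as in the proof of Theorem~\ref{thm:halving_sis_weak}, the floor identity $\lfloor\lfloor q/k\rfloor/2\rfloor=\lfloor q/(2k)\rfloor$ identifies this output as a nontrivial $(\pm\lfloor q/(2k)\rfloor)$-zero-sum, closing the induction.

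I do not anticipate a genuine obstacle here: all the analytic content lives in Theorem~\ref{thm:sis_halving_start_improve} and Corollary~\ref{cor:iterate_sis_halving}, and what remains is bookkeeping — tracking that the exponent $N^{k/4}$ from the hypothesis for $k/2$ squares to the target $N^{k/2}$ under the ``$m\mapsto m^2$'' cost of Corollary~\ref{cor:iterate_sis_halving}, verifying the side conditions ($k/2$ a power of two with $2\le k/2\le\lfloor q/2\rfloor$, and $\lfloor q/k\rfloor\ge 2$) are preserved throughout the induction, and checking the two floor identities $\lfloor q/(2\cdot(k/2))\rfloor=\lfloor q/k\rfloor$ and $\lfloor\lfloor q/k\rfloor/2\rfloor=\lfloor q/(2k)\rfloor$. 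The only point requiring a moment's care is confirming that halving $k$ keeps it a power of two and never pushes it below $2$ before reaching the base case, and that $q\ge 5$ is exactly what Theorem~\ref{thm:sis_halving_start_improve} needs; both are immediate. The simpler Theorem~\ref{thm:simple_halving_sis_intro} then follows by taking, say, $r=1$ (or by the optimized choice $r=\lceil\log_q(n+1)\rceil$) and simplifying the resulting bound.
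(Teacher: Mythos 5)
Your proof is correct and follows the same route as the paper's: base case $k=2$ from Theorem~\ref{thm:sis_halving_start_improve}, inductive step halving $k$ via Corollary~\ref{cor:iterate_sis_halving}, and the floor identity $\lfloor\lfloor q/k\rfloor/2\rfloor=\lfloor q/(2k)\rfloor$. You simply make explicit the side-condition checks (in particular $h=\lfloor q/k\rfloor\ge 2$ and the runtime bookkeeping) that the paper leaves implicit.
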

\begin{proof}
The base case $k=2$ is the algorithm in \Cref{thm:sis_halving_start_improve}.
The inductive case $k\ge4$ follows from \Cref{cor:iterate_sis_halving} with the inductive hypothesis for $k/2$, and noticing $\lfloor\lfloor q/k\rfloor/2\rfloor=\lfloor q/(2k)\rfloor$.
\end{proof}

Different choice of $r$ leads to a different bound in \Cref{thm:halving_sis}. For $r=1$, it generalizes the bound in \Cref{thm:f3-zero-sum-less-weak}. For $r=\lceil\log_q(n+1)\rceil$, it generalizes the bound in \Cref{thm:intro_F3-subset-sum}.
These choices of $r$ readily prove \Cref{thm:simple_halving_sis_intro}.

We also remark that \Cref{thm:halving_sis} (and even its weaker version \Cref{thm:halving_sis_weak}) improves the result of \cite{II24} by a factor of $q^{O(q \log q)}$.

\subsection{\texorpdfstring{$\F_q^n$-\SubsetSum}{Fq-Subset-Sum} on random inputs}\label{sec:avg_subset-sum}

Now we turn to \SubsetSum, a particularly interesting case of the more general $\CIS$ problem.
Imran and Ivanyos \cite{II24} proved a reduction from $(\pm1)$-zero-sum algorithms to subset-zero-sum algorithms that works for worst-case instances but suffers from a blowup of $O(\log q)$ on the exponent.
We identify a more efficient reduction for average-case instances.

We will take a detour to $(0,1,2)$-zero-sum, which means the zero-sum only uses coefficients in $0,1,2$.
For consistency, we use $(0,1)$-zero-sum to denote subset-zero-sum.
Again, we say they are nontrivial if not all coefficients are $0$.

With the help of a worst-case $(\pm1)$-zero-sum algorithm, we show how to combine $(0,1,2)$-zero-sums to obtain a $(0,1)$-zero-sum.

\begin{lemma}\label{lem:subset-sum_012_to_01}
Let $q\ge3$ be a prime.
Suppose there is a deterministic algorithm $\calA$, given $m$ input vectors in $\F_q^n$ and running in time $T$, that outputs a nontrivial $(\pm1)$-zero-sum.

Then given vectors $v_1,\ldots,v_{m'}\in\F_q^n$ and assuming they form $m$ disjoint nontrivial $(0,1,2)$-zero-sums, i.e., for each $i\in[m]$, we have
\begin{equation}\label{eq:lem:subset-sum_012_to_01_1}
\sum_{j\in S_i}\alpha_jv_j=0
\quad\text{where $S_1,\ldots,S_m\subseteq[m']$ are disjoint and nonempty and each $\alpha_j\in\{1,2\}$,}
\end{equation}
we can find in time $T+\poly(m',n)$ a nontrivial $(0,1)$-zero-sum in $v_1,\ldots,v_{m'}$.
\end{lemma}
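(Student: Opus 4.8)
The plan is to interpret each $(0,1,2)$-zero-sum as an "$(\pm1)$-gadget" and feed these gadgets to the worst-case $(\pm1)$-zero-sum algorithm $\calA$, exactly mirroring the role played by reducible vectors in \Cref{cor:iterate_sis_halving}. Concretely, for each $i\in[m]$ let $P_i=\{j\in S_i:\alpha_j=1\}$ and $Q_i=\{j\in S_i:\alpha_j=2\}$, so that $\sum_{j\in P_i}v_j+2\sum_{j\in Q_i}v_j=0$. Define the "gadget vector" $u^{(i)}=\sum_{j\in Q_i}v_j$. The key observation is that $u^{(i)}$ can be realized as a $(0,1)$-subset-sum of the $v_j$'s (namely over $Q_i$), and simultaneously $-u^{(i)}$ can be realized as a $(0,1)$-subset-sum: since $2\sum_{j\in Q_i}v_j=-\sum_{j\in P_i}v_j$, we get $u^{(i)}=\sum_{j\in Q_i}v_j$ and $-u^{(i)}=\sum_{j\in Q_i}v_j+\sum_{j\in P_i}v_j=\sum_{j\in S_i}v_j$ (using $3u^{(i)}=0$ only when $q=3$; for general $q$ one instead writes $-u^{(i)}$ directly from $2u^{(i)}=-\sum_{P_i}v_j$, i.e. $-u^{(i)}=u^{(i)}+\sum_{j\in P_i}v_j$, wait — this needs $3u^{(i)}=-\sum_{P_i}v_j+\ldots$). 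Let me be careful: from $\sum_{P_i}v_j = -2u^{(i)}$ we get $-u^{(i)} = u^{(i)} + \sum_{P_i} v_j$ iff $-2u^{(i)} = \sum_{P_i}v_j$, which is exactly our relation. So $-u^{(i)} = u^{(i)} + \sum_{j\in P_i}v_j$, and the right side is a $(0,1)$-sum of $v_j$'s over the disjoint index set $Q_i\cup P_i=S_i$. Hence both $u^{(i)}$ and $-u^{(i)}$ are $(0,1)$-subset-sums supported on the single block $S_i$.

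Next I would handle the degenerate case: if some $u^{(i)}=0$, then $\sum_{j\in Q_i}v_j=0$ with $Q_i$ possibly empty — but if $Q_i=\emptyset$ then $\sum_{j\in P_i}v_j=0$ is already a nontrivial $(0,1)$-zero-sum (as $S_i\neq\emptyset$ forces $P_i\neq\emptyset$), and if $Q_i\neq\emptyset$ then $\sum_{j\in Q_i}v_j=0$ is a nontrivial $(0,1)$-zero-sum directly. Either way we are done. So assume all $u^{(i)}\neq 0$. Now run $\calA$ on the $m$ vectors $u^{(1)},\dots,u^{(m)}$ to obtain a nontrivial $(\pm1)$-zero-sum $\sum_{i\in[m]}c_iu^{(i)}=0$ with $c_i\in\{-1,0,1\}$ not all zero. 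For each $i$ with $c_i=1$, substitute the $(0,1)$-subset-sum over $Q_i$ for $u^{(i)}$; for each $i$ with $c_i=-1$, substitute the $(0,1)$-subset-sum over $S_i$ for $-u^{(i)}$; for $c_i=0$ contribute nothing. Since the blocks $S_1,\dots,S_m$ are pairwise disjoint and each substitution only uses indices inside its own block, the resulting linear combination of the $v_j$'s has every coefficient in $\{0,1\}$, and it sums to $0$.

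For nontriviality: pick $i_0$ with $c_{i_0}\neq0$; since $u^{(i_0)}\neq0$, the block $S_{i_0}$ contributes a nonempty set of $v_j$'s to the final sum (the set $Q_{i_0}$ if $c_{i_0}=1$, which is nonempty since $u^{(i_0)}=\sum_{Q_{i_0}}v_j\neq0$; or the set $S_{i_0}$ if $c_{i_0}=-1$, which is nonempty by hypothesis), and by disjointness these indices are not cancelled by any other block. Hence the $(0,1)$-zero-sum is nontrivial. The runtime is one call to $\calA$ plus $\poly(m',n)$ bookkeeping to form the $u^{(i)}$'s and perform the substitutions, giving $T+\poly(m',n)$ as claimed.

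I do not anticipate a serious obstacle here; the only subtlety is the algebraic identity showing $-u^{(i)}$ is itself a genuine $(0,1)$-sum without needing $q=3$, and keeping track that all substitutions land in disjoint blocks so no coefficient exceeds $1$. The argument is structurally identical to the reducible-vector bootstrap of \Cref{cor:iterate_sis_halving}, with "$(0,1,2)$-zero-sum" playing the role of the reducibility gadget.
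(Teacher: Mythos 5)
Your proof is correct and follows essentially the same approach as the paper's: both define the block vectors (your $u^{(i)}$ is the paper's $u_i'$), run $\mathcal{A}$ on them, and back-substitute a coefficient $+1$ as $\sum_{j\in Q_i}v_j$ and a coefficient $-1$ as $\sum_{j\in S_i}v_j$; your ``reducibility gadget'' framing (that both $u^{(i)}$ and $-u^{(i)}$ are $(0,1)$-subset-sums supported on $S_i$) is just a more explicit presentation of the same algebra. Incidentally, the paper's displayed identity contains a small typo --- the leftmost expression $\sum_{i\in S}u_i'+\sum_{i\in T}u_i$ does not equal the middle one; it should read $\sum_{i\in S}u_i'+\sum_{i\in T}(u_i+u_i')$, which is precisely what your substitution $-u^{(i)}=\sum_{j\in S_i}v_j$ produces.
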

\begin{proof}
If for some $i\in[m]$ we have $\alpha_j=1$ for all $j\in S_i$, then $\sum_{j\in S_i}v_j=0$ and we are done.
Similarly if for some $i\in[m]$ we have $\alpha_j=2$ for all $j\in S_i$, then $\sum_{j\in S_i}v_j=\sum_{j\in S_i}2\cdot v_j=0$ and we are also done.

Now assume for every $i\in[m]$, there is some $j,j'\in S_i$ satisfying $\alpha_j=1$ and $\alpha_{j'}=2$.
For every $i\in[m]$, rearrange \Cref{eq:lem:subset-sum_012_to_01_1} as
\begin{equation}\label{eq:lem:subset-sum_012_to_01_2}
u_i+2\cdot u_i'=0
\quad\text{where $u_i=\sum_{j\in S_i\colon\alpha_j=1}v_j$ and $u_i'=\sum_{j\in S_i\colon\alpha_j=2}v_j$.}
\end{equation}
By our assumption, each $u_i,u_i'$ is nontrivial.
Now we run $\calA$ on $u_1',\ldots,u_m'$ and obtain a nontrivial $(\pm1)$-zero-sum
$$
\sum_{i\in S}u_i'-\sum_{i\in S'}u_i'=0,
$$
where $S,S'\subseteq[m]$ are disjoint and not both empty.
With \Cref{eq:lem:subset-sum_012_to_01_2}, we obtain the following zero-sum
$$
\sum_{i\in S}u_i'+\sum_{i\in S'}u_i+\sum_{i\in S'}u_i'=\left(\sum_{i\in S}u_i'-\sum_{i\in S'}u_i'\right)+\sum_{i\in S'}(u_i+2\cdot u_i')=0,
$$
which is a desired nontrivial $(0,1)$-zero-sum by expanding each $u_i,u_i'$.
The runtime is obvious from the procedure above.
\end{proof}

A worst-case $(\pm1)$-zero-sum algorithm is provided in \Cref{sec:halving_trick}.
To apply \Cref{eq:lem:subset-sum_012_to_01_1}, it remains to obtain $(0,1,2)$-zero-sums.
Intuitively $0,1,2$ is just a additive shift of $-1,0,1$, which suggests to use $(\pm1)$-zero-sum algorithms to find $(0,1,2)$-zero-sums.
This is formalized as the following lemma.

\begin{lemma}\label{lem:subset-sum_012_from_pm1}
Let $q\ge3$ be a prime. Suppose there is a deterministic algorithm $\calA$, given $m$ input vectors in $\F_q^n$ and running in time $T$, that outputs a nontrivial $(\pm1)$-zero-sum.

Let $\eps\in(0,1]$ be arbitrary and define $d=\lceil\log(q/\eps)\rceil$.
Then there is a deterministic algorithm $\calB$, given $dm+1$ input vectors in $\F_q^n$ and running in time $dT+\poly(m,q,\log(1/\eps))$, that outputs a nontrivial $(0,1,2)$-zero-sum with probability at least $1-\eps$ over uniform random inputs.
\end{lemma}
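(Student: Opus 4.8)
The plan is to manufacture, from the $(\pm 1)$-zero-sum oracle $\calA$, a nontrivial assignment $\alpha\in\{0,1,2\}^{dm+1}$ with $\sum_j\alpha_jv_j=0$. Writing $\alpha_j=1$ on a set $U$ and $\alpha_j=2$ on a disjoint set $U'$, this is the same as producing disjoint sets $U,U'$, not both empty, with $\sum_{j\in U}v_j+2\sum_{j\in U'}v_j=0$; this is the normal form $u+2u'=0$ that \Cref{lem:subset-sum_012_to_01} consumes.

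First observe that whenever we run $\calA$ on a batch of $m$ of the vectors, the returned nontrivial $(\pm1)$-zero-sum $\sum_{j\in P}v_j=\sum_{j\in N}v_j$ (disjoint $P,N$, not both empty) \emph{already} finishes the job in several degenerate cases: if $N=\emptyset$ then $\sum_{j\in P}v_j=0$ is a $(0,1)$-zero-sum; symmetrically if $P=\emptyset$; and if the common value $w:=\sum_{j\in P}v_j$ equals $0$ then $\sum_{j\in P}v_j=0$ works again (with $P\neq\emptyset$). So we may assume every call produces a nonzero "reducible direction" $w$ together with the fact that $0\cdot w$, $1\cdot w$, $2\cdot w$ are each realizable as a subset-sum within that batch, via $\emptyset$, $P$, and $P\cup N$ respectively.

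I would then run $\calA$ in $d=\lceil\log(q/\eps)\rceil$ rounds to build up, for a single reducible direction $z$, a pool of used vectors realizing $\{0\cdot z,1\cdot z,\ldots,R\cdot z\}$ as subset-sums, with the "reach" $R$ growing by a constant factor whenever the round succeeds. Round $1$ runs $\calA$ on the first fresh batch, giving $z:=w$ with $R=2$. In round $i\ge 2$, having $R\cdot z$ realized by a subset of the already-used vectors, I run $\calA$ on the next fresh batch of $m$ vectors (together, if needed, with that subset) in a way designed so that a non-degenerate output yields a reducible vector equal to $R\cdot z$ — morally "the top of the current reach, $R\cdot z$, played against a fresh subset-sum of the same value" — after which combining the old range $\{0,\ldots,R\}\cdot z$ with the new $\{0,1,2\}$-multiple triples the reach to $\approx 3R$. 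Once the reach exceeds $q$, the subset realizing "$q\cdot z=0$" is a nonempty subset-zero-sum, hence in particular a nontrivial $(0,1,2)$-zero-sum, and we output it together with whatever $2\cdot$ coefficients accumulated along the way, being careful (as in \Cref{cor:iterate_sis_halving}) that the output is genuinely nontrivial.

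For the probability bound: each round either outputs a $(0,1,2)$-zero-sum outright (a degenerate $\calA$-output, so we are done), or it triples the reach, or it "wastes" a batch; and the last event occurs only when a fixed, efficiently checkable condition holds for the fresh batch, which for uniformly random vectors has probability at most $\tfrac12$, independently across the (disjoint) batches. Since $O(\log q)$ successful rounds push the reach past $q$, the algorithm fails only when fewer than that many of the $d=\lceil\log(q/\eps)\rceil$ rounds succeed, which a Chernoff bound (or a direct count over failure patterns) shows happens with probability at most $\eps$. The runtime is $d$ calls to $\calA$ plus $\poly(m,q,\log(1/\eps))$ bookkeeping — and the $\poly(q)$, rather than $\poly(\log q)$, term is unavoidable because the reach and the subset-sum representations track integers up to $q$. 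The main obstacle is precisely the per-round step: arranging the call to $\calA$ so that a non-degenerate output genuinely extends the reach along the \emph{same} direction $z$, and isolating the $\tfrac12$-probability bad event; getting the accumulated $\{0,1,2\}$-bookkeeping right so that the final object is a nontrivial $(0,1,2)$-zero-sum is the secondary thing to watch, paralleling the nontriviality arguments in \Cref{cor:iterate_sis_halving} and \Cref{lem:subset-sum_012_to_01}.
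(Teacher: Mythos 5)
Your proposal takes a genuinely different route from the paper, but it has two real gaps that you yourself partly flag.

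First, the central mechanism is left unresolved. You want to run $\calA$ in round $i$ "in a way designed so that a non-degenerate output yields a reducible vector equal to $R\cdot z$," extending a reach along a \emph{fixed} direction $z$. But $\calA$ only returns a nontrivial $(\pm1)$-\emph{zero}-sum of whatever vectors you feed it; there is no obvious way to coax it into outputting something aligned with a previously chosen $z$, and you explicitly say "the main obstacle is precisely the per-round step." That is the heart of the lemma and cannot be elided. The paper's mechanism is different and resolves exactly this: it fixes $v^*=-\sum_i v_i$, writes each $v_i = c_i v^* + w_i$ with $w_i$ in a complement of $\spn\{v^*\}$, and runs $\calA$ on the $w_i$'s in each of $d$ disjoint batches. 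Because the output is a $(\pm1)$-zero-sum of the $w_i$'s, the corresponding combination of the original $v_i$'s automatically lands on the line $\F_q\cdot v^*$, giving a coefficient $\beta_j\in\F_q$. The alignment with a fixed direction is thus built in, not engineered per round.

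Second, the probability analysis does not close. You claim each round triples the reach with success probability $\tfrac12$, need reach $>q$ (so about $\log_3 q$ successes), and have $d=\lceil\log(q/\eps)\rceil$ rounds. With constant $\eps$, the expected number of successes is about $\tfrac12\log_2 q \approx 0.5\log_2 q$, which is \emph{less} than the required $\log_3 q \approx 0.63\log_2 q$; a Chernoff bound would then show failure is \emph{likely}, not that it happens with probability at most $\eps$. The paper sidesteps any multiplicative-growth requirement entirely: the $\beta_j$'s are independent and uniform on $\F_q$ (a consequence of the $v_i$ being i.i.d.\ uniform and of the projection trick), and one only needs \emph{some} nonempty $T\subseteq[d]$ with $\sum_{j\in T}\beta_j=1$, found by dynamic programming. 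The existence is shown by a second-moment argument over the $2^d-1$ pairwise-independent indicators, giving failure probability at most $q/2^d \le \eps$ — exactly why $d=\lceil\log(q/\eps)\rceil$ is the right choice. Your plan as written neither specifies the round mechanism nor supports the claimed error bound.
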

\begin{proof}
Let $v_1,\ldots,v_{dm+1}$ be the input vectors and define $v^*=-\sum_{i\in[dm+1]}v_i$.
If $v^*=0$, then we are done, as $\sum_{i\in[dm+1]}v_i=-v^*=0$ is a trivial $(0,1,2)$-zero-sum.
Now we assume $v^*\ne0$.
If we can find a $(\pm1)$-sum of $v_1,\ldots,v_{dm}$ that equals $v^*$, then by expanding $v^*$ and rearranging, we obtain a $(0,1,2)$-zero-sum. Note that such a $(0,1,2)$-zero-sum is nontrivial since $v_{dm+1}$ always has coefficient $1$.

To find the target $(\pm1)$-sum, we write $\F_q^n=\spn\{v^*\}\oplus W$ where $W$ has dimension $n-1$.
Let $i^*\in[n]$ be a nonzero coordinate of $v^*$, i.e., $v^*[i^*]\ne0$.
For each $i\in[dm]$, we express $v_i$ as $c_iv^*+w_i$ where $c_i=v_i[i^*]/v^*[i^*]\in\F_q$ and $w_i=v_i-c_iv^*\in W$.
Since $v_1,\ldots,v_{dm+1}$ are independent and uniform, we know $c_1,\ldots,c_{dm}$ are independent and uniform even when we condition on $v^*$.

Now we divide $v_1,\ldots,v_{dm}$ into $d$ batches $S_1,\ldots,S_d$ of $m$ vectors. 
On each batch $j\in[d]$, we run $\calA$ on $w_i,i\in S_j$ to obtain a nontrivial $(\pm1)$-zero-sum in time $T$:
$$
\sum_{i\in S_j}\alpha_iw_i=0
\quad\text{where $\alpha_i\in\{0,\pm1\}$ are not all zero.}
$$
This means $\sum_{i\in S_j}\alpha_iv_i=\beta_jv^*$ where $\beta_j=\sum_{i\in S_j}\alpha_ic_i$.
Suppose there is some set $T\subseteq[d]$ that $\sum_{j\in T}\beta_j=1$. We can use the textbook dynamic programming algorithm to find such a set in time $\poly(q,d)$, then obtain a desired $(\pm1)$-sum
$$
\sum_{j\in T}\sum_{i\in S_j}\alpha_iw_i=v^*.
$$

It remains to show such a $T$ exists with probability $1-\eps$.
Since $\alpha_i,i\in S_j$ are not all zero and the $c_i$'s are independent and uniform, $\beta_1,\ldots,\beta_d\in\F_q$ are independent and uniform.
For each $\emptyset\ne T\subseteq[d]$, define $X_T\in\{0,1\}$ as the indicator that $\sum_{j\in T}\beta_j=1$.
Let $X=\sum_{\emptyset\ne T\subseteq[d]}X_T$.
Then we have $\E[X]=(2^d-1)/q$ and $\Var[X]=(2^d-1)(q^{-1}-q^{-2})$, since $\E[X_T]=1/q$ and the $X_T$'s are pairwise independent.
In addition, since 
$$
\Var[X]=\sum_{k\ge0}\Pr[X=k]\cdot(k-\E[X])^2\ge\Pr[X=0]\cdot\E[X]^2,
$$
we have 
$$
\Pr[\text{such a $T$ does not exist}]
=\Pr[X=0]\le\frac{\Var[X]}{\E[X]^2}
=\frac{q-1}{2^d-1}\le\frac q{2^d}\le\eps.
$$
This completes the proof.
\end{proof}

At this point, we are ready to present our algorithm that finds a subset-zero-sum in random inputs.

\begin{theorem}\label{thm:subset-sum_avg}
Let $q\ge5$ be a prime and define $k$ the unique integer power of $2$ with $q/4<k\le\lfloor q/2\rfloor$.
Let $\eps\in(0,1]$ be arbitrary and let $1\le r\le n$ be an arbitrary integer.
Given 
$$
m\ge\left(\frac{1-q^{-1}}{1-q^{-r}}\cdot\frac{n(n+1)}2+r(n+1)\right)^k\cdot O\left(\log(q)\log(1/\eps)\right)
$$
uniform random vectors in $\F_q^n$, with probability at least $1-\eps$ we can find in deterministic $\poly(m,q)$ time a nontrivial $(0,1)$-zero-sum.
\end{theorem}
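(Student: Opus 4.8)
The plan is to chain together three ingredients already established: the worst-case $(\pm1)$-zero-sum algorithm of \Cref{thm:halving_sis}, the average-case $(0,1,2)$-zero-sum finder of \Cref{lem:subset-sum_012_from_pm1}, and the assembly step of \Cref{lem:subset-sum_012_to_01} that turns many pairwise-disjoint $(0,1,2)$-zero-sums into a single $(0,1)$-zero-sum. The key observation making the first one usable is that the prescribed $k$ satisfies $q/4 < k \le \lfloor q/2\rfloor$, which forces $1 \le q/(2k) < 2$ and hence $\lfloor q/(2k)\rfloor = 1$; since $k$ is moreover a power of $2$ with $2 \le k \le \lfloor q/2\rfloor$ (using $q \ge 5$), \Cref{thm:halving_sis} provides a deterministic worst-case $(\pm1)$-zero-sum algorithm $\calA$ operating on $M := \bigl(\frac{1-q^{-1}}{1-q^{-r}}\cdot\frac{n(n+1)}{2} + r(n+1)\bigr)^{k/2}$ vectors of $\F_q^n$ in time $\poly(M,\log q)$. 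Note that $M^2$ is precisely the base of the bound claimed in the theorem.

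Feeding $\calA$ into \Cref{lem:subset-sum_012_from_pm1} with a \emph{constant} error parameter $\eps' := 1/4$ (so that $d := \lceil\log(q/\eps')\rceil = O(\log q)$) produces a deterministic algorithm $\calB$ that takes $dM+1$ vectors, runs in $\poly(M,q)$ time, and on uniformly random inputs outputs a nontrivial $(0,1,2)$-zero-sum supported within its own input with probability at least $3/4$. I would partition the $m$ given vectors into $B := 2M + C\lceil\log(1/\eps)\rceil$ disjoint batches of $dM+1$ vectors each, for a suitable absolute constant $C$, and run $\calB$ once per batch. Disjointness of the batches makes the $(0,1,2)$-zero-sums coming from successful batches pairwise disjoint, and independence of the batches (they are made of fresh uniform vectors) makes the $B$ success events independent, each of probability at least $3/4$. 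Once at least $M$ batches succeed, I would select any $M$ of the resulting disjoint $(0,1,2)$-zero-sums and hand them, together with $\calA$, to \Cref{lem:subset-sum_012_to_01}, which returns a nontrivial $(0,1)$-zero-sum of the original vectors. Given the random input, the whole procedure is deterministic, and it costs $B$ runs of $\calB$ plus one invocation of \Cref{lem:subset-sum_012_to_01}, i.e.\ $\poly(m,q)$ time.

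What is left is the probability bound and the accounting of $m$, and this is the only delicate point. The number of successful batches stochastically dominates a $\mathrm{Binomial}(B,3/4)$ variable, whose mean exceeds $M$ by $\tfrac12 M + \tfrac34 C\lceil\log(1/\eps)\rceil = \Omega(\log(1/\eps))$, so for a large enough constant $C$ a Chernoff bound gives that fewer than $M$ batches succeed with probability at most $\eps$. For the number of vectors, $B(dM+1) = O\bigl((M+\log(1/\eps))\,M\log q\bigr) = O\bigl(M^2\log q + M\log q\log(1/\eps)\bigr) \le M^2\cdot O(\log q\log(1/\eps))$, where we may assume $\eps \le 1/2$ without loss of generality; since $M^2 = \bigl(\frac{1-q^{-1}}{1-q^{-r}}\cdot\frac{n(n+1)}{2} + r(n+1)\bigr)^{k}$, this matches the hypothesis $m \ge M^2\cdot O(\log q\log(1/\eps))$. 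The main obstacle is exactly this amplification: a plain union bound over $M$ batches would require per-batch error $\eps/M$, forcing $d = \lceil\log(qM/\eps)\rceil$ whose $\log M = \Theta(k\log n)$ term can dominate $\log q\log(1/\eps)$ and break the stated bound; taking $\eps'$ constant and spending only $O(M+\log(1/\eps))$ batches circumvents this, and the rest is a direct composition of \Cref{thm:halving_sis}, \Cref{lem:subset-sum_012_from_pm1}, and \Cref{lem:subset-sum_012_to_01}.
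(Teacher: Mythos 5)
Your proof is correct and follows essentially the same route as the paper: verify $\lfloor q/(2k)\rfloor = 1$, get a worst-case $(\pm1)$-zero-sum finder from \Cref{thm:halving_sis} on $M = \bar m$ vectors, run \Cref{lem:subset-sum_012_from_pm1} with a constant error parameter on independent batches to harvest $\ge M$ disjoint $(0,1,2)$-zero-sums, and finish with \Cref{lem:subset-sum_012_to_01}. Your batch count $O(M + \log(1/\eps))$ is in fact a slightly tighter Chernoff accounting than the paper's $O(\bar m\log(1/\eps))$, but both give the same final bound.
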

\begin{proof}
For any real number $x\ge1$, there is a unique integer power of $2$ that is greater than $x$ and at most $\lfloor 2x\rfloor$.
This shows the existence and uniqueness of our choice of $k$, which also guarantees $\lfloor q/(2k)\rfloor=1$.
Let 
$$
\bar m=\left(\frac{1-q^{-1}}{1-q^{-r}}\cdot\frac{n(n+1)}2+r(n+1)\right)^{k/2}
$$
be the bound in \Cref{thm:halving_sis} such that we can always efficiently find a nontrivial $(\pm1)$-zero-sum in $\bar m$ input vectors.

Let $d=O\left(\log q\right)$.
Then by \Cref{lem:subset-sum_012_from_pm1}, with probability $0.99$, we can find a nontrivial $(0,1,2)$-zero-sum in $d\cdot\bar m+1$ uniform random input vectors.
Hence by standard concentration, with probability at least $1-\eps$ we can find $\bar m$ disjoint nontrivial $(0,1,2)$-zero-sums if we run the above procedure on $O(\bar m\log(1/\eps))$ many independent batches of $d\cdot\bar m+1$ uniform random input vectors.
Finally we apply \Cref{lem:subset-sum_012_to_01} on these $(0,1,2)$-zero-sums to obtain a nontrivial $(0,1)$-zero-sum.

The total number of input vectors is $O(\bar m\log(1/\eps))\cdot(d\cdot\bar m+1)$ as claimed.
\end{proof}

The reduction in \cite{II24} produces a bound of $m\ge q^{O(q\log^2q)}n^{O(q\log q)}$, whereas our \Cref{thm:subset-sum_avg} produces $m\ge n^{O(q)}$.

\subsection{Simple \texorpdfstring{$\CIS$}{CIS} algorithms: large primes}\label{sec:simple_cis}

Finally we use the simple tools developed so far to give simple algorithms for the $\CIS$ problem and prove \Cref{thm:simple_cis_lev_intro}.
This will already dequantize, and in fact greatly improve, \Cref{thm:CLZCIS}, when the field size $q$ grows.

Recall that the $\CIS$ problem will only allow coefficients from some allowed set $A\subseteq\F_q$.
For consistency with previous sections, we use $A$-sum to denote a vector sum whose coefficients are all from $A$. We say it is $A$-zero-sum if the vector sum also equals $0$.
Then a solution to the $\CIS$ problem is equivalent to a nontrivial $A$-zero-sum.

For any $a\in\F_q$, we use $aA$ to denote the set $\{aa'\colon a'\in A\}$ and use $A+a$ to denote the set $\{a+a'\colon a'\in A\}$.

The following reduction is a natural generalization of \Cref{lem:subset-sum_012_from_pm1}. In particular, \Cref{lem:subset-sum_012_from_pm1} is equivalent to setting $A=\{0,\pm1\},a=1,b=1$ in \Cref{lem:cis_worst-avg}.

\begin{lemma}\label{lem:cis_worst-avg}
Let $q\ge3$ be a prime and let $A\subseteq\F_q$ be nonempty. Suppose there is a deterministic algorithm $\calA$, given $m$ input vectors in $\F_q^n$ and running in time $T$, that outputs a nontrivial $A$-zero-sum.\footnote{Since $\calA$ works in the worst case and outputs nontrivial $A$-zero-sum, this immediately forces $0\in A$. Hence we do not need to explicitly assume $0\in A$ in the above statement.}

Let $\eps\in(0,1]$ be arbitrary and define $d=\lceil\log(q/\eps)\rceil$.
Fix arbitrary $a\in\F_q\setminus\{0\},b\in\F_q$ and define $B=aA+b$.
Then there is a deterministic algorithm $\calB$, given $dm+1$ input vectors in $\F_q^n$ and running in time $dT+\poly(m,q,\log(1/\eps))$, that outputs a nontrivial $B$-zero-sum with probability at least $1-\eps$ over uniform random inputs.
\end{lemma}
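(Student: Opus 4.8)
The plan is to mimic the proof of \Cref{lem:subset-sum_012_from_pm1}, of which this is the direct generalization, carrying the affine change of variables $\alpha\mapsto a\alpha+b$ through every step. Write $c=ba^{-1}\in\F_q$, so that $B=aA+b=a(A+c)$, and set $v^*=-\sum_{i\in[dm+1]}v_i$. Substituting $\beta_i=a(\alpha_i+c)$ with $\alpha_i\in A$ converts the goal ``find a nontrivial $B$-zero-sum $\sum_i\beta_iv_i=0$'' into ``find $\alpha_1,\dots,\alpha_{dm+1}\in A$ with $\sum_i\alpha_iv_i=cv^*$, not all $\alpha_i$ equal to $-c$''. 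Two cases I would dispose of immediately: if $b=0$ then a $B$-zero-sum is merely a rescaled $A$-zero-sum, so $\calB$ just runs $\calA$ and multiplies the output coefficients by $a$ (no randomness needed); and if $v^*=0$ then $\sum_iv_i=0$, so any constant assignment $\alpha_i\equiv a_0$ works, and we pick $a_0\in A$ with $a_0\ne-c$ to keep it nontrivial (possible as soon as $|A|\ge2$). Henceforth assume $b\ne0$ (hence $c\ne0$), $v^*\ne0$, and $0\in A$.

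For the main case I would reuse the projection trick of \Cref{lem:subset-sum_012_from_pm1}. Fix a coordinate $i^*$ with $v^*[i^*]\ne0$, write $\F_q^n=\spn\{v^*\}\oplus W$ with $\dim W=n-1$, and decompose each $v_i=\lambda_iv^*+w_i$ with $\lambda_i=v_i[i^*]/v^*[i^*]\in\F_q$ and $w_i\in W$. Split $v_1,\dots,v_{dm}$ into $d$ batches $S_1,\dots,S_d$ of $m$ vectors each and leave $v_{dm+1}$ aside. Run $\calA$ on $\{w_i:i\in S_j\}$ to obtain a nontrivial $A$-zero-sum $\sum_{i\in S_j}\gamma_iw_i=0$; then $\sum_{i\in S_j}\gamma_iv_i=\mu_jv^*$ with $\mu_j=\sum_{i\in S_j}\gamma_i\lambda_i\in\F_q$. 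Using the textbook subset-sum dynamic program over $\F_q$, search for a nonempty $T\subseteq[d]$ with $\sum_{j\in T}\mu_j=c$. If one is found, set $\alpha_i=\gamma_i$ for $i$ in the batches indexed by $T$ and $\alpha_i=0$ for every other $i\in[dm+1]$ (here is where $0\in A$ is used); then $\sum_i\alpha_iv_i=\bigl(\sum_{j\in T}\mu_j\bigr)v^*=cv^*$, so $\beta_i=a(\alpha_i+c)$ lies in $B$ for each $i$ and satisfies $\sum_i\beta_iv_i=0$. This is nontrivial because $v_{dm+1}$ receives coefficient $\beta_{dm+1}=a(0+c)=ac=b\ne0$.

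The probabilistic analysis is the same as in \Cref{lem:subset-sum_012_from_pm1}. Conditioning on $v^*$, the vectors $v_1,\dots,v_{dm}$ are still uniform and independent (because $v^*$ also depends on the fresh vector $v_{dm+1}$), so after further conditioning on $\{w_i\}_{i\in[dm]}$ the scalars $\{\lambda_i\}_{i\in[dm]}$ are uniform and independent; since $\calA$ returns a \emph{nontrivial} zero-sum, each $\mu_j$ is a combination of the $\lambda_i$'s over a nonzero coefficient vector on the disjoint index set $S_j$, so $\mu_1,\dots,\mu_d$ are i.i.d.\ uniform in $\F_q$. Letting $X$ count the nonempty $T\subseteq[d]$ with $\sum_{j\in T}\mu_j=c$, pairwise independence of the corresponding indicators gives $\E[X]=(2^d-1)/q$ and $\Var[X]=(2^d-1)(q^{-1}-q^{-2})$, whence
$$\Pr[X=0]\le\frac{\Var[X]}{\E[X]^2}=\frac{q-1}{2^d-1}\le\frac{q}{2^d}\le\eps$$
by the choice $d=\lceil\log(q/\eps)\rceil$. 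The running time is $d$ invocations of $\calA$ (time $dT$) together with one change of basis, brute-force evaluation of the $\mu_j$'s, and the dynamic program over $\F_q$, all of which is $\poly(m,q,\log(1/\eps))$; this yields the claimed $dT+\poly(m,q,\log(1/\eps))$.

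The step I expect to require the most care is ensuring nontriviality, and in particular the corner case $0\notin A$, where the moves ``$\alpha_{dm+1}=0$'' and ``zero out an unused batch'' are unavailable. Here one is helped by the fact that when $0\notin A$ every $B$-zero-sum covering all $dm+1$ coordinates is automatically nontrivial, so only \emph{existence} needs to be argued; a fixed fallback element $a_0\in A$ can be distributed over the leftover and unused coordinates and the target $cv^*$ rebalanced accordingly, but getting this compatible with ``all coefficients in $A$'' is a small amount of extra bookkeeping — the one place where the clean ``shift by $b$'' picture has to be unpacked. The genuinely degenerate case $|A|=1$ (where $B$ is a single element, admitting a nontrivial solution only if $B\ne\{0\}$) is excluded outright. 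I expect that checking these corner cases, rather than the main probabilistic argument, is where the remaining effort lies.
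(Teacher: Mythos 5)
Your proposal follows essentially the same route as the paper's (very terse) proof: reduce to $a=1$ (you carry $a$ along explicitly, which is equivalent), form $v^*$ from the full input, peel off $v_{dm+1}$, project the remaining vectors onto a complement of $\spn\{v^*\}$, run $\calA$ batch by batch, use subset-sum dynamic programming on the resulting scalars $\mu_j$, and close with the second-moment bound $\Pr[X=0]\le\Var[X]/\E[X]^2\le q/2^d\le\eps$. The paper handles all of this in one line by saying ``the rest of the analysis is identical to the proof of Lemma~\ref{lem:subset-sum_012_from_pm1},'' and you have correctly expanded that one line. Your explicit treatment of the $v^*=0$ case (pick any $a_0\in A\setminus\{-c\}$) is slightly more careful than the paper's, which just inherits the $\{0,\pm1\}$ treatment.

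The one thing you flag that is worth flagging: when $0\notin A$ the batch construction genuinely breaks, because the coefficients on vectors in batches outside $T$ are $0\notin A$, so the resulting combination is not an $A$-sum of \emph{all} of $v_1,\dots,v_{dm}$. The paper's proof has exactly the same implicit reliance on $0\in A$ (it is inherited unexamined from Lemma~\ref{lem:subset-sum_012_from_pm1}, where $A=\{0,\pm1\}$), and the paper does not address it; it is simply never exercised because every downstream invocation of Lemma~\ref{lem:cis_worst-avg} uses an $A$ containing $0$ (e.g.\ $\{-s,\dots,s\}$, $\{0,1\}$, $\F_q\setminus\{\pm a_1,\dots,\pm a_k\}$). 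You are right that this is the delicate corner, and also right to be suspicious of your own fallback: assigning a nonzero constant $a_0$ to unused coordinates destroys the $W$-projection (those vectors contribute a nonzero $W$-component $a_0\sum_i w_i$), so ``rebalancing the target'' is not as small a patch as you suggest. But since you reproduce the paper's argument faithfully in the regime actually used, and you have correctly localized the only genuine subtlety, the proposal is a match for the paper's proof rather than a gap in it.
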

\begin{proof}
We safely assume $a=1$ since any nontrivial zero-sum with coefficients in $A+(b/a)$ can be dilated, by multiplying every coefficient with $a$, as a nontrivial zero-sum with coefficients in $a(A+(b/a))=B$.
If $b=0$, then we just run $\calA$ on $m$ input vectors to obtain a nontrivial $A$-zero-sum.
Now we assume $a=1$ and $b\ne0$.

Let $v_1,\ldots,v_{dm+1}$ be input vectors. We define $v^*=-b\sum_{i\in[dm+1]}v_i$.
It suffices to find an $A$-sum of $v_1,\ldots,v_{dm}$ that equals $v^*$, then by expanding $v^*$ and rearranging, we obtain a $B$-zero-sum.
Note that such a $B$-zero-sum is nontrivial since $v_{dm+1}$ has coefficient $b\ne0$.
The rest of the analysis is identical to the proof of \Cref{lem:subset-sum_012_from_pm1}. 
\end{proof}

Combining \Cref{lem:cis_worst-avg} with the $\SIS$ algorithms in \Cref{sec:halving_trick}, we can handle any allowed coefficient set with long arithmetic progression.

\begin{definition}[Arithmetic Progression (AP)]\label{def:AP}
Let $A\subseteq\F_q$ have size $c\ge2$. We say $A$ is a $t$-AP (equivalently, an AP of length $t$) if there are $x\in\F_q$ and $0\ne y\in\F_q$ such that $A=\{x,x+y,x+2y,\ldots,x+(t-1)y\}$.
We say $A$ contains a $t$-AP if some $A'\subseteq A$ is a $t$-AP.
\end{definition}

\begin{corollary}\label{cor:simple_cis_general}
Let $q\ge5$ be a prime and let $A\subseteq\F_q$ be nonempty.
Assume $A$ contains an AP of length $1+2\lfloor q/(2k)\rfloor$ where $k\ge2$ that is an integer power of $2$.
Let $r\ge1$ be an arbitrary integer.
Given
$$
m\ge1+\left(\frac{1-q^{-1}}{1-q^{-r}}\cdot\frac{n(n+1)}2+r(n+1)\right)^{k/2}\cdot\lceil\log(q/\eps)\rceil
$$
uniform random vectors in $\F_q^n$, with probability at least $1-\eps$ we can find in deterministic $\poly(m,q)$ time a nontrivial $A$-zero-sum.
\end{corollary}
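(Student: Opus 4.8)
The plan is to combine the $\SIS$ algorithm of \Cref{thm:halving_sis} with the affine-shift reduction of \Cref{lem:cis_worst-avg}. First I would observe that \Cref{thm:halving_sis}, applied with the given $k$ and $r$, gives a deterministic $\poly(\bar m,\log q)$-time algorithm that finds a nontrivial $(\pm\lfloor q/(2k)\rfloor)$-zero-sum among any $\bar m$ vectors in $\F_q^n$, where $\bar m = \bigl(\tfrac{1-q^{-1}}{1-q^{-r}}\cdot\tfrac{n(n+1)}{2}+r(n+1)\bigr)^{k/2}$. Writing $h = \lfloor q/(2k)\rfloor$, a $(\pm h)$-zero-sum is exactly an $A_0$-zero-sum for the symmetric interval $A_0 = \{-h,\dots,h\}$, which is an AP of length $2h+1 = 1+2\lfloor q/(2k)\rfloor$ with common difference $1$ centered at $0$.

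The key step is to match the hypothesized AP inside $A$ to $A_0$ via an affine map. By assumption there is $A' \subseteq A$ with $A' = \{x, x+y, \dots, x+(2h)y\}$ for some $x\in\F_q$ and $y\neq 0$. Taking the affine transformation $z \mapsto yz + (x+hy)$ sends the symmetric interval $\{-h,\dots,h\}$ bijectively onto $A'$. So I would invoke \Cref{lem:cis_worst-avg} with the base algorithm $\calA$ being the $(\pm h)$-zero-sum algorithm (which runs on $\bar m$ inputs), with the allowed set $A_0 = \{-h,\dots,h\}$, and with the shift parameters $a = y$ and $b = x + hy$, so that $B = aA_0 + b = A'$. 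Since $A' \subseteq A$, any nontrivial $B$-zero-sum is in particular a nontrivial $A$-zero-sum. Setting $d = \lceil \log(q/\eps)\rceil$ as in the lemma, the resulting algorithm $\calB$ takes $d\bar m + 1$ uniformly random input vectors and outputs a nontrivial $A$-zero-sum with probability at least $1-\eps$; this matches the claimed bound $m \ge 1 + \bar m \cdot \lceil\log(q/\eps)\rceil$. The runtime is $d\cdot\poly(\bar m,\log q) + \poly(\bar m, q, \log(1/\eps)) = \poly(m,q)$ as required.

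The only mild subtlety — and the one place requiring a sentence of care rather than a routine calculation — is checking that the AP hypothesis line up exactly: that $1 + 2\lfloor q/(2k)\rfloor$ is the length of the symmetric interval $\{-h,\dots,h\}$ and that $\lfloor q/(2k)\rfloor$ is indeed the weight bound delivered by \Cref{thm:halving_sis} for the given power-of-two $k$ with $k \le \lfloor q/2\rfloor$. Both hold by the stated hypotheses. Everything else is a direct instantiation, so I would keep the proof to a few lines.
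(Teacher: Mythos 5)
Your proposal is correct and takes essentially the same approach as the paper: invoke \Cref{thm:halving_sis} to get a worst-case $(\pm\lfloor q/(2k)\rfloor)$-zero-sum algorithm on $\bar m$ vectors, identify $\{-h,\dots,h\}$ with the hypothesized AP in $A$ via the affine map of \Cref{lem:cis_worst-avg}, and read off the sample bound $m \ge 1 + \bar m\lceil\log(q/\eps)\rceil$. Your write-up is a bit more explicit than the paper's one-line argument (spelling out the choice of $a,b$ and the identification with the AP endpoints), but it is the same reduction.
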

\begin{proof}
Firstly, a direct brute-force algorithm finds the AP in $A$ in time $\poly(q)$.
Then we can assume without loss of generality that $A$ is the AP.
By \Cref{lem:cis_worst-avg}, it suffices to have a worst-case algorithm for nontrivial $(\pm\lfloor q/(2k)\rfloor)$-zero-sums, for which we use \Cref{thm:halving_sis}.
\end{proof}

In the case of large $q$, we have known results in arithmetic combinatorics that guarantee long APs. The following fact is due to Lev \cite{Lev00} and, in \Cref{app:arith_comb}, we provide a short proof for completeness.

\begin{restatable}[\cite{Lev00}]{fact}{fctlevlongAP}\label{fct:lev_long_AP}
Let $q\ge5$ be a prime.
Let $A\subseteq\F_q$ be an arbitrary set of size $|A|\ge q-\log_4(q+2)$.
Then $A$ contains an AP of length $(q+1)/2$.
\end{restatable}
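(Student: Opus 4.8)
The plan is to pass to the complement $\overline{A} := \F_q \setminus A$, of size $s := |\overline A| \le \log_4(q+2)$ (equivalently $4^s \le q+2$), and reduce to a statement about \emph{dilates} of $\overline A$. Identify $\F_q$ with $\{0,1,\dots,q-1\}$ carrying its mod-$q$ cyclic structure, and call a set $\{u,u+1,\dots,u+\ell-1\}$ a \emph{cyclic interval of length $\ell$}. A length-$\ell$ AP $\{x+iy:0\le i\le \ell-1\}$ with $y\ne 0$ lies in $A$ iff it is disjoint from $\overline A$, iff the dilate $c\overline A$ (with $c:=y^{-1}$) is disjoint from the cyclic interval $cx+\{0,\dots,\ell-1\}$, iff $c\overline A$ is contained in its complementary cyclic interval, which has length $q-\ell$. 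Taking $\ell=(q+1)/2$, the fact thus reduces to:
\begin{quote}
$(\star)$\ \emph{there exists $c\in\F_q^\ast$ with $c\overline A$ contained in a cyclic interval of length $(q-1)/2$.}
\end{quote}
For $s\le 1$ this is vacuous, and for $s=2$ it holds by choosing $c$ so that the two points of $c\overline A$ are adjacent (legal as $q\ge 5$); so the real content is for $3\le s\le \log_4(q+2)$, which already forces $q$ to be fairly large.

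To establish $(\star)$ I would count, over all $c\in\F_q^\ast$ and all cyclic intervals $I$ of length $L:=(q+1)/2$, the pairs with $I\cap c\overline A=\emptyset$ — each such witness produces a length-$L$ AP in $A$ by undoing the dilation — and show this count $N$ is positive. Expanding $N=\sum_{c\ne0}\sum_I\prod_{a\in\overline A}\bigl(1-\indicator_I(ca)\bigr)$ by inclusion--exclusion over subsets $T\subseteq\overline A$ and running Fourier analysis on the group $\F_q$, the sum over $c$ contributes the indicator of $\sum_{a\in T}\xi_a a=0$ and the sum over $I$ the indicator of $\sum_{a\in T}\xi_a=0$; thus $T$ contributes a sum of products $\prod_{a\in T}\widehat{w}(\xi_a)$ over the subspace cut out by these two linear relations, which are \emph{independent} because the entries of $\overline A$ are distinct (the relevant $2\times|T|$ matrix is a non-degenerate Vandermonde-type matrix), so only $|T|-2$ Fourier variables are free. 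The empty-set term is the main term, heuristically $\approx q\cdot q\cdot(1-L/q)^s=\Theta(q^2/2^s)$, positive and dominant whenever $s\lesssim\log_2 q$. The delicate part is the error: using the sharp indicator of $I$ is too lossy since $\|\widehat{\indicator_I}\|_1=\Theta(\log q)$ only allows $s=O(1)$, so instead one replaces $\indicator_I$ by a non-negative Fej\'er/triangle-type weight $w$ supported on a cyclic interval of length $\le L$ that still detects a free interval but has $\widehat w\ge 0$ and $O(1)$ $\ell^1$-Fourier mass, after which Parseval together with Young's inequality bound the contribution of a $|T|$-element $T$ by (roughly) $\|\widehat w\|_2^2\cdot\|\widehat w\|_1^{|T|-2}$, so the whole error is about $q$ times $\sum_t\binom st C^{\,t}=O(C'^{\,s})$. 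The main term beats this exactly in the regime where $C''^{\,s}\lesssim q$, and a careful choice of the weight's support length (taken comparable to $L/2$) is what makes the decisive constant equal $4$, i.e.\ the hypothesis $4^s\le q+2$.

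The step I expect to be the genuine obstacle is precisely this error estimate — concretely, pushing the constant in the exponent down to $4$. It needs (i) the smoothed weight (a naive calculation with $\|\widehat{\indicator_I}\|_1$ yields only $s=O(1)$), (ii) the relation-aware bookkeeping exploiting distinctness of $\overline A$, so that each $T$ contributes a sum over an $(|T|-2)$-dimensional subspace rather than a free product of $|T|$ Fourier coefficients, and (iii) an optimization of the weight's parameters. (An alternative, more combinatorial ``rectification''-style route would process the points of $\overline A$ one at a time while redilating, but its obvious version loses a factor of $2$ per point and tolerates only $s=O(1)$, so it too must be sharpened in essentially the same way.) The remaining ingredients — the reduction to $(\star)$, the base cases $s\le 2$, the Fourier expansion, and checking that $q\ge 5$ makes all the inequalities close — are routine.
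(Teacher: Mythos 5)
Your reduction to the statement $(\star)$ (that some dilate $c\overline{A}$ fits in a cyclic interval of length $(q-1)/2$) is correct and is indeed the right way to think about the problem. However, as you write the proof, $(\star)$ itself is not established: you outline a Fourier-counting strategy with a smoothed indicator weight and explicitly flag that you have not carried out the error estimate (``The step I expect to be the genuine obstacle is precisely this error estimate — concretely, pushing the constant in the exponent down to $4$''). That is the substance of the statement, so the proposal as written has a genuine gap rather than being a complete alternative proof.

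The paper's actual proof of $(\star)$ is an elementary one-shot pigeonhole and is quite different from both your Fourier route and the ``process points one at a time'' rectification you considered and dismissed. After translating so $0\notin\overline A$, partition $\F_q\setminus\{0\}$ into four contiguous intervals $I_1,\dots,I_4$ of length $\le\lceil(q-1)/4\rceil$ each. To every nonzero dilator $s$ associate the signature $v^{(s)}\in\{0,1,2,3\}^c$ ($c=|\overline A|$) recording which quarter each $s\cdot a_i$ lands in. If some $s$ has a constant signature then $s\overline A$ sits inside a single quarter and we are done. Otherwise at most $4^c-4$ signature patterns occur, and since $q-1\ge 4^c-3$ (this is exactly where the hypothesis $c\le\log_4(q+2)$ enters), two distinct dilators $s,s'$ share a signature; setting $\bar s=s-s'$, every $\bar s\cdot a_i$ lies in some $I_b-I_b$, hence has absolute value at most $\lceil(q-1)/4\rceil-1$, so $\bar s\overline A$ fits in a short interval around $0$ and again $(\star)$ holds. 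This makes the constant $4$ come for free from the $4$-way partition, with no spectral estimates needed. Your Fourier program might well be completable, and the pointwise ``rectification'' route you mention really is too lossy — but the global pigeonhole on signatures is the clean mechanism you were looking for, and it sidesteps the obstacle entirely.
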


Putting together \Cref{fct:lev_long_AP} and \Cref{cor:simple_cis_general}, we obtain our simple $\CIS$ algorithms for large primes.

\begin{theorem}\label{thm:simple_cis_lev}
Let $q\ge5$ be a prime and let $A\subseteq\F_q$ be an arbitrary set of size $|A|\ge q-\log_4(q+2)$.
Let $r\ge1$ be an arbitrary integer.
Given
$$
m\ge1+\left(\frac{1-q^{-1}}{1-q^{-r}}\cdot\frac{n(n+1)}2+r(n+1)\right)\cdot\lceil\log(q/\eps)\rceil
$$
uniform random vectors in $\F_q^n$, with probability at least $1-\eps$ we can find in deterministic $\poly(m,q)$ time a nontrivial $A$-zero-sum.
\end{theorem}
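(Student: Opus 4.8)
The plan is to derive Theorem~\ref{thm:simple_cis_lev} as an immediate combination of the two ingredients that were just assembled: the arithmetic-combinatorics input \Cref{fct:lev_long_AP}, which guarantees that a large set $A\subseteq\F_q$ contains a long AP, and the reduction \Cref{cor:simple_cis_general}, which converts a bounded-weight zero-sum algorithm into an $A$-zero-sum algorithm whenever $A$ contains an AP of the right length. So the proof is short; the only real content is matching the parameters.

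First I would set $k=2$ in \Cref{cor:simple_cis_general}. This is a legal choice: $k=2$ is a positive-integer power of $2$, and it is $\ge2$, which is what the corollary requires; and $q\ge5$ is assumed. With $k=2$, the corollary needs $A$ to contain an AP of length $1+2\lfloor q/(2k)\rfloor = 1+2\lfloor q/4\rfloor$. Since $q$ is an odd prime $\ge5$, we have $\lfloor q/4\rfloor \le (q-1)/4$, hence $1+2\lfloor q/4\rfloor \le 1+(q-1)/2 = (q+1)/2$. Meanwhile \Cref{fct:lev_long_AP}, applicable because $|A|\ge q-\log_4(q+2)$ and $q\ge5$, hands us an AP inside $A$ of length exactly $(q+1)/2 \ge 1+2\lfloor q/4\rfloor$; any sub-AP of that gives an AP of the shorter length we need. (If one prefers to avoid the floor manipulation, note simply that an AP of length $(q+1)/2$ certainly contains an AP of length $1+2\lfloor q/4\rfloor$ since $(q+1)/2\ge 1+2\lfloor q/4\rfloor$ for all $q\ge5$.)

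Next I would plug $k=2$ into the sample-size bound of \Cref{cor:simple_cis_general}. That bound is
$$
m\ge 1+\left(\frac{1-q^{-1}}{1-q^{-r}}\cdot\frac{n(n+1)}2+r(n+1)\right)^{k/2}\cdot\lceil\log(q/\eps)\rceil ,
$$
and with $k=2$ the exponent $k/2$ becomes $1$, so the parenthesized expression appears to the first power — which is exactly the bound stated in \Cref{thm:simple_cis_lev}. Thus the $m$ required by \Cref{cor:simple_cis_general} coincides verbatim with the $m$ assumed in the theorem, and the corollary's conclusion (a deterministic $\poly(m,q)$-time algorithm producing a nontrivial $A$-zero-sum with probability at least $1-\eps$ over uniform random inputs) is exactly what we want. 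An $A$-zero-sum is, by definition, a nonzero $x\in A^m$ with $Hx=0$, i.e.\ a solution to $A$-$\CIS$, so we are done.

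**The main (very minor) obstacle** is purely bookkeeping: confirming that the AP length promised by Lev's fact, $(q+1)/2$, is at least the AP length $1+2\lfloor q/4\rfloor$ demanded by \Cref{cor:simple_cis_general} for $k=2$, and that the applicability hypothesis $k\le\lfloor q/2\rfloor$ implicit in \Cref{thm:halving_sis} (invoked inside the corollary) holds — here $k=2\le\lfloor q/2\rfloor$ since $q\ge5$. Both are routine inequalities for odd primes $q\ge5$. There is genuinely no new idea required in this step; it is a corollary-of-a-corollary whose role is to package the large-$q$ case of $\CIS$ in the cleanest possible form, showing the exponent on $n$ collapses to~$1$ (up to the $r(n+1)$ lower-order term one can make $O(n\log n)$ by taking $r=\lceil\log_q(n+1)\rceil$, though that refinement is not needed for the stated bound).
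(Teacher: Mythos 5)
Your proposal is correct and matches the paper's proof essentially verbatim: both set $k=2$ in \Cref{cor:simple_cis_general}, invoke \Cref{fct:lev_long_AP} to obtain an AP of length $(q+1)/2 \ge 1+2\lfloor q/4\rfloor$ inside $A$, and observe that the resulting bound on $m$ is exactly the one stated. Nothing to add.
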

\begin{proof}
By \Cref{fct:lev_long_AP}, $A$ contains an AP of length $(q+1)/2$.
Since $(q+1)/2\ge1+2\lfloor q/4\rfloor$, we set $k=2$ and apply \Cref{cor:simple_cis_general}.
\end{proof}

Finally we prove \Cref{thm:simple_cis_lev_intro}, which is a simple deduction from \Cref{thm:simple_cis_lev}.
\begin{proof}[Proof of \Cref{thm:simple_cis_lev_intro}]
Since $q>4^{k-1}$ and $k\ge2$, we know $q\ge5$ and the number of allowed coefficients is $q-k+1\ge q-\log_4(q+2)$.
Hence we set $r=1$ and $\eps\in(0,1]$ to be a small constant then apply \Cref{thm:simple_cis_lev}. This proves \Cref{thm:simple_cis_lev_intro}.
\end{proof}

%%%%%%%%%%%%%%%%%%%%%%%%%%%%%%%%%%%%%%%%%%%%%%%%%%%%%%%%%%%%%%%%%%%%%%%%%%%%%

\section{Beyond the halving trick}\label{sec:general_cis}

In this section, we expand our toolbox by developing more general reductions.

To give some motivation, consider the setting where we want to find a nontrivial $(\pm\lfloor q/6\rfloor)$-zero-sum.
So far we only have the halving trick, and thus we will pay a \emph{quartic} blowup by halving twice, which ends up providing a nontrivial $(\pm\lfloor q/8\rfloor)$-zero-sum and is more than we need.
Our new reductions in this section will allow us to pay a \emph{cubic} cost to find nontrivial $(\pm\lfloor q/6\rfloor)$-zero-sums.

\paragraph{Notation.}
We will generalize existing notions as follows.
For $A,B\subseteq\F_q$, we define $A-B=\{a-b\colon a\in A,b\in B\}$. We use $\pm A$ to denote $A\cup(-A)$.
We say sets $A_1,\ldots,A_k$ is a nonempty disjoint partition of $A$ if $A_1,\ldots,A_k$ are all nonempty and pairwise disjoint and $A=A_1\cup\cdots\cup A_k$.

\begin{definition}[General Vector Sum]\label{def:vector_sum_general}
Let $v_1,\ldots,v_m\in\F_q^n$ be vectors and let $H\subseteq\F_q$ be nonempty.
We say vector $u$ is an $H$-sum of the $v_i$'s if we can choose $\alpha_i\in H$ for each $i\in[m]$ such that $u=\sum_{i\in[m]}\alpha_iv_i$; and it is an $H$-zero-sum if $u=0$. 

We say $u$ is $s$-sparse if it has at most $s$ nonzero coefficients.
Let $\emptyset\ne A\subseteq\F_q\setminus\{0\}$.
We say $u$ is $A$-nontrivial if for any $a\in A$ there exists some $i\in[m]$ such that $\alpha_i\in\{a,-a\}$.
\end{definition}

In comparison with \Cref{def:vector-sum_sis}, a $(\pm h)$-sum is equivalent to an $H$-sum where $H=\{-h,-h+1,\ldots,h\}$ and a nontrivial sum is equivalent to an $A$-nontrivial sum for some $\emptyset\ne A\subseteq\F_q\setminus\{0\}$.
In particular, a $(\pm\lfloor q/2\rfloor)$-zero-sum is an $\F_q$-zero-sum.

\begin{definition}[General Reducible Vector]\label{def:reducible_vector_general}
Let $v_1,\ldots,v_m\in\F_q^n$ be vectors and let $H,H'\subseteq\F_q$ and let $\emptyset\ne A\subseteq\F_q\setminus\{0\}$.
We say that vector $u$ is $(H\to H')$-reducible for the $v_i$'s if
\begin{itemize}
\item $u$ is a nontrivial $(\pm1)$-sum of the $v_i$'s (i.e., $u$ is a $\{1\}$-nontrivial $\{0,\pm1\}$-sum of the $v_i$'s);
\item for every $c\in H$, the vector $c\cdot u$ is an $H'$-sum of the $v_i$'s.
\end{itemize}

We say $u$ is $s$-sparse if there exists a set $S\subseteq[m]$ of size $|S|\le s$ such that $u$ is $(H\to H')$-reducible for $v_i,i\in S$.
Let $\emptyset\ne A\subseteq\F_q\setminus\{0\}$. We say $u$ is $A$-nontrivial if for every $c\in H$, $c\cdot u$ is an $A$-nontrivial $H'$-sum.
\end{definition}

In comparison with \Cref{def:reducible_sis}, an $(\pm h\to\pm h')$-reducible vector is a $(H\to H')$-reducible vector where $H=\{-h,-h+1,\ldots,h\}$ and $H'=\{-h',-h'+1,\ldots,h'\}$.

Whenever we construct a reducible vector $u$, we will provide a procedure to support efficient query: given $c\in H$, we can express $c\cdot u$ as an $H'$-sum (or $A$-nontrivial $H'$-sum) of the $v_i$'s in deterministic time $\poly(m,n,\log q)$. Hence to ease presentation, we will omit this detail and only refer to the construction of reducible vectors.

\subsection{Basic zero-sum algorithms for specific nontriviality}\label{sec:general_cis_basic_zero-sum}

Given \Cref{def:vector_sum_general}, it is natural to ask for basic algorithms that ensure the specific $A$-nontriviality.
It turns out that it is not too hard to modify the existing \Cref{lem:sis_halving_sparse} to obtain what we need.

\begin{lemma}\label{lem:general_cis_starting_point}
Let $\emptyset\ne A\subseteq\F_q\setminus\{0\}$ and let $r\ge1$ be an arbitrary integer.
Given $n+r|A|$ vectors in $\F_q^n$, we can find in deterministic $\poly(n,r,|A|,\log q)$ time an $s$-sparse $A$-nontrivial $\F_q$-zero-sum, where
$$
s=\frac{1-q^{-1}}{1-q^{-r}}\cdot n+r|A|.
$$
\end{lemma}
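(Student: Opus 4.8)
The plan is to adapt the proof of \Cref{lem:sis_halving_sparse} so that the resulting zero-sum is guaranteed to be $A$-nontrivial, i.e.\ for every $a\in A$ some coefficient lies in $\{a,-a\}$. First I would pull off, for each $a\in A$, a dedicated block of $r$ input vectors whose role is solely to witness the coefficient $a$ (or $-a$); this accounts for the $r|A|$ term, leaving $n$ vectors for the "main" construction. As in \Cref{lem:sis_halving_sparse}, I would apply an invertible linear transformation so that (after discarding linear-dependence redundancy) the $n$ main vectors become standard basis vectors $e_1,\dots,e_{\ell'}$ with $\ell'\le n$ equal to their rank, plus the remaining main vectors expressed in their span. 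The key observation is that standard basis vectors can cancel nonzero coordinates \emph{with arbitrary coefficients}, so once I have a single sparse relation to "seed" things, I can both zero it out and insert any desired coefficients on spare $e_i$'s --- but I must be careful that there are enough $e_i$'s left over to accommodate one coefficient-$a$ witness per $a\in A$, which is why I keep the witness blocks separate rather than stealing from the $e_i$'s.

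The core step is the sparse-relation search: among $r$ extra vectors $v_1,\dots,v_r$ lying in $\spn\{e_1,\dots,e_{\ell'}\}$, consider $v_\beta=\beta_1v_1+\cdots+\beta_rv_r$ for $\beta$ uniform in $\F_q^r\setminus\{0^r\}$. Exactly as in \Cref{lem:sis_halving_sparse}, each coordinate of $v_\beta$ is either identically zero or uniform on $\F_q$, so the expected number of nonzero coordinates is at most $\frac{1-q^{-1}}{1-q^{-r}}\cdot\ell'\le\frac{1-q^{-1}}{1-q^{-r}}\cdot n$, and a good $\beta\ne0^r$ can be found by the usual derandomized conditional-expectation argument in time $\poly(n,r)$ (with each coordinate of $\beta$ having at most $\min\{q,n+1\}$ relevant choices, keeping the enumeration $q$-independent). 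Cancelling the (at most $s'=\frac{1-q^{-1}}{1-q^{-r}}\cdot n$) nonzero coordinates of $v_\beta$ with the corresponding $e_i$'s gives a nontrivial $\F_q$-zero-sum using at most $r+s'$ vectors. Then, for each $a\in A$ for which no coefficient in $\{a,-a\}$ has yet appeared, I use one of that $a$'s reserved $r$-vector witness block: since those $r$ vectors also live in $\F_q^n$, I can find a nontrivial linear relation among them together with spare $e_i$'s that contributes a coefficient $a$ on some witness vector while remaining a zero-sum (absorbing the induced coordinates back onto standard basis vectors). Summing all these relations yields an $A$-nontrivial $\F_q$-zero-sum; counting, the support is at most $\frac{1-q^{-1}}{1-q^{-r}}\cdot n+r|A|$ as claimed.

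The step I expect to be the main obstacle is making the witness insertion \emph{simultaneously} achieve $A$-nontriviality for all of $A$ while controlling the sparsity at exactly $\frac{1-q^{-1}}{1-q^{-r}}\cdot n+r|A|$ rather than something larger: naively, each of the $|A|$ witness relations could itself touch up to $s'$ of the standard basis vectors, blowing up the support by a factor of $|A|$. The fix is to be economical --- reuse the \emph{same} $e_i$'s across witness relations (they can carry arbitrary $\F_q$ coefficients, so adding several relations that each touch $e_1,\dots,e_{\ell'}$ only changes the accumulated coefficient on each $e_i$, not the size of the support), so the standard-basis part of the support stays bounded by $s'=\frac{1-q^{-1}}{1-q^{-r}}\cdot n$ throughout, and only the $r|A|$ witness vectors are added on top. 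One must also check the final accumulated coefficients on the $e_i$'s are still legal elements of $\F_q$ (automatic, since we work over a field) and that $A$-nontriviality is not accidentally destroyed when later relations overwrite a coefficient --- but since we only ever \emph{add} coefficients to disjoint witness blocks and to the shared $e_i$'s, the witness coefficients $a$ on the witness vectors are never disturbed. The runtime bound $\poly(n,r,|A|,\log q)$ then follows from $|A|$ applications of the $\poly(n,r)$-time sparse-search subroutine plus $O(|A|)$ linear-algebra operations over $\F_q$.
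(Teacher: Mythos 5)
Your proposal takes a genuinely different route from the paper's, and the route has a gap that I do not see how to close. The paper's proof is a single sparse-search: it forms $r$ \emph{super-vectors} $u^{(i)} = a_1 v_1^{(i)} + \cdots + a_s v_s^{(i)}$ (one per block of $|A|$ extra vectors, with the elements of $A$ baked in as coefficients), runs the Lemma~\ref{lem:sis_halving_sparse}-style search over those $r$ super-vectors to find $u_\beta = \sum_i \beta_i u^{(i)}$ with at most $\frac{1-q^{-1}}{1-q^{-r}}\ell$ nonzero coordinates, and normalizes so that $\beta_j = 1$ for some $j$. Expanding $u_\beta$ in the original vectors, the coefficient of $v_k^{(j)}$ is exactly $a_k$, so every element of $A$ appears automatically; cancelling $u_\beta$'s nonzero coordinates against the $e_i$'s finishes the job in one shot, with total support $\le r|A| + \frac{1-q^{-1}}{1-q^{-r}}\cdot n$. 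The $A$-nontriviality is built into the sparse search, not patched on afterward.

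Your plan instead runs one sparse search for the ``main'' relation and then tries to attach a separate witness relation per $a\in A$. The decisive problem is the sparsity accounting. Each witness relation for $a$ must be a zero-sum of the form $a\cdot w + \sum_j d_j e_j + (\cdots)= 0$, and the set of $e_i$'s touched is determined by the coordinates of $a w$ (or, if you run another sparse search inside that block, by the support of that block's own $v_\beta$). Those supports are produced by \emph{independent} searches and there is no mechanism forcing them to coincide with the main relation's $s'$ coordinates; the claim that ``adding several relations that each touch $e_1,\dots,e_{\ell'}$ only changes the accumulated coefficient on each $e_i$, not the size of the support'' is true only if each relation's $e_i$-support is already contained in the running support, which is exactly what you cannot guarantee. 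The honest bound from your construction on the $e_i$-part is the \emph{union} of $|A|+1$ sparse supports, i.e.\ up to $\min(\ell, (|A|+1)s')$, not $s'$. There are also two smaller issues: (i) if you compute the rank $\ell'$ only of the $n$ ``main'' vectors, the witness vectors need not lie in $\spn\{e_1,\dots,e_{\ell'}\}$ and the cancellation step can fail to exist; and (ii) the $r$ extra vectors you feed to the main sparse search may not be available (if the $n$ main vectors happen to be linearly independent, there are none left over after pulling out the basis). Both of these could be repaired by computing the rank of all $n+r|A|$ inputs, but the union-of-supports problem remains, so the proposal does not establish the claimed sparsity $s = \frac{1-q^{-1}}{1-q^{-r}}\cdot n + r|A|$. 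The super-vector trick in the paper is the missing idea.
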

\begin{proof}
Denote $A=\{a_1,\ldots,a_c\}$ where $c=|A|$.
Let $\ell\le n$ be the rank of the input vectors. By an invertible linear transform and in $\poly(n,r,|A|)$ time, we can assume that we are given standard basis $e_1,\ldots,e_\ell$ and $v_1^{(i)},\ldots,v_c^{(i)},i\in[r]$ as $rc$ extra vectors that lie in the span of $e_1,\ldots,e_\ell$.

For each $i\in[r]$ we compute
\begin{equation}\label{eq:lem:general_cis_starting_point_1}
u^{(i)}=a_1v_1^{(i)}+\cdots+a_cv_c^{(i)}.
\end{equation}
Then by the same proof as \Cref{lem:sis_halving_sparse}, we can find $\beta_1,\ldots,\beta_r\in\F_q$ not all zero such that
\begin{equation}\label{eq:lem:general_cis_starting_point_2}
u_\beta=\beta_1u^{(1)}+\cdots+\beta_ru^{(r)}
\end{equation}
has at most $\frac{1-q^{-1}}{1-q^{-r}}\cdot\ell$ nonzero entries.

Since $\beta_1,\ldots,\beta_r$ are not all zero, we can safely assume $\beta_j=1$ for some $j\in[r]$.
Hence by substituting \Cref{eq:lem:general_cis_starting_point_1} into \Cref{eq:lem:general_cis_starting_point_2}, we know that $a_1,\ldots,a_c$ all appear as coefficients of $u_\beta$, which implies that $u_\beta$ is $A$-nontrivial.

Finally we use the standard basis $e_1,\ldots,e_\ell$ to cancel the nonzero entries of $u_\beta$, which provides the desired zero-sum.
\end{proof}

\subsection{From zero-sum to reducible vector}\label{sec:general_cis_zero-sum_to_reducible}

Similar to \Cref{lem:reducible_sis_halving}, we can obtain $(H\to H')$-reducible vectors from $H$-zero-sums.
However things are more complicated in the general setting.

We start with the simplest case that we partition $H$ into two parts and one of which becomes (part of) $H'$.
It is encouraged to keep in mind the example of \Cref{lem:reducible_sis_halving} where $H=\{-h,\ldots,h\}$ and $H'=\{-\lfloor h/2\rfloor,\ldots,\lfloor h/2\rfloor\}$.

\begin{lemma}\label{lem:general_cis_larger_k-simple}
Let $H\subseteq\F_q$ and let $\emptyset\ne A\subseteq\F_q\setminus\{0\}$.
Suppose there is a deterministic algorithm $\calA$, given $m$ vectors in $\F_q^n$ and running in time $T$, that outputs an $s$-sparse $A$-nontrivial $H$-zero-sum.

Let $H_0,H_1$ be a nonempty disjoint partition of $H$.
Assume $A_0:=(\pm A)\cap H_0$ and $A_1:=(\pm A)\cap H_1$ are nonempty.
Let $B_1\subseteq\F_q$ be arbitrary such that $H_1\subseteq\pm B_1$.
Define
$$
H'=(\pm H_0)\cup(B_1-B_1).
$$
Then there is a deterministic algorithm $\calB$, given $m$ vectors in $\F_q^n$ and running in time $T+\poly(m,n,\log q)$, that outputs an $s$-sparse $A_0$-nontrivial $(H_1\to H')$-reducible vector.
\end{lemma}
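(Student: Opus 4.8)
The plan is to imitate \Cref{lem:reducible_sis_halving}: run $\calA$ once on $v_1,\ldots,v_m$ to obtain an $s$-sparse $A$-nontrivial $H$-zero-sum $\sum_{i\in S}\alpha_i v_i=0$ with $|S|\le s$, and then build the reducible vector by a purely combinatorial manipulation of the coefficients, so that the overhead beyond the single call to $\calA$ is $\poly(m,n,\log q)$. First I would split the support as $S=S_0\sqcup S_1$ with $S_j=\{i\in S:\alpha_i\in H_j\}$, and for each $i\in S_1$ pick $\epsilon_i\in\{\pm1\}$ with $\beta_i:=\epsilon_i\alpha_i\in B_1$ (possible since $\alpha_i\in H_1\subseteq\pm B_1$). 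The candidate is $u=\sum_{i\in S_1}\epsilon_i v_i$, which is a $(\pm1)$-sum supported inside $S$, hence $s$-sparse with witness set $S$.

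The query procedure: given $c\in H_1$, write $c=\epsilon\gamma$ with $\epsilon\in\{\pm1\}$, $\gamma\in B_1$, and re-express, by subtracting $\epsilon$ times the zero-sum,
$$
c\cdot u \;=\; c\cdot u-\epsilon\sum_{i\in S}\alpha_i v_i \;=\; \sum_{i\in S_1}\epsilon\epsilon_i(\gamma-\beta_i)\,v_i \;+\; \sum_{i\in S_0}(-\epsilon\alpha_i)\,v_i ,
$$
using $c\epsilon_i-\epsilon\alpha_i=\epsilon\epsilon_i(\gamma-\beta_i)$ on $S_1$. Each $S_1$-coefficient lies in $\pm(B_1-B_1)=B_1-B_1$ (since $\gamma,\beta_i\in B_1$ and $B_1-B_1$ is symmetric) and each $S_0$-coefficient lies in $\pm H_0$, so $c\cdot u$ is an $H'$-sum of $\{v_i:i\in S\}$; the running time is $T$ plus $\poly(m,n,\log q)$, since the query only touches the $O(s)$ stored coefficients. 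For $A_0$-nontriviality of $c\cdot u$: given $b\in A_0=(\pm A)\cap H_0$, $A$-nontriviality of the zero-sum provides $j\in S$ with $\alpha_j\in\{b,-b\}$; such a coefficient lies in $H_0$, hence in the $S_0$ block, and then $-\epsilon\alpha_j\in\{b,-b\}$ is the coefficient of $v_j$. For nontriviality of $u$, i.e.\ $S_1\neq\emptyset$: pick $b\in A_1=(\pm A)\cap H_1$, a witness $i^*\in S$ with $\alpha_{i^*}\in\{b,-b\}$, and, after replacing the zero-sum by its negation if needed, arrange that $\alpha_{i^*}\in H_1$ so $i^*\in S_1$ and $v_{i^*}$ occurs in $u$ with coefficient $\epsilon_{i^*}\neq 0$.

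The part I expect to be the main obstacle is making the previous paragraph fully rigorous: one must track precisely which block $S_0$ or $S_1$ each coefficient lands in, verify that re-expressing $c\cdot u$ never absorbs a needed witness coefficient $\pm b$ into the $B_1-B_1$ block, and check that the normalization used to force $S_1\neq\emptyset$ is compatible with still having, for every $b\in A_0$, a witness coefficient inside $S_0$. This is exactly where the hypotheses that $A_0,A_1$ are nonempty and that $H'=(\pm H_0)\cup(B_1-B_1)$ are used, and where the interplay between $\pm$-signs and the partition $H=H_0\sqcup H_1$ must be handled with care (the cleanest setting being the one that arises downstream, where $H_0$ and $H_1$ are each symmetric).
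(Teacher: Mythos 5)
Your proof takes essentially the same approach as the paper's: one invocation of $\calA$, split the support into blocks $S_0,S_1$ by $H_0,H_1$, normalize signs so that $\epsilon_i\alpha_i\in B_1$ on $S_1$, set $u=\sum_{i\in S_1}\epsilon_i v_i$, and express $c\cdot u$ by subtracting a signed copy of the zero-sum; your explicit factorization $c=\epsilon\gamma$ with $\gamma\in B_1$ is just a cleaner spelling of the paper's ``by symmetry, assume $c\in B_1$.'' The nontriviality bookkeeping you flag as the main obstacle is treated equally tersely in the paper (it simply asserts that $S_0,S_1$ are nonempty because $A_0,A_1$ are, and that the $S_0$ block inherits $A_0$-nontriviality, without tracking which of $\{b,-b\}$ the witness coefficient equals); your observation that the downstream applications always take $H_0,\ldots,H_k$ symmetric, which neutralizes the sign ambiguity, matches the paper's actual usage.
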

\begin{proof}
Let $v_1,\ldots,v_m\in\F_q^n$ be the input vectors. We apply $\calA$ to obtain an $s$-sparse $A$-nontrivial $H$-zero-sum $\alpha_1v_1+\cdots+\alpha_mv_m$.
Let $S\subseteq[m]$ of size $|S|\le s$ be the nonzero coefficients.
Compute $S_0=\{j\in S\colon a_j\in H_0\}$ and $S_1=\{j\in S\colon a_j\in H_1\}$, which is a disjoint partition of $S$.
Moreover, since it is $A$-nontrivial and $A_0\subseteq H_0,A_1\subseteq H_1$ are nonempty, we know $S_0,S_1$ are nonempty.

For each $j\in S_1$, we define $\beta_j\in\{\pm1\}$ such that $\alpha_j\beta_j\in B_1$, i.e., $\beta_j=1$ if $\alpha_j\in B_1$; and $\beta_j=-1$ if $-\alpha_j\in B_1$.
Now we can rewrite the $A$-nontrivial $B$-zero-sum as
\begin{equation}\label{eq:lem:general_cis_larger_k-simple_1}
\sum_{j\in S_0}\alpha_jv_j+\sum_{j\in S_1}\alpha_j\beta_j\cdot\beta_jv_j=0.
\end{equation}
By our construction above, $\sum_{j\in S_0}\alpha_jv_j$ is an $A_0$-nontrivial $H_0$-sum and $\sum_{j\in S_1}\alpha_j\beta_j\cdot\beta_jv_j=\sum_{j\in S_1}\alpha_jv_j$ is an $A_1$-nontrivial $H_1$-sum.

At this point, we define
\begin{equation}\label{eq:lem:general_cis_larger_k-simple_2}
u=\sum_{j\in S_1}\beta_jv_j.
\end{equation}
We will prove that $u$ is a $A_0$-nontrivial $(H_1\to H')$-reducible vector.

As mentioned above, $\sum_{j\in S_1}\alpha_j\beta_j\cdot\beta_jv_j$ is an $A_1$-nontrivial $H_1$-sum; hence $u$ is a nontrivial $(\pm1)$-sum of $v_j,j\in S$.
Let $c\in H_1$ and, by symmetry, assume $c\in B_1$.
Then we have
\begin{align*}
c\cdot u
&=\sum_{j\in S_1}c\beta_jv_j-\left(\sum_{j\in S_0}\alpha_jv_j+\sum_{j\in S_1}\alpha_j\beta_j\cdot \beta_jv_j\right)
\tag{by \Cref{eq:lem:general_cis_larger_k-simple_1}}\\
&=\underbrace{-\sum_{j\in S_0}\alpha_jv_j}_P+\underbrace{\sum_{j\in S_1}(c-\alpha_j\beta_j)\cdot\beta_jv_j}_Q.
\tag{by \Cref{eq:lem:general_cis_larger_k-simple_2}}
\end{align*}
Note $P$ has coefficients in $\pm H_0\subseteq H'$ and is $A_0$-nontrivial; and $Q$ has coefficients in $B_1-B_1\subseteq H'$. 
Hence $u=P+Q$ is an $A$-nontrivial $(H\to H')$-reducible vector.
Since we only use vector in $S$, it is also $s$-sparse.
The runtime follows directly from the description above.
\end{proof}

In the following \Cref{lem:general_cis_larger_k}, we proceed to the general case, where we partition $H$ into more $k+1\ge3$ parts.
Analogously, it is helpful to keep in mind the example where $H=\{-h,\ldots,h\}$, $H_0=\{-\lfloor h/(k+1)\rfloor,\ldots,\lfloor h/(k+1)\rfloor\}$, and each $H_i,i\in[k]$ is an interval of length at most $\lceil h/(k+1)\rceil$.

We will need the following standard notions on permutations.
Let $k\ge1$ be an integer. We use $\calS_k\subsetneq[k]^k$ to denote the symmetric group on $[k]$ elements. Each $\pi=(\pi_1,\ldots,\pi_k)\in\calS_k$ defines the action $\pi(i)=\pi_i$ for all $i\in[k]$. We also define $\pi^{-1}(i)\in[k]$ to be the location of $i$ in $\pi$, i.e., $\pi_{\pi^{-1}(i)}=i$.
We use $\sgn(\pi)\in\{\pm1\}$ to denote the sign of the permutation $\pi$.

\begin{lemma}\label{lem:general_cis_larger_k}
Let $H\subseteq\F_q$ and let $\emptyset\ne A\subseteq\F_q\setminus\{0\}$.
Suppose there is a deterministic algorithm $\calA$, given $m(n)$ vectors in $\F_q^n$ and running in time $T(n)$, that outputs an $s(n)$-sparse $A$-nontrivial $H$-zero-sum.

Let $H_0,H_1,\ldots,H_k$ be a nonempty disjoint partition of $H$ where $k\ge2$.
Assume $A_i:=(\pm A)\cap H_i$ is nonempty for all $i=0,\ldots,k$.
For $i\in[k]$, let $B_i\subseteq\F_q$ be arbitrary such that $H_i\subseteq\pm B_i$.
Define
$$
H'=(\pm H_0)\cup\bigcup_{i\in[k]}(B_i-B_i).
$$
Define $m_1=1$ and, for $\ell=2,3,\ldots,k+1$,
\begin{align*}
m_\ell
&=(-1+m_{\ell-1})\cdot s(k^{k-\ell+1}n)+m(k^{k-\ell+1}n)\\
&=\prod_{j=1}^{\ell-1}s(k^{k-j}n)+\sum_{i=1}^{\ell-1}\left(m(k^{k-i}n)-s(k^{k-i}n)\right)\prod_{j=1}^{i-1}s(k^{k-j}n).
\end{align*}
Then there is a deterministic algorithm $\calB$, given $m_{k+1}$ vectors in $\F_q^n$ and running in time $T'$, that outputs an $s'$-sparse $A_0$-nontrivial $(H_1\cup\cdots\cup H_k\to H')$-reducible vector, where
$$
s'=\prod_{\ell=1}^ks(k^{k-\ell}n)
\qquad\text{and}\qquad
T'=\sum_{\ell=1}^km_\ell\cdot T(k^{k-\ell}n)+\poly(m_{k+1},n,k,\log q).
$$
\end{lemma}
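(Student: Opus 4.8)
The plan is to produce the desired $(H_1\cup\cdots\cup H_k\to H')$-reducible vector by a recursion on the number of ``active'' parts, peeling off one part at a time in the spirit of \Cref{lem:general_cis_larger_k-simple}, whose two-part version serves as the base case. The core step is the one already visible in \Cref{lem:general_cis_larger_k-simple}: run $\calA$ on a batch of vectors to get an $s$-sparse $A$-nontrivial $H$-zero-sum $\sum_j\alpha_jv_j=0$, split its support according to $H_0,H_1,\dots,H_k$ into $P_0+P_1+\cdots+P_k=0$, designate an index $\ell\in[k]$, and form the sign-normalized vector $u=\sum_{j:\alpha_j\in H_\ell}\beta_jv_j$ with $\beta_j\in\{\pm1\}$ chosen so that $\alpha_j\beta_j\in B_\ell$. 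Then for $c\in H_\ell$ one writes $c\cdot u=(cu-P_\ell)-P_0-\sum_{i\in[k]\setminus\{\ell\}}P_i$, where $cu-P_\ell$ has coefficients in $B_\ell-B_\ell\subseteq H'$ and $-P_0$ has coefficients in $\pm H_0\subseteq H'$.

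The whole difficulty, absent in \Cref{lem:general_cis_larger_k-simple}, is the cross terms $-P_i$ for $i\neq\ell$, whose coefficients lie only in $\pm B_i$ and need not sit inside $H'$. I would handle this by feeding $\calA$ not the raw inputs but vectors that are themselves $(\pm1)$-sums of raw vectors \emph{and} already reducible for the relevant sub-collection of parts; then each cross term $-P_i=-\sum_j\alpha_jv_j$ with $\alpha_j\in H_i$ can be re-expressed, via the reducibility of the $v_j$'s, as an $H'$-sum of raw vectors, using the symmetry $-H'=H'$. The subtlety is that this must work \emph{simultaneously} for every designated part $\ell\in[k]$ while producing a single vector $u$. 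The approach I would take for this is a signed combination over the symmetric group $\calS_k$: for each $\pi\in\calS_k$ one processes the parts in the order $\pi(1),\pi(2),\dots$ through $k-1$ nested rounds, and the combination $\sum_\pi\sgn(\pi)(\cdots)$ is arranged so that the ``diagonal'' contributions (designated part matching the round being handled) survive while the remaining cross contributions cancel in $\sgn$-paired fashion — the same cancellation pattern as in a cofactor expansion — leaving every coefficient of $c\cdot u$ inside $H'$ for every $c\in H_1\cup\cdots\cup H_k$. Performing the recursion in these $k-1$ rounds, with a fresh batch fed to $\calA$ at each round and $k$ disjoint coordinate copies introduced per round to keep the rounds independent, is what forces $\calA$ to be invoked at dimensions $k^{k-\ell}n$, multiplies sparsity to $s'=\prod_{\ell=1}^k s(k^{k-\ell}n)$ (one factor of $s$ per round), and yields the stated recursion $m_\ell=(m_{\ell-1}-1)\,s(k^{k-\ell+1}n)+m(k^{k-\ell+1}n)$ — essentially one fresh $\calA$-batch of $m(\cdot)$ vectors per round, of which the $s(\cdot)$-sparse output reuses only $s(\cdot)$, leaving $m_{\ell-1}-1$ further reducible pieces to be assembled from $s(\cdot)$ vectors each.

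Throughout I would carry the $A_0$-nontriviality exactly as in \Cref{lem:general_cis_starting_point} and \Cref{lem:general_cis_larger_k-simple}: since $A_0=(\pm A)\cap H_0$ is nonempty and each $\calA$-output is $A$-nontrivial, the designated-part support and the $H_0$-support are both nonempty, and one checks that the left-over term $P_0\in\pm H_0$ witnesses every element of $A_0$, so the constructed $u$ is $A_0$-nontrivial. The runtime $T'$ then comes out by summing the $m_\ell$ invocations of $\calA$ at the respective dimensions $k^{k-\ell}n$ plus polynomial overhead for the linear algebra, the sign choices, and the $|\calS_k|=k!$-sized combination. The main obstacle I expect is precisely this permutation-cancellation bookkeeping: verifying that no ``bad'' coefficient — one lying in some $\pm B_i$ but outside $H'$ — survives the signed combination, for every $c\in H_1\cup\cdots\cup H_k$ and every raw-vector expansion, and that the recursively built vectors are reducible for exactly the sub-collections needed at each round. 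Everything else (tracking sparsity, dimensions, the disjoint copies, and nontriviality through the recursion) is careful but routine accounting.
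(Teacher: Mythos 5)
Your plan is essentially the same as the paper's proof: $\calA$ is applied in $k$ rounds (not $k-1$ --- a minor off-by-one in your count), each round introducing a $k$-fold dimension blowup via disjoint coordinate copies, and the final reducible vector is taken as the determinant-style signed sum $\sum_{\pi\in\calS_k}\sgn(\pi)(\cdots)$, whose cross-terms pair up via $\sgn$-matched permutations so that each surviving coefficient lies in some $B_i-B_i$ rather than cancelling to zero. The paper realizes this bookkeeping as a depth-$(k+1)$ tree whose nodes carry tupled vectors $Q^z\in(\F_q^n)^{k\times\cdots\times k}$, which is precisely the concrete implementation of your ``disjoint coordinate copies per round,'' and its Claim on the decomposition $c\cdot u=L^*-L_0-L_1$ is the cofactor-style cancellation you sketch.
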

\begin{proof}
Let $v_1,\ldots,v_{m_{k+1}}$ be the input vectors.
We will build a forest of rooted trees of depth at most $k+1$ by iteratively applying $\calA$.
\begin{enumerate} 
\item\label{itm:lem:general_cis_larger_k_depth}
For each depth $\ell=1,2,\ldots,k+1$, there are exactly $m_\ell$ nodes.
The leaves are at depth $k+1$.
\item\label{itm:lem:general_cis_larger_k_leaf}
The $j$th leaf is simply denoted $j$.
We define $Q^j=v_j\in\F_q^n$ and denote $Q^j[\emptyset]=v_j$.
\item\label{itm:lem:general_cis_larger_k_node}
For depth $\ell=k,k-1,\ldots,1$, we first list all $m_{\ell+1}$ nodes, denoted $x_1,\ldots,x_{m_{\ell+1}}$, at depth $\ell+1$.
Then we create $m_\ell$ nodes at depth $\ell$ as follows.

Each time we create a depth-$\ell$ node $z$, we pick an arbitrary set $S\subseteq\{x_1,\ldots,x_{m_{\ell+1}}\}$ of the remaining depth-$(\ell+1)$ nodes of size $|S|=m(k^{k-\ell}n)$.
Each $x\in S$ is already associated with some vector $Q^x\in\F_q^{k^{k-\ell}n}$; so we apply $\calA$ on $Q^x,x\in S$ to obtain an $s(k^{k-\ell}n)$-sparse $A$-nontrivial $H$-zero-sum
$$
\sum_{x\in T}\alpha_xQ^x=0,
$$
where $T\subseteq S$ contains the nonzero coefficients and has size $|T|\le s(k^{k-\ell}n)$.
Then we connect $x\in T$ as the child nodes of $z$ and discard them from the remaining depth-$(\ell+1)$ nodes.

Since each time we discard at most $s(k^{k-\ell}n)$ nodes, we are guaranteed with at least $m(k^{k-\ell}n)$ remaining ones for creating each one of the $m_\ell$ many $z$'s.
Runtime here is $m_\ell\cdot T(k^{k-\ell}n)$ for executions of $\calA$.

Finally we record the following quantities.
\begin{enumerate}
\item\label{itm:lem:general_cis_larger_k_node_1} 
Nonzero coefficients $\alpha_x$ of the zero-sum above for each $x\in T$.
\item\label{itm:lem:general_cis_larger_k_node_2} 
Nonempty sets $S^z_i\subseteq T$ for each $i=0,\ldots,k$ corresponding to coefficients in $H_i$, such that $|S_0^z|+\cdots+|S_k^z|=|T|\le s(k^{k-\ell}n)$ and $S_i^z\cap S_{i'}^z=\emptyset$ whenever $i\ne i'$.

The nonemptiness and disjointness are because the $H$-zero-sum is $A$-nontrivial, where $H_0,\ldots,H_k$ is a nonempty disjoint partition of $H$ with nonempty $A_0\subseteq H_0,\ldots,A_k\subseteq H_k$.
\item\label{itm:lem:general_cis_larger_k_node_3} 
Signs $\beta_x\in\{\pm1\}$ for each $x\in T$ such that $\alpha_x\beta_x\in B_i$ where $i=0,1,\ldots,k$ is the unique index that $x\in S_i^z$.
\item\label{itm:lem:general_cis_larger_k_node_4}
$A_i$-nontrivial $H_i$-sum $w^z_i=\sum_{x\in S_i^z}\alpha_xQ^x$ for each $i=0,\ldots,k$, such that $w_0^z+\cdots+w_k^z=0$.
\item\label{itm:lem:general_cis_larger_k_node_5}
Nontrivial $(\pm1)$-sum $Q_i^z=\sum_{x\in S_i^z}\beta_xQ^x$ for each $i\in[k]$.

Note that $Q_i^z$ is nontrivial as $S_i^z\ne\emptyset$.
\item\label{itm:lem:general_cis_larger_k_node_6}
Vector $Q^z\in\F_q^{k^{k-\ell+1}n}$ defined by tupling $Q^z_1,\ldots,Q^z_k$.

Alternatively, we view $Q^z\in(\F_q^n)^{k\times\cdots\times k}$ as a $(k-\ell+1)$-dimensional array.
For $R=(r_1,\ldots,r_{k-\ell+1})\in[k]^{k-\ell+1}$, we recursively define the $R$th entry of $Q^z$, denoted $Q^z[R]\in\F_q^n$, as 
$$
Q^z[R]=Q^z_{r_1}[r_2,\ldots,r_{k-\ell+1}]=\sum_{x\in S_{r_1}^z}\beta_xQ^x[r_2,\ldots,r_{k-\ell+1}].
$$
\end{enumerate}
\end{enumerate}
The total runtime $T'$ consists of $\sum_{\ell=1}^km_\ell\cdot T(k^{k-\ell}n)$ for executions of $\calA$ and $\poly(m_{k+1},n,k,\log q)$ for other processing time.

Note that there is a single node $z_0$ at depth $1$ in the above construction. We use $\calT$ to denote the depth-$(k+1)$ tree rooted at $z_0$.
For our reducible vector $u$, we will only use input vectors that are leaves of $\calT$. Since each depth-$\ell$ node has at most $s(k^{k-\ell}n)$ child nodes, $\calT$ has at most $\prod_{\ell=1}^ks(k^{k-\ell}n)$ leaves. This proves the sparsity bound $s'$.

Now we show how to obtain a reducible vector $u$ given $\calT$.
Recall that $\calS_k\subsetneq[k]^k$ is the set of permutations on $[k]$.
Define
\begin{align}
u
&=\sum_{\pi\in\calS_k}\sgn(\pi)Q^{z_0}[\pi]
\notag\\
&=\sum_{\pi=(\pi_1,\ldots,\pi_k)\in\calS_k}\sgn(\pi)\sum_{z_1\in S^{z_0}_{\pi_1}}\beta_{z_1}\sum_{z_2\in S^{z_1}_{\pi_2}}\beta_{z_2}\cdots\sum_{z_k\in S_{\pi_k}^{z_{k-1}}}\beta_{z_k}Q^{z_k}[\emptyset]
\tag{by \Cref{itm:lem:general_cis_larger_k_node_6}}\\
&=\sum_{\pi=(\pi_1,\ldots,\pi_k)\in\calS_k}\sgn(\pi)\sum_{z_1\in S^{z_0}_{\pi_1}}\beta_{z_1}\sum_{z_2\in S^{z_1}_{\pi_2}}\beta_{z_2}\cdots\sum_{z_k\in S_{\pi_k}^{z_{k-1}}}\beta_{z_k}v_{z_k}
\tag{by \Cref{itm:lem:general_cis_larger_k_leaf}}\\
&=\sum_{\pi=(\pi_1,\ldots,\pi_k)\in\calS_k}\sgn(\pi)\sum_{z_1\in S^{z_0}_{\pi_1},\ldots,z_k\in S_{\pi_k}^{z_{k-1}}}\left(\prod_{\ell\in[k]}\beta_{z_\ell}\right)v_{z_k}.
\label{eq:lem:general_cis_larger_k_2}
\end{align}
The intuition of $u$ is as follows. Each layer in the tree enforces a decomposition of an $H$-zero-sum into components aligned with $H_0, H_1, \ldots, H_k$; the permutation sum with alternating signs implements a determinant-like cancellation across branches so that for any $c \in H_i$, the combination $c \cdot u$ decomposes into $H'$-conforming pieces (e.g., $\pm H_0$ and $B_i - B_i$) while preserving $A$-nontriviality.

By \Cref{itm:lem:general_cis_larger_k_node_2} and \Cref{itm:lem:general_cis_larger_k_node_5}, $u$ is a nontrivial $(\pm1)$-sum.
To prove that $u$ is an $A_0$-nontrivial $(H_1\cup\cdots\cup H_k\to H')$-reducible vector, let $i^*\in[k]$ and $c\in H_{i^*}$ be arbitrary. By symmetry, we assume $c\in B_{i^*}$ and analyze $c\cdot u$.

Fix an arbitrary $\pi=(\pi_1,\ldots,\pi_k)\in\calS_k$.
Let $j=\pi^{-1}(i^*)$.
Define $\tilde\pi=(\pi_1,\ldots,\pi_{j-1})$ and $\bar\pi=(\pi_{j+1},\ldots,\pi_k)$; so $\pi=(\tilde\pi,i^*,\bar\pi)$.
We use $\calT[\tilde\pi]$ to denote the set of nodes $z$ can be reached by the following process: start with $z=z_0$; for each step $d=1,\ldots,j-1$, proceed to an arbitrary child node of $z$ in $S_{\pi_d}^z$.
We have the following observation.

\begin{claim}\label{clm:lem:general_cis_larger_k_2}
$\calT[\tilde\pi]$ is a nonempty set of depth-$j$ nodes.
Let $z_{j-1}\in\calT[\tilde\pi]$ be arbitrary. 
Then
$$
\sum_{i=0}^k\sum_{z_j\in S_i^{z_{j-1}}}\alpha_{z_j}\beta_{z_j}\sum_{z_{j+1}\in S_{\pi_{j+1}}^{z_j},\ldots,z_k\in S_{\pi_k}^{z_{k-1}}}\left(\prod_{\ell=j}^k\beta_{z_\ell}\right)v_{z_k}=0.
$$
\end{claim}
\begin{proof}
The nonemptiness and depth condition follow directly from \Cref{itm:lem:general_cis_larger_k_node_2}.
By \Cref{itm:lem:general_cis_larger_k_node_4}, we have $\sum_{i=0}^k\sum_{z_j\in S_i^{z_{j-1}}}\alpha_{z_j}Q^{z_j}=w_0^{z_{j-1}}+\cdots+w_k^{z_{j-1}}=0$.
By \Cref{itm:lem:general_cis_larger_k_node_6}, each $Q^{z_j}$ is a concatenation of $Q^{z_j}[R]$ for $R\in[k]^{k-j}$.
So the above equation holds for every choice of $R$.
Setting $R=\bar\pi$ and expanding with \Cref{itm:lem:general_cis_larger_k_node_6}, \Cref{itm:lem:general_cis_larger_k_leaf}, and \Cref{itm:lem:general_cis_larger_k_node_3} proves the claim.
\end{proof}

Given \Cref{clm:lem:general_cis_larger_k_2}, we analyze $c\cdot u$ as follows:
\begin{align*}
c\cdot u
&=\sum_\pi\sgn(\pi)
\sum_{z_1\in S^{z_0}_{\pi_1},\ldots,z_k\in S_{\pi_k}^{z_{k-1}}}c\left(\prod_{\ell\in[k]}\beta_{z_\ell}\right)v_{z_k}
\tag{by \Cref{eq:lem:general_cis_larger_k_2}}\\
&=\sum_\pi\sgn(\pi)
\sum_{\substack{z_1\in S^{z_0}_{\pi_1},\ldots,z_{j-1}\in S_{\pi_{j-1}}^{z_{j-2}}\\j:=\pi^{-1}(i^*)}}
\beta_{z_1}\cdots\beta_{z_{j-1}}
\sum_{z_j\in S_{i^*}^{z_{j-1}}}c
\sum_{z_{j+1}\in S_{\pi_{j+1}}^{z_j},\ldots,z_k\in S_{\pi_k}^{z_{k-1}}}
\left(\prod_{\ell=j}^k\beta_{z_\ell}\right)v_{z_k}\\
&=\sum_\pi\sgn(\pi)
\sum_{\substack{z_1\in S^{z_0}_{\pi_1},\ldots,z_{j-1}\in S_{\pi_{j-1}}^{z_{j-2}}\\j:=\pi^{-1}(i^*)}}\beta_{z_1}\ldots\beta_{z_{j-1}}\\
&\qquad\cdot\left(\sum_{z_j\in S_{i^*}^{z_{j-1}}}c-\sum_{i=0}^k\sum_{z_j\in S_i^{z_{j-1}}}\alpha_{z_j}\beta_{z_j}\right)\sum_{z_{j+1}\in S_{\pi_{j+1}}^{z_j},\ldots,z_k\in S_{\pi_k}^{z_{k-1}}}
\left(\prod_{\ell=j}^k\beta_{z_\ell}\right)v_{z_k}.
\tag{by \Cref{clm:lem:general_cis_larger_k_2}}
\end{align*}
By dividing the summation over $i$ into different cases, we have $c\cdot u=L^*-L_0-L_1$ where
\begin{equation*}
L^*=\sum_\pi\sgn(\pi)
\sum_{\substack{z_1\in S^{z_0}_{\pi_1},\ldots,z_k\in S_{\pi_k}^{z_{k-1}}\\j:=\pi^{-1}(i^*)}}
(c-\alpha_{z_j}\beta_{z_j})
\left(\prod_{\ell\in[k]}\beta_{z_\ell}\right)v_{z_k},
\tag{via $i=i^*$}
\end{equation*}
\begin{equation*}
L_0=\sum_\pi\sgn(\pi)
\sum_{\substack{z_1\in S^{z_0}_{\pi_1},\ldots,z_{j-1}\in S_{\pi_{j-1}}^{z_{j-2}}\\j:=\pi^{-1}(i^*)\\z_j\in S_0^{z_{j-1}}\\z_{j+1}\in S_{\pi_{j+1}}^{z_j},\ldots,z_k\in S_{\pi_k}^{z_{k-1}}}}
\alpha_{z_j}\beta_{z_j}
\left(\prod_{\ell\in[k]}\beta_{z_\ell}\right)v_{z_k},
\tag{via $i=0$}
\end{equation*}
and
\begin{equation*}
L_1=\sum_{i\in[k]\setminus\{i^*\}}\sum_\pi\sgn(\pi)
\sum_{\substack{z_1\in S^{z_0}_{\pi_1},\ldots,z_{j-1}\in S_{\pi_{j-1}}^{z_{j-2}}\\j:=\pi^{-1}(i^*)\\z_j\in S_i^{z_{j-1}}\\z_{j+1}\in S_{\pi_{j+1}}^{z_j},\ldots,z_k\in S_{\pi_k}^{z_{k-1}}}}
\alpha_{z_j}\beta_{z_j}
\left(\prod_{\ell\in[k]}\beta_{z_\ell}\right)v_{z_k}.
\tag{via $i\notin\{0,i^*\}$}
\end{equation*}

We use the following \Cref{clm:lem:general_cis_larger_k_3} to show that $c\cdot u$ is an $A_0$-nontrivial $H'$-sum.

\begin{claim}\label{clm:lem:general_cis_larger_k_3}
$L^*,L_0,L_1$ are disjoint vector sums (i.e., they sum over disjoint sets of leaves of $\calT$).
Moreover, $L^*$ is an $H'$-sum; $L_0$ is an $A_0$-nontrivial $H'$-sum; and $L_1$ is an $H'$-sum.
\end{claim}
\begin{proof}
We first simplify $L_1$.
Let $\Lambda\subseteq[k]^k$ be the set of $\sigma\in[k]^k$ such that $\sigma$ contains no $i^*$ and exactly $k-1$ distinct elements, i.e., $\{\sigma_1,\ldots,\sigma_k\}=[k]\setminus\{i^*\}$.
Then there are distinct $j:=j_\sigma\in[k]$ and $j':=j'_\sigma\in[k]$ such that $\sigma_j=\sigma_{j'}$.
Define $\pi_\sigma$ to be $\sigma$ with $\sigma_j$ replaced by $i^*$; and define $\pi_\sigma'$ to be $\sigma$ with $\sigma_{j'}$ replaced by $i^*$
Then $\pi_\sigma,\pi'_\sigma\in\calS_k$ and enumerating $\pi\in\calS_k$ is equivalent as enumerating $\sigma$ and going over $\pi_\sigma,\pi_\sigma'$.
Therefore we can simplify $L_1$ as
\begin{align}
L_1
&=\sum_{\sigma\in\Lambda}\sum_{(j,\pi)\in\{(j_\sigma,\pi_\sigma),(j'_\sigma,\pi'_\sigma)\}}\sgn(\pi)\sum_{z_1\in S_{\sigma_1}^{z_0},\ldots,z_k\in S_{\sigma_k}^{z_{k-1}}}\alpha_{z_j}\beta_{z_j}\left(\prod_{\ell\in[k]}\beta_{z_\ell}\right)v_{z_k}
\tag{noticing $i=\sigma_j$}\\
&=\sum_{\sigma\in\Lambda}\sum_{z_1\in S_{\sigma_1}^{z_0},\ldots,z_k\in S_{\sigma_k}^{z_{k-1}}}\sgn(\pi_\sigma)\left(\alpha_{z_{j_\sigma}}\beta_{z_{j_\sigma}}-\alpha_{z_{j_\sigma'}}\beta_{z_{j_\sigma'}}\right)\left(\prod_{\ell\in[k]}\beta_{z_\ell}\right)v_{z_k},
\label{eq:lem:general_cis_larger_k_3}
\end{align}
where we used $\sgn(\pi_\sigma)=-\sgn(\pi'_\sigma)$ for the last equality.

We observe that $L^*$ traverses $\calT$ from root to leaf based on $\pi\in\calS_k\subsetneq[k]^k$; $L_0$ traverses based on $\pi\in\calS_k$ with $i^*$ replaced by $0$; and $L_1$, via \Cref{eq:lem:general_cis_larger_k_3}, traverses based on $\sigma\in\Lambda$.
Since the above criterion does not overlap, they are disjoint vector sums.

Finally we analyze the coefficients.
\begin{itemize}
\item For $L^*$, we recall that $c\in B_{i^*}$ and, by \Cref{itm:lem:general_cis_larger_k_node_3}, $\alpha_{z_j}\beta_{z_j}\in B_{i^*}$, which shows that $L^*$ is a $ B_{i^*}-B_{i^*}\subseteq H'$-sum.
\item For $L_0$, we notice that $\alpha_{z_j}\in H_0$.
Given other signs multiplied together, $L_0$ is a $(\pm H_0)\subseteq H'$-sum.
Note that $L_0$ is $A_0$-nontrivial by \Cref{itm:lem:general_cis_larger_k_node_2,itm:lem:general_cis_larger_k_node_4}.
\item For $L_1$ and any fixed $\sigma\in\Lambda$, let $i=\sigma_{j_{\sigma}}\in[k]\setminus\{i^*\}$.
Then by \Cref{itm:lem:general_cis_larger_k_node_3}, we know both $\alpha_{z_{j_\sigma}}\beta_{z_{j_\sigma}}$ and $\alpha_{z_{j_\sigma}}\beta_{z_{j_\sigma'}}$ are in $B_i$.
Hence $L_1$ is a $\bigcup_{i\in[k]\setminus\{i^*\}}(B_i-B_i)\subseteq H'$-sum.
\qedhere
\end{itemize}
\end{proof}
Finally we remark that, given $c$, we can efficiently express $c\cdot u$ as the target $A_0$-nontrivial $H'$-sum by computing $L^*,L_0$ directly with their definition and computing $L_1$ with its simplified form \Cref{eq:lem:general_cis_larger_k_3}.
\end{proof}

\begin{remark}\label{rmk:lem:general_cis_larger_k}
We note that in \Cref{itm:lem:general_cis_larger_k_node} when depth $\ell=1$, we do not have to use the $H$-zero-sum algorithm with specific $A$-nontriviality. Instead, it suffices to use any nontrivial $H$-zero-sum algorithm. This will still guarantee the nontriviality of $u$ in \Cref{eq:lem:general_cis_larger_k_2}. Since this saving does not affect the final bound much, we do not introduce further complications here.
\end{remark}

\subsection{From reducible vector to zero-sum} \label{sec:general_cis_reducible_to_zero-sum}

Similar to \Cref{cor:iterate_sis_halving}, we can improve zero-sum algorithm from reducible vectors.

\begin{lemma}\label{lem:general_cis_reducible_to_zero-sum}
Let $H\subseteq\F_q$ and let $\emptyset\ne A\subseteq\F_q\setminus\{0\}$.
Suppose there is a deterministic algorithm $\calA$, given $m(n)$ vectors in $\F_q^n$ and running in time $T(n)$, that outputs an $s(n)$-sparse $A$-nontrivial $H$-zero-sum.

Let $H_0,H_1,\ldots,H_k$ be a nonempty disjoint partition of $H$ where $k\ge1$.
Assume $A_i:=(\pm A)\cap H_i$ is nonempty for all $i=0,\ldots,k$.
For $i\in[k]$, let $B_i\subseteq\F_q$ be arbitrary such that $H_i\subseteq\pm B_i$.
Define
$$
H'=(\pm H_0)\cup\bigcup_{i\in[k]}(B_i-B_i).
$$
Define $m_1=1$ and, for $\ell=2,3,\ldots,k+1$,
\begin{align*}
m_\ell
=\prod_{j=1}^{\ell-1}s(k^{k-j}n)+\sum_{i=1}^{\ell-1}\left(m(k^{k-i}n)-s(k^{k-i}n)\right)\prod_{j=1}^{i-1}s(k^{k-j}n).
\end{align*}
Then there is a deterministic algorithm $\calA'$, given $m(n)\cdot m_{k+1}$ vectors in $\F_q^n$ and running in time $T(n)+m(n)\cdot T'$, that outputs an $\left(s(n)\cdot s'\right)$-sparse $A_0$-nontrivial $H'$-zero-sum, where
$$
s'=\prod_{\ell=1}^ks(k^{k-\ell}n)
\qquad\text{and}\qquad
T'=\sum_{\ell=1}^km_\ell\cdot T(k^{k-\ell}n)+\poly(m_{k+1},n,k,\log q).
$$
\end{lemma}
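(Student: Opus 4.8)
The plan is to mimic the proof of \Cref{cor:iterate_sis_halving}, but feeding in the general reducible vectors produced by \Cref{lem:general_cis_larger_k} (and \Cref{lem:general_cis_larger_k-simple} when $k=1$) in place of the halving reducible vectors. Concretely, $\calA'$ partitions its $m(n)\cdot m_{k+1}$ input vectors into $m(n)$ disjoint batches of size $m_{k+1}$ each, and applies \Cref{lem:general_cis_larger_k} (built on $\calA$) to each batch, obtaining vectors $u^{(1)},\dots,u^{(m(n))}\in\F_q^n$, where each $u^{(i)}$ is an $s'$-sparse $A_0$-nontrivial $(H_1\cup\cdots\cup H_k\to H')$-reducible vector for its own batch. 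This costs $m(n)\cdot T'$ time. If some $u^{(i)}=0$, then fixing any $c\in H_1$ (nonempty, being a part of the partition), the reducibility of $u^{(i)}$ makes $c\cdot u^{(i)}=0$ an $A_0$-nontrivial $H'$-sum of the vectors in batch $i$, i.e.\ an $A_0$-nontrivial $H'$-zero-sum on at most $s'\le s(n)s'$ vectors, so we output it and stop.

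Otherwise all $u^{(i)}$ are nonzero, and we run $\calA$ once more on $u^{(1)},\dots,u^{(m(n))}$, getting an $s(n)$-sparse $A$-nontrivial $H$-zero-sum $\sum_{i\in I}c_iu^{(i)}=0$ with $|I|\le s(n)$ and each $c_i\in H\setminus\{0\}$; this costs $T(n)$ time. We then substitute each term: for $i\in I$ with $c_i\in H_1\cup\cdots\cup H_k$, use the efficient query procedure for the reducible vector $u^{(i)}$ to rewrite $c_iu^{(i)}$ as an $A_0$-nontrivial $H'$-sum of at most $s'$ vectors from batch $i$; for $i\in I$ with $c_i\in H_0$, simply scale the stored $(\pm1)$-expansion of $u^{(i)}$ by $c_i$, which gives a sum of at most $s'$ vectors from batch $i$ with coefficients in $\{0,c_i,-c_i\}\subseteq\{0\}\cup(\pm H_0)\subseteq H'$ (note $0\in B_1-B_1\subseteq H'$ since $k\ge1$). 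Because the batches are disjoint, adding up these substitutions over $i\in I$ produces an $H'$-zero-sum using at most $s(n)\cdot s'$ vectors total, and the runtime $T(n)+m(n)\cdot T'$ and the sparsity bound follow directly.

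The only real subtlety is verifying $A_0$-nontriviality of the final zero-sum. If some $c_i$ with $i\in I$ lies in $H_1\cup\cdots\cup H_k$, then its substituted block is already $A_0$-nontrivial, and since the batches are disjoint no coefficient in that block is cancelled by terms from other batches, so the whole sum is $A_0$-nontrivial. If instead every $c_i$ lies in $H_0$, fix $a\in A_0$; since $A_0\subseteq\pm A$ and $\sum_{i\in I}c_iu^{(i)}$ is $A$-nontrivial, some $c_{i^*}\in\{a,-a\}$, and every support coefficient of the scaled block $c_{i^*}u^{(i^*)}$ then equals $c_{i^*}\cdot(\pm1)\in\{a,-a\}$; as this block is nonempty (the $(\pm1)$-sum $u^{(i^*)}$ is nontrivial) and occupies its own batch, the value $\pm a$ survives in the final sum, and letting $a$ range over $A_0$ gives $A_0$-nontriviality. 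I expect this nontriviality bookkeeping, together with correctly splitting the substitution into the cases $c_i\in H_0$ and $c_i\notin H_0$, to be the main thing to get right; the batching, the sparsity count, and the runtime are immediate consequences of \Cref{lem:general_cis_larger_k} and the disjointness of the batches.
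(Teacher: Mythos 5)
Your proposal is correct and follows essentially the same route as the paper: split the input into $m(n)$ disjoint batches of size $m_{k+1}$, build an $s'$-sparse $A_0$-nontrivial $(H_1\cup\cdots\cup H_k\to H')$-reducible vector $u^{(i)}$ from each batch via Lemma~\ref{lem:general_cis_larger_k} (or its $k=1$ variant), feed the $u^{(i)}$'s into $\calA$ once, and substitute each nonzero term back into the batches.

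The one place you diverge from the paper's write-up is the bookkeeping for $A_0$-nontriviality. You split into the cases ``some $c_i\in H_1\cup\cdots\cup H_k$'' and ``all $c_i\in H_0$,'' spelling out in the latter case how the $c_i$-scaled $(\pm1)$-expansion lands in $\{0\}\cup(\pm H_0)\subseteq H'$ and why the block indexed by $c_{i^*}\in\{a,-a\}$ exhibits the required coefficient $\pm a$. The paper instead asserts that a coefficient in $H_1\cup\cdots\cup H_k$ must appear, invoking $A$-nontriviality together with $A_i\neq\emptyset$ for $i\in[k]$; this is clean when the $H_i$'s are symmetric (which they are in every application, e.g.\ $H_i=\pm B_i$), but as stated Definition~\ref{def:vector_sum_general} only promises $\beta_j\in\{a,-a\}$ for some $a\in A$, not that the representative landing in $H_i$ gets picked. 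Your explicit two-case argument closes that gap without any extra hypothesis, at the cost of one more paragraph. The $u^{(i)}=0$ shortcut you add is harmless but actually unnecessary: even with $u^{(i)}=0$, the substituted block still has the required $H'$ coefficients and $A_0$-nontriviality, and the global sum is still zero, so the main line of the proof goes through verbatim.
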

\begin{proof}
We apply \Cref{lem:general_cis_larger_k-simple} or \Cref{lem:general_cis_larger_k} with $\calA$ to obtain an algorithm $\calB$ that constructs an $s'$-sparse $A_0$-nontrivial $(H_1\cup\cdots\cup H_k\to H')$-reducible vector for $m_{k+1}$ input vectors.
We run $\calB$ on $m=m(n)$ disjoint batches of input vectors to produce reducible vectors $u^{(1)},\ldots,u^{(m)}$.
Then we apply $\calA$ on $u^{(1)},\ldots,u^{(m)}$ to obtain an $A$-nontrivial $H$-zero-sum $\beta_1u^{(1)}+\cdots+\beta_mu^{(m)}=0$.
For each $j\in[m]$, if $\beta_j\in H_i$ for some $i\in[k]$, then we replace $\beta_ju^{(j)}$ with an $A_0$-nontrivial $H'$-sum by the reducibility of $u^{(j)}$.
Crucially since $\beta_1u^{(1)}+\cdots+\beta_mu^{(m)}=0$ is $A$-nontrivial and $\pm A$ hits some (in fact, every) $H_i,i\in[k]$, the above substitution always happens, which means the final $H'$-zero-sum is $A_0$-nontrivial.

The total number of vectors used is $m(n)$ batches of $m_{k+1}$ vectors; the runtime is mainly $m(n)$ executions of $\calB$ and one execution of $\calA$; and the sparsity is $s(n)\cdot s'$ where $s(n)$ is the sparsity of the $\beta_j$'s and $s'$ is the sparsity of each nonzero $\beta_ju^{(j)}$ (and its substitution).
\end{proof}

\begin{remark}\label{rmk:general_cis_reducible_to_zero-sum}
If we do not need to enforce the specific $A_0$-nontriviality in \Cref{lem:general_cis_reducible_to_zero-sum}, then we do not have to use the $A$-nontrivial $H$-zero-sum algorithm to combine the $A_0$-nontrivial $(H_1\cup\cdots\cup H_k\to H')$-reducible vectors. This still ensures a nontrivial $H'$-zero-sum in the end. However the saving does not affect the final bound much.
\end{remark}

\subsection{Improved \texorpdfstring{$\SIS$}{SIS-inf} algorithms}\label{sec:general_cis_sis}

The new reduction allows us to improve the bounds for the $\SIS$ problem and prove \Cref{thm:SIS}.

The previous halving trick corresponds to the case of $k=1$ and we will need to iterate many times to get solutions with smaller weights.
Now with the ability to handle general $k$, we do not have to iterate and can finish in one shot.

The following \Cref{thm:general_cis_sis} removes the assumption of $k$ being an integer power of $2$ in \Cref{thm:halving_sis} and proves \Cref{thm:SIS} immediately.

\begin{theorem}\label{thm:general_cis_sis}
Let $q\ge5$ be a prime and $2\le k\le\lfloor q/2\rfloor$ be an integer.
Given
$$
(k-1)^{(k-1)(k-2)/2}(n+k)^k
$$
vectors in $\F_q^n$, we can find in deterministic $\poly(m,k,\log q)$ time a nontrivial $(\pm\lfloor q/(2k)\rfloor)$-zero-sum.
\end{theorem}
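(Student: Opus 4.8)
The plan is to obtain the claimed bound in a single shot. The halving trick of \Cref{sec:halving_trick} only cuts the weight by a factor of two at the price of squaring the number of vectors, so iterating it reaches $(\pm\lfloor q/(2k)\rfloor)$-zero-sums only when $k$ is a power of two; instead, the reduction \Cref{lem:general_cis_reducible_to_zero-sum} lets us pass from $\F_q$-zero-sums (equivalently, $(\pm\lfloor q/2\rfloor)$-zero-sums) directly to $(\pm\lfloor q/(2k)\rfloor)$-zero-sums in one application, for every integer $k$. Write $h=\lfloor q/2\rfloor$ and $h'=\lfloor q/(2k)\rfloor=\lfloor h/k\rfloor$ (so $h'\ge1$, using $k\le h$). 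I would invoke \Cref{lem:general_cis_reducible_to_zero-sum} with a partition of $\F_q$ into $k$ parts -- so its internal parameter is $k-1$ -- taking as its base algorithm the one from \Cref{lem:general_cis_starting_point} with $r=1$ and with the set $A$ described next.

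\emph{The partition.} Let $H_0=\{-j,\dots,j\}$ for a suitable $1\le j\le h'$, and split the positive part as $\{1,\dots,h\}=\{1,\dots,j\}\sqcup B_1\sqcup\cdots\sqcup B_{k-1}$ into consecutive intervals with each $|B_i|\le h'+1$; set $H_i=B_i\cup(-B_i)$ for $i\in[k-1]$. Then $H_0,H_1,\dots,H_{k-1}$ is a nonempty disjoint partition of $\F_q$ with $H_i\subseteq\pm B_i$ for $i\in[k-1]$. Since $H_0$ is symmetric and contained in $\{-h',\dots,h'\}$, and each $B_i$ has diameter at most $h'$, both $\pm H_0$ and $B_i-B_i$ lie in $\{-h',\dots,h'\}$, so the target set $H'=(\pm H_0)\cup\bigcup_{i\in[k-1]}(B_i-B_i)$ is contained in $\{-h',\dots,h'\}$; thus an $H'$-zero-sum is exactly a $(\pm\lfloor q/(2k)\rfloor)$-zero-sum. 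The only thing to check is that these consecutive intervals can be chosen with all $k$ parts nonempty: there are $h\ge k$ positive elements to distribute among $k$ intervals of capacities $h',h'+1,\dots,h'+1$ (each at least $1$), and the total capacity $kh'+k-1$ is at least $h$ because $kh'=k\lfloor q/(2k)\rfloor>q/2-k\ge h-k$; a greedy allocation then works. Finally, choose a nonzero $a_i\in H_i$ from each part (possible since $H_i\ne\{0\}$) and put $A=\{a_0,\dots,a_{k-1}\}$, so $|A|=k$ and $A_i:=(\pm A)\cap H_i\ne\emptyset$ for every $i=0,\dots,k-1$ -- exactly the nontriviality hypothesis \Cref{lem:general_cis_reducible_to_zero-sum} imposes on its base algorithm.

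\emph{The count.} With $r=1$, \Cref{lem:general_cis_starting_point} uses $m(N)=N+|A|=N+k$ vectors and outputs an $s(N)$-sparse $A$-nontrivial $\F_q$-zero-sum with $s(N)=N+k$ as well (the ratio $(1-q^{-1})/(1-q^{-r})$ equals $1$ at $r=1$); so $m\equiv s$, and every $m(\cdot)-s(\cdot)$ correction term in the recurrence of \Cref{lem:general_cis_reducible_to_zero-sum} vanishes. Therefore, with internal parameter $k-1$, the number of input vectors consumed is
\begin{align*}
m(n)\cdot\prod_{j=1}^{k-1}s\bigl((k-1)^{k-1-j}n\bigr)
&=(n+k)\prod_{j=1}^{k-1}\bigl((k-1)^{k-1-j}n+k\bigr)\\
&\le(n+k)^k\,(k-1)^{\sum_{j=1}^{k-1}(k-1-j)}
=(k-1)^{(k-1)(k-2)/2}(n+k)^k,
\end{align*}
using $(k-1)^{k-1-j}n+k\le(k-1)^{k-1-j}(n+k)$ for $1\le j\le k-1$ and $\sum_{j=1}^{k-1}(k-1-j)=(k-1)(k-2)/2$. \Cref{lem:general_cis_reducible_to_zero-sum} then returns an $A_0$-nontrivial $H'$-zero-sum, i.e.\ a nontrivial $(\pm\lfloor q/(2k)\rfloor)$-zero-sum, and all ingredients run deterministically in $\poly(m,k,\log q)$ time. (For $k=2$ the internal parameter is $1$, so \Cref{lem:general_cis_reducible_to_zero-sum} calls \Cref{lem:general_cis_larger_k-simple} rather than \Cref{lem:general_cis_larger_k}; the count is identical and gives $(n+2)^2$.)

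The main obstacle I foresee is the partition step. Fitting $H'$ into $\{-\lfloor q/(2k)\rfloor,\dots,\lfloor q/(2k)\rfloor\}$ with only $k-1$ symmetric pairs $\pm B_i$ of intervals is essentially tight, so the floor arithmetic behind $kh'+k-1\ge h$ and behind keeping every part nonempty needs care in the extreme regimes $q=5$, $k=2$, and $k=\lfloor q/2\rfloor$ (where $h'=1$ and each $H_i$ collapses to a single pair $\{\pm c\}$). Everything downstream -- propagating $A$-nontriviality through the tree of zero-sums built in \Cref{lem:general_cis_larger_k} and converting the resulting reducible vector back into a zero-sum -- is already packaged in \Cref{lem:general_cis_larger_k}, \Cref{lem:general_cis_larger_k-simple}, and \Cref{lem:general_cis_reducible_to_zero-sum}, so no new ideas should be needed there.
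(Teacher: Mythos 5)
Your proposal is correct and follows essentially the same route as the paper's proof: both partition $\F_q$ into $k$ symmetric pieces $H_0,\dots,H_{k-1}$ (with $H_0 \subseteq \{-\lfloor q/(2k)\rfloor,\dots,\lfloor q/(2k)\rfloor\}$ and each $H_i = \pm B_i$ for a short interval $B_i$), feed the $A$-nontrivial $\F_q$-zero-sum algorithm of \Cref{lem:general_cis_starting_point} with $r=1$ into \Cref{lem:general_cis_reducible_to_zero-sum} once, and arrive at the identical count $(n+k)\prod_{j=1}^{k-1}\bigl((k-1)^{k-1-j}n+k\bigr) \le (k-1)^{(k-1)(k-2)/2}(n+k)^k$. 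The only cosmetic difference is that the paper fixes $B_0 = \{1,\dots,\lfloor q/(2k)\rfloor\}$ directly so that $H' = H_0$, while you allow a slightly more flexible choice of $H_0$ and then observe $H' \subseteq \{-\lfloor q/(2k)\rfloor,\dots,\lfloor q/(2k)\rfloor\}$; your explicit verification that $kh' + k - 1 \ge h$ and that all parts can be kept nonempty is a helpful addition the paper leaves implicit.
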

\begin{proof}
Define $B_0,\ldots,B_{k-1}\subsetneq\{1,\ldots,\lfloor q/2\rfloor\}$ be contiguous intervals of length at most $\lfloor q/(2k)\rfloor+1$, where $B_0=\{1,\ldots,\lfloor q/(2k)\rfloor\}$.
Let $H_0=(\pm B_0)\cup\{0\}$ and $H_i=\pm B_i$ for $1\le i\le k-1$.
Since $H_0=\{-\lfloor q/(2k)\rfloor,\ldots,\lfloor q/(2k)\rfloor\}$, it is equivalent to find a nontrivial $H_0$-zero-sum.

Observe that $H_0,\ldots,H_{k-1}$ form a nonempty disjoint partition of $H:=\F_q$.
Let $A'\subseteq\F_q$ be of size $k$ such that $A'$ contains exactly one element of $H_i$ for each $0\le i\le k-1$.
By \Cref{lem:general_cis_starting_point} with $r=1$, given $m(n)$ vectors in $\F_q^n$, we can efficiently find an $s(n)$-sparse $A'$-nontrivial $\F_q$-zero-sum where $m(n)=s(n)=n+k$.
Let this be the algorithm $\calA$ for \Cref{lem:general_cis_reducible_to_zero-sum}.

We apply \Cref{lem:general_cis_reducible_to_zero-sum} with $H,H_0,\ldots,H_{k-1}$ and $B_1,\ldots,B_{k-1}$. Since each $B_i-B_i\subseteq H_0$, we have $H'=H_0$. Hence we can efficiently compute a nontrivial $H_0$-zero-sum given $\bar m$ input vectors, where
\begin{align*}
\bar m
&=(n+k)\cdot\prod_{j=1}^{k-1}\left((k-1)^{k-1-j}n+k\right)\\
&\le(n+k)^k\prod_{j=1}^{k-1}(k-1)^{k-1-j}=(k-1)^{(k-1)(k-2)/2}(n+k)^k.
\tag*{\qedhere}
\end{align*}
\end{proof}

\subsection{Improved \texorpdfstring{$\F_q^n$-\SubsetSum}{Fq-Subset-Sum} algorithms}\label{sec:general_cis_subset-sum}

Recall that in \Cref{sec:avg_subset-sum} we use worst-case $\SIS$ algorithms to derive average-case \SubsetSum algorithms.
Since we now have better $\SIS$ algorithms, it is natural to expect a similar improvement for \SubsetSum.

This is formalized as the following \Cref{thm:general_cis_subset-sum}, which should be compared with \Cref{thm:subset-sum_avg}.
Note that \Cref{thm:FpSS} follows immediately from \Cref{thm:general_cis_subset-sum}.

\begin{theorem}\label{thm:general_cis_subset-sum}
Let $q\ge5$ be a prime and define $k=\lfloor(q+3)/4\rfloor$. Let $\eps\in(0,1]$ be arbitrary.
Given
$$
m\ge(k-1)^{(k-1)(k-2)}(n+k)^{2k}\cdot O(\log(q)\log(1/\eps))=q^{O(q^2)}\cdot n^{2k}\cdot\log(1/\eps)
$$
uniform random vectors in $\F_q^n$, with probability at least $1-\eps$ we can find in deterministic $\poly(m)$ time a nontrivial $(0,1)$-zero-sum.
\end{theorem}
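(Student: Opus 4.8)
The plan is to assemble \Cref{thm:general_cis_sis}, \Cref{lem:subset-sum_012_from_pm1}, and \Cref{lem:subset-sum_012_to_01} in exactly the same pattern as the proof of \Cref{thm:subset-sum_avg}, but feeding in the stronger, \emph{not-necessarily-power-of-two} $\SIS$ algorithm of \Cref{thm:general_cis_sis} so that no halving iteration is required. First I would check that $k=\lfloor(q+3)/4\rfloor$ satisfies $2\le k\le\lfloor q/2\rfloor$ (so \Cref{thm:general_cis_sis} applies) and, crucially, $\lfloor q/(2k)\rfloor=1$, i.e.\ $q/4<k\le q/2$; this is a short case check for primes $q\ge5$, and it is precisely what lets us take $k\approx q/4$ directly rather than rounding up to a power of $2$ as in \Cref{thm:subset-sum_avg}, which is where the improved exponent comes from. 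With this $k$, \Cref{thm:general_cis_sis} gives a deterministic $\poly$-time \emph{worst-case} algorithm $\calA$ that, on any $\bar m:=(k-1)^{(k-1)(k-2)/2}(n+k)^k$ vectors in $\F_q^n$, outputs a nontrivial $(\pm1)$-zero-sum.

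Next I would apply \Cref{lem:subset-sum_012_from_pm1} to $\calA$ with a constant target failure probability (say $1/100$), so $d=O(\log q)$: this yields an algorithm $\calB$ that, on $d\bar m+1$ uniformly random vectors, finds a nontrivial $(0,1,2)$-zero-sum with probability at least $0.99$. Running $\calB$ on $O(\bar m\log(1/\eps))$ independent batches of $d\bar m+1$ fresh random vectors each and applying a Chernoff bound, with probability at least $1-\eps$ we collect at least $\bar m$ successes, i.e.\ $\bar m$ \emph{disjoint} nontrivial $(0,1,2)$-zero-sums (disjointness is automatic since the batches are disjoint, and within each batch the support carries coefficients in $\{1,2\}$, matching the hypothesis of \Cref{lem:subset-sum_012_to_01}). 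Finally, feeding these $\bar m$ disjoint $(0,1,2)$-zero-sums into \Cref{lem:subset-sum_012_to_01}, which again invokes the worst-case $(\pm1)$-zero-sum algorithm $\calA$ on the $\bar m$ auxiliary vectors $u_i'$, produces a nontrivial $(0,1)$-zero-sum. Every step is deterministic once the random input is drawn, and the running time is $\poly(m)$.

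The vector count is $O(\bar m\log(1/\eps))\cdot(d\bar m+1)=\bar m^2\cdot O(\log q\,\log(1/\eps))=(k-1)^{(k-1)(k-2)}(n+k)^{2k}\cdot O(\log q\,\log(1/\eps))$, as claimed; and since $k=O(q)$ this simplifies to $q^{O(q^2)}\cdot n^{2k}\cdot\log(1/\eps)$, with exponent $2k=q/2+O(1)$. I do not expect any genuine obstacle: the argument is a direct reassembly of earlier results. The two points that demand care are the elementary arithmetic verifying $\lfloor q/(2k)\rfloor=1$ for $k=\lfloor(q+3)/4\rfloor$ (so that \Cref{thm:general_cis_sis} really outputs $(\pm1)$-zero-sums), and the routine Chernoff estimate bounding the number of batches needed to harvest $\bar m$ disjoint $(0,1,2)$-zero-sums.
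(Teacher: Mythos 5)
Your proposal is correct and follows exactly the paper's own route: the paper proves \Cref{thm:general_cis_subset-sum} by noting that $k=\lfloor(q+3)/4\rfloor>q/4$ forces $\lfloor q/(2k)\rfloor=1$ and then declaring the rest identical to the proof of \Cref{thm:subset-sum_avg} with \Cref{thm:general_cis_sis} substituted as the worst-case $(\pm1)$-zero-sum subroutine. You have simply unpacked that sentence into its constituent steps (\Cref{lem:subset-sum_012_from_pm1}, batching with a Chernoff bound, \Cref{lem:subset-sum_012_to_01}), and the arithmetic $\bar m^2\cdot O(\log q\,\log(1/\eps))$ matches the stated bound.
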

\begin{proof}
Note that our choice of $k$ satisfies $k>q/4$ and thus $\lfloor q/(2k)\rfloor=1$.
Then the analysis is identical to the proof of \Cref{thm:subset-sum_avg}, except that we use \Cref{thm:general_cis_sis} as the $(\pm1)$-zero-sum algorithm.
\end{proof}

It will also be useful to generalize $\{0,1\}$ to handle all sets of size $2$. This can be done with another application of \Cref{lem:cis_worst-avg}.

\begin{theorem}\label{thm:general_cis_size-two}
Let $q\ge5$ be a prime and define $k=\lfloor(q+3)/4\rfloor$. 
Let $\eps\in(0,1]$ be arbitrary and let $A\subseteq\F_q$ be of size $2$.
Given
$$
m\ge q^{O(q^2)}\cdot n^{2k}\cdot\log^2(1/\eps)
$$
uniform random vectors in $\F_q^n$, with probability at least $1-\eps$ we can find in deterministic $\poly(m)$ time a nontrivial $A$-zero-sum.
\end{theorem}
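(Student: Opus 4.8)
The plan is to reduce the size-two $\CIS$ problem to the subset-sum problem already solved in \Cref{thm:general_cis_subset-sum} via an affine change of the coefficient set, in the spirit of \Cref{lem:cis_worst-avg}. Write $A=\{a,a'\}$ with $a\ne a'$ and note that $A=(a'-a)\cdot\{0,1\}+a$, where the dilation factor $a'-a$ is nonzero. Thus one would like to invoke \Cref{lem:cis_worst-avg} with inner set $\{0,1\}$, dilation $a'-a$, and shift $a$: this converts a nontrivial $\{0,1\}$-zero-sum into a nontrivial $A$-zero-sum, and the $\{0,1\}$-zero-sum algorithm would be supplied by \Cref{thm:general_cis_subset-sum}.

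The obstacle is that \Cref{lem:cis_worst-avg} is stated for a deterministic \emph{worst-case} $\{0,1\}$-zero-sum algorithm $\calA$, while \Cref{thm:general_cis_subset-sum} only gives an \emph{average-case} one. The first step is therefore to note that the proof of \Cref{lem:cis_worst-avg} (equivalently, of \Cref{lem:subset-sum_012_from_pm1}) goes through with an average-case $\calA$, at the cost of a union bound: there $\calA$ is invoked only on the vectors $w_i$, which — conditioned on $v^*$ — are independent and uniform over the fixed $(n-1)$-dimensional subspace $W$ (with $W=\{w:w[i^*]=0\}$, identified with $\F_q^{n-1}$ by dropping the $i^*$th coordinate); and the $c_i$'s, hence the $\beta_j$'s, are independent of the $w_i$'s, so the analysis of $X$ is unaffected once we further condition on $\calA$ succeeding on all $d$ batches. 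Quantitatively, if $\calA$ fails with probability at most $\eps_0$ on uniform inputs in $\F_q^{n-1}$, then $\calB$ fails with probability at most $d\eps_0+\eps_1$, where $\eps_1$ is the bound from the $\Pr[X=0]$ step; taking $d=\lceil\log(2q/\eps)\rceil$ makes $\eps_1\le q/2^d\le\eps/2$.

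The second step is to instantiate $\calA$ by the algorithm of \Cref{thm:general_cis_subset-sum} in dimension $n-1$, run with error parameter $\eps_0=\eps/(2d)$; this needs $m_0\ge(k-1)^{(k-1)(k-2)}(n-1+k)^{2k}\cdot O(\log q\cdot\log(1/\eps_0))=q^{O(q^2)}\cdot n^{2k}\cdot\log(1/\eps)$ uniform random vectors, where $k=\lfloor(q+3)/4\rfloor$ as there. Feeding this $\calA$ into the average-case version of \Cref{lem:cis_worst-avg} above produces a deterministic $\poly(m)$-time algorithm that, on $dm_0+1$ uniform random inputs, outputs a nontrivial $A$-zero-sum except with probability $d\eps_0+\eps_1\le\eps/2+\eps/2=\eps$. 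Finally $dm_0+1=O(\log(q/\eps))\cdot q^{O(q^2)}\cdot n^{2k}\cdot\log(1/\eps)=q^{O(q^2)}\cdot n^{2k}\cdot\log^2(1/\eps)$, absorbing the $\log q$ and iterated-log factors into $q^{O(q^2)}$, matching the claimed bound.

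The only genuinely new ingredient is the average-case upgrade of \Cref{lem:cis_worst-avg} in the first step; everything else is routine bookkeeping, together with the degenerate cases already handled inside \Cref{lem:cis_worst-avg} — namely shift $0$, i.e.\ $0\in A$, where one simply runs $\calA$ and dilates, and $v^*=0$, where the all-equal-coefficient zero-sum does the job.
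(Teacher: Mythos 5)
Your proof is correct and follows essentially the same route as the paper's: write $A$ as an affine image of $\{0,1\}$, invoke \Cref{lem:cis_worst-avg} with the average-case $(0,1)$-zero-sum algorithm of \Cref{thm:general_cis_subset-sum}, and union-bound over the $d$ independent executions of $\calA$. The paper dispatches the worst-case-to-average-case upgrade of \Cref{lem:cis_worst-avg} with a brief remark, whereas you spell out the conditioning argument (that, given $v^*$, the $w_i$'s are i.i.d.\ uniform on the $(n-1)$-dimensional $W$ and independent of the $c_i$'s, so one runs $\calA$ in dimension $n-1$); this is exactly the detail being waved at, so it is a faithful expansion rather than a different approach.
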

\begin{proof}
Observe that $A=a\{0,1\}+b$ for some $a\in\F_q\setminus\{0\}$ and $b\in\F_q$.
Let $\calA$ be the deterministic $(0,1)$-zero-sum algorithm in \Cref{thm:general_cis_subset-sum} and we apply \Cref{lem:cis_worst-avg} to obtain nontrivial $A$-zero-sum from nontrivial $(0,1)$-zero-sums.

While \Cref{lem:cis_worst-avg} is stated for deterministic worst-case algorithm $\calA$, it is also easy to see that it works for deterministic average-case algorithm $\calA$, which is what we have here.
The only change will be to ensure the success of individual runs of $\calA$.
This leads to the following set-up: we pick $d=\lceil\log(2q/\eps)\rceil$ in \Cref{lem:cis_worst-avg} and run $d$ independent $\calA$'s. We provide $\bar m=q^{O(q^2)}\cdot n^{2k}\cdot\log(2d/\eps)$ vectors for each individual $\calA$ to produce a nontrivial $(0,1)$-zero-sum.
By our parameter choice, each $\calA$ succeeds with probability $1-\eps/(2d)$; and, upon the success of all $\calA$'s executions, the final conversion to $A$-zero-sum succeeds with probability $1-\eps/2$.
By a union bound, the overall success probability is $1-\eps$ over the uniform random inputs while our algorithm is deterministic.
The total number of input vectors is $1+d\cdot\bar m=q^{O(q^2)}\cdot n^{2k}\cdot\log^2(1/\eps)$ as claimed.
\end{proof}

\subsection{Improved \texorpdfstring{$\CIS$}{CIS} algorithms}\label{sec:general_cis_cis}

Now we are ready to improve the simple $\CIS$ algorithm in \Cref{sec:simple_cis} and prove \Cref{thm:CIS_intro}.

Recall that in the $\CIS$ problem, we are given an arbitrary set $A$ of allowed coefficients. For convenience, one may think of $A$ as fixed in the problem description. If $A$ is also taken as an input of the problem (say, by explicit list), our reductions in this subsection still work and will have a mild $\poly(q)$ time overhead to find appropriate APs in $A$.

To handle such generality of $A$, in \Cref{lem:general_cis_larger_k} we will set $B_1,\ldots,B_k$ as singleton sets to ensure $B_i-B_i=\{0\}$ for all $i\in[k]$. This is formalized in the following \Cref{thm:general_cis_centered}.

\begin{theorem}\label{thm:general_cis_centered}
Let $q\ge5$ be a prime. 
Let $1\le a_1<\cdots<a_k\le\lfloor q/2\rfloor$ be arbitrary where $1\le k<\lfloor q/2\rfloor$.
Define $\bar A=\{\pm a_1,\ldots,\pm a_k\}$ and $A=\F_q\setminus\bar A$.
Then given 
$$
m\ge k^{k(k-1)/2}(n+k+1)^{k+1}
$$
vectors in $\F_q^n$, we can find in deterministic $\poly(m,\log q)$ time a nontrivial $A$-zero-sum.
\end{theorem}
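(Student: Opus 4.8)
The plan is to derive this from the general reduction \Cref{lem:general_cis_reducible_to_zero-sum} with the blocks $B_1,\dots,B_k$ chosen to be \emph{singletons}, as flagged just before the statement. Set $H=\F_q$ and partition it by taking $H_i=\{a_i,-a_i\}$ for $i\in[k]$ and $H_0=\F_q\setminus\bigcup_{i\in[k]}H_i=A$. Since $q$ is odd and $2\le a_i+a_j\le q-1$ for all $i,j$, the $2k$ elements $\pm a_1,\dots,\pm a_k$ are pairwise distinct, so $H_0,H_1,\dots,H_k$ is genuinely a nonempty disjoint partition. Put $B_i=\{a_i\}$, so that $H_i=\pm B_i$ and $B_i-B_i=\{0\}$; consequently
$$
H'=(\pm H_0)\cup\bigcup_{i\in[k]}(B_i-B_i)=(\pm A)\cup\{0\}=A,
$$
using that $A=-A$ and $0\in A$. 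Hence an $H'$-zero-sum is exactly an $A$-zero-sum, which is what we want.

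Next I would build the base algorithm $\calA$ required by \Cref{lem:general_cis_reducible_to_zero-sum}. Because $q$ is odd and $1\le k<\lfloor q/2\rfloor$, we have $|A|=q-2k\ge 3$, so $A$ contains some nonzero element $c_0$; set $A'=\{c_0,a_1,\dots,a_k\}$, a subset of $\F_q\setminus\{0\}$ of size exactly $k+1$ (the $a_i$ are distinct, nonzero, and lie in $\bar A$, whereas $c_0\in A$). Then $c_0\in(\pm A')\cap H_0$ and $a_i\in(\pm A')\cap H_i$ for each $i\in[k]$, so all of the sets $(\pm A')\cap H_i$ are nonempty, as the reduction demands. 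Invoking \Cref{lem:general_cis_starting_point} with $r=1$ provides a deterministic $\poly(n,k,\log q)$-time algorithm $\calA$ that, on $m(n):=n+k+1$ input vectors, returns an $s(n)$-sparse $A'$-nontrivial $\F_q$-zero-sum with $s(n)=n+k+1$; note $m(n)=s(n)$, which will simplify the vector count.

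Feeding this $\calA$, together with $H_0,\dots,H_k$ and $B_1,\dots,B_k$, into \Cref{lem:general_cis_reducible_to_zero-sum} (which itself calls \Cref{lem:general_cis_larger_k-simple} when $k=1$ and \Cref{lem:general_cis_larger_k} when $k\ge2$) yields a deterministic $\poly(m,\log q)$-time algorithm that outputs a $\big((\pm A')\cap H_0\big)$-nontrivial --- in particular nontrivial --- $H'$-zero-sum, i.e.\ a nontrivial $A$-zero-sum. It then remains to count the vectors used, $m(n)\cdot m_{k+1}$. Because $m(n)=s(n)$, every term $m(k^{k-i}n)-s(k^{k-i}n)$ in the definition of $m_{k+1}$ vanishes, so $m_{k+1}=\prod_{j=1}^{k}s(k^{k-j}n)=\prod_{j=1}^{k}\bigl(k^{k-j}n+k+1\bigr)$. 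Bounding $k^{k-j}n+k+1\le k^{k-j}(n+k+1)$ and using $\sum_{j=1}^{k}(k-j)=k(k-1)/2$ gives $m_{k+1}\le k^{k(k-1)/2}(n+k+1)^k$, so $m(n)\cdot m_{k+1}\le k^{k(k-1)/2}(n+k+1)^{k+1}$, which is exactly the stated bound.

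The substantive content is all in the lemmas of \Cref{sec:general_cis}; the only points needing care here are (i) choosing $A'$ so that it meets every block $H_i$ of the partition while staying inside $\F_q\setminus\{0\}$ --- this is precisely why we need $|A|\ge 3$, hence the hypothesis $k<\lfloor q/2\rfloor$ --- and (ii) taking the $B_i$ to be singletons so that each $B_i-B_i$ collapses to $\{0\}$ and $H'$ remains exactly $A$. Using intervals for the $B_i$ (as in the $\SIS$ applications) would leave $H'$ an interval rather than a prescribed set, which is useless for a general forbidden set $\bar A$: the singleton choice is exactly the trade of the weaker reducibility guarantee for control over which coefficients can appear. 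I do not anticipate any genuine obstacle beyond this bookkeeping and the verification of the reduction's hypotheses.
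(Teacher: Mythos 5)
Your proposal is correct and follows essentially the same route as the paper's own proof: partition $\F_q$ into $H_0=A$ and the singleton-pairs $H_i=\{\pm a_i\}$ with $B_i=\{a_i\}$, instantiate the base algorithm via \Cref{lem:general_cis_starting_point} with $r=1$ (the paper writes $a_0$ where you write $c_0$), feed into \Cref{lem:general_cis_reducible_to_zero-sum}, and count $(n+k+1)\prod_{j=1}^k(k^{k-j}n+k+1)\le k^{k(k-1)/2}(n+k+1)^{k+1}$. Your write-up is a bit more careful than the paper in explicitly checking the nontriviality hypotheses $(\pm A')\cap H_i\ne\emptyset$ and the bound $|A|=q-2k\ge 3$; otherwise the two are identical.
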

\begin{proof}
Let $a_0\in A\setminus\{0\}$ be arbitrary. Define $A'=\{a_0,a_1,\ldots,a_k\}$. 
By \Cref{lem:general_cis_starting_point} with $r=1$, given $m(n)$ vectors in $\F_q^n$, we can efficiently find an $s(n)$-sparse $A'$-nontrivial $\F_q$-zero-sum where $m(n)=s(n)=n+k+1$.
Let this be the algorithm $\calA$ for \Cref{lem:general_cis_reducible_to_zero-sum}.

Define $H=\F_q$, $H_0=A$, and $H_i=\{\pm a_i\},B_i=\{a_i\}$ for $i\in[k]$. Then apply \Cref{lem:general_cis_reducible_to_zero-sum}, where $H'=A$.
Therefore we can efficiently compute a nontrivial $A$-zero-sum given $\bar m$ input vectors, where
\begin{align*}
\bar m
&=(n+k+1)\cdot\prod_{j=1}^k\left(k^{k-j}n+k+1\right)\\
&\le(n+k+1)^{k+1}\cdot\prod_{j=1}^kk^{k-j}
=k^{k(k-1)/2}(n+k+1)^{k+1}.
\tag*{\qedhere}
\end{align*}
\end{proof}

Using \Cref{lem:cis_worst-avg}, we can shift and dilate $A$ in \Cref{thm:general_cis_centered} to handle more general cases.

\begin{theorem}\label{thm:general_cis_paired}
Let $q\ge5$ be a prime.
Let $1\le a_1<\cdots<a_k\le\lfloor q/2\rfloor$ be arbitrary where $1\le k<\lfloor q/2\rfloor$.
Define $\bar A=\{\pm a_1,\ldots,\pm a_k\}$ and $A=\F_q\setminus\bar A$.

Fix arbitrary $a\in\F_q\setminus\{0\},b\in\F_q$ and define $B=aA+b$.
Let $\eps\in(0,1]$ be arbitrary.
Then given 
$$
m\ge1+k^{k(k-1)/2}(n+k+1)^{k+1}\cdot\lceil\log(q/\eps)\rceil
$$
uniform random vectors in $\F_q^n$, with probability at least $1-\eps$ we can find in deterministic $\poly(m,q)$ time a nontrivial $B$-zero-sum.
\end{theorem}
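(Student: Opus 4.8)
The plan is to derive Theorem~\ref{thm:general_cis_paired} by feeding the worst-case algorithm of Theorem~\ref{thm:general_cis_centered} into Lemma~\ref{lem:cis_worst-avg}, in exactly the same spirit as the proofs of Theorems~\ref{thm:simple_cis_lev} and~\ref{thm:general_cis_size-two}. Set $m_0 := k^{k(k-1)/2}(n+k+1)^{k+1}$. By Theorem~\ref{thm:general_cis_centered}, since $q\ge 5$ and $1\le k<\lfloor q/2\rfloor$, there is a \emph{deterministic worst-case} algorithm $\calA$ that, given $m_0$ vectors in $\F_q^n$, outputs a nontrivial $A$-zero-sum in time $T=\poly(m_0,\log q)$. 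Because $B=aA+b$ with $a\neq 0$, Lemma~\ref{lem:cis_worst-avg} converts $\calA$ into a deterministic algorithm $\calB$ that, given $d\,m_0+1$ vectors with $d=\lceil\log(q/\eps)\rceil$, outputs a nontrivial $B$-zero-sum with probability at least $1-\eps$ over uniformly random inputs, in time $dT+\poly(m_0,q,\log(1/\eps))$.

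First I would verify the (mild) hypotheses. Lemma~\ref{lem:cis_worst-avg} only requires $A\neq\emptyset$, which holds since $|A|=q-2k\ge 1$ (indeed $\ge 3$) as $q$ is an odd prime $\ge 5$ and $k<\lfloor q/2\rfloor=(q-1)/2$. Then I would read off the parameters: the number of input vectors needed is $d\,m_0+1=\lceil\log(q/\eps)\rceil\cdot k^{k(k-1)/2}(n+k+1)^{k+1}+1$, which is exactly the claimed bound on $m$; for larger $m$ one simply discards the surplus vectors, and this does not affect the success guarantee because that guarantee is over the uniform distribution of the coordinates actually used. Finally, since $m\ge d\,m_0$ gives $d\le m$ and $m_0\le m$, the total runtime $dT+\poly(m_0,q,\log(1/\eps))$ is $\poly(m,q)$, and $\calB$ is deterministic with the probability being solely over the random inputs.

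I do not expect a genuine obstacle here: the argument is essentially a one-line composition of two previously established results together with parameter bookkeeping. The only point worth an explicit remark --- as already noted in the proof of Theorem~\ref{thm:general_cis_size-two} --- is that Lemma~\ref{lem:cis_worst-avg} is designed for a \emph{worst-case} subroutine $\calA$, and here $\calA$ coming from Theorem~\ref{thm:general_cis_centered} is indeed worst-case deterministic; hence, unlike in Theorem~\ref{thm:general_cis_size-two}, no auxiliary amplification of the inner algorithm's success probability is needed.
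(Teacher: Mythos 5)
Your proposal matches the paper's proof exactly: the paper likewise proves this theorem by combining \Cref{thm:general_cis_centered} (deterministic worst-case $A$-zero-sum) with \Cref{lem:cis_worst-avg} (the shift/dilate reduction to handle $B=aA+b$ over random inputs). The parameter bookkeeping you supply is correct and fills in details the paper leaves implicit.
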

\begin{proof}
Follows immediately by combining \Cref{thm:general_cis_centered} and \Cref{lem:cis_worst-avg}.
\end{proof}

Observe that in \Cref{thm:general_cis_paired} we pay a cost of roughly $n^{k+1}$ to remove $2k$ elements, though these elements need to be paired up. This already hints an improvement over \Cref{thm:CLZCIS}, as they pay a cost of roughly $n^{2k+1}$ to remove only $2k$ elements.
In general, whenever we have two paired elements that need to be discarded, we only incur a linear cost, in contrast to the quadratic cost in \cite{chen2022quantum}.

By \Cref{thm:general_cis_paired}, we look into the following arithmetic combinatorial structure: up to some additive shift, we want to find some $-x,x\in\bar A$ while guaranteeing some $-y,0,y\in A$ to ensure $k<\lfloor q/2\rfloor$.
This leads to the following result, the proof of which is deferred to \Cref{app:arith_comb}.

\begin{restatable}{fact}{fctsimpleAPandantipodalhole}\label{fct:simple_3AP_and_antipodal_hole}
Let $q\ge3$ be a prime.
Let $A\subseteq\F_q$ and define $c=|\F_q\setminus A|$.
There exist $x,y,z\in\F_q$ such that the following holds.
\begin{enumerate}
\item\label{itm:fct:simple_3AP_and_antipodal_hole_1}
If $2\le c<q$, then $x\ne0$, $z\in A$, and $z-x,z+x\notin A$.
\item\label{itm:fct:simple_3AP_and_antipodal_hole_2}
In addition to the conditions in \Cref{itm:fct:simple_3AP_and_antipodal_hole_1}, if $c<(q+1)/2$, then $y\ne0$ and $z-y,z+y\in A$.
\end{enumerate}
\end{restatable}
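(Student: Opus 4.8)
The plan is to pass to the complement $\bar A := \F_q \setminus A$, with $|\bar A| = c$, and to restate both items in terms of antipodal pairs about a common center $z$. For \Cref{itm:fct:simple_3AP_and_antipodal_hole_1}, observe that it asks for a center $z \in A$ that is the midpoint of two \emph{distinct} holes $h_1, h_2 \in \bar A$: then $x := (h_1 - h_2)/2 \neq 0$ since $q$ is odd, and $z \pm x = h_1, h_2 \notin A$. Equivalently, it asks that $\bar A + \bar A \not\subseteq 2\bar A$, where $2\bar A := \{2h : h \in \bar A\}$; indeed any $s \in (\bar A + \bar A) \setminus 2\bar A$ can be written $s = h_1 + h_2$ with $h_1 \neq h_2$ (else $s = 2h_1 \in 2\bar A$), and then $s/2 \notin \bar A$, so $z := s/2 \in A$ works. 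Since $2\bar A \subseteq \bar A + \bar A$ and $|2\bar A| = c$, it suffices to show $|\bar A + \bar A| > c$, which follows from Cauchy--Davenport: $|\bar A + \bar A| \geq \min(q, 2c - 1) > c$ whenever $2 \leq c < q$.

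For \Cref{itm:fct:simple_3AP_and_antipodal_hole_2}, the point is that the \emph{same} $z$ must also be the midpoint of two distinct elements of $A$. I would show that under the extra hypothesis $c < (q+1)/2$ (equivalently $c \leq (q-1)/2$, as $q$ is odd), every $z \in A$ that is a midpoint of two distinct holes is automatically also a midpoint of two distinct elements of $A$; together with \Cref{itm:fct:simple_3AP_and_antipodal_hole_1} this suffices, with $y$ taken to be half the difference of that $A$-pair. To prove the claim, fix such a $z$ and partition $\F_q \setminus \{0\}$ into the $(q-1)/2$ antipodal pairs $\{y, -y\}$; call a pair \emph{covered} if $z+y$ or $z-y$ lies in $\bar A$. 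Then $z$ fails to be a midpoint of two distinct elements of $A$ exactly when every pair is covered. Consider the map $h \mapsto \{h-z, z-h\}$ from $\bar A$ (every hole differs from $z$, since $z \in A$) to the set of antipodal pairs; it is at most $2$-to-$1$, so if every pair is covered its image has size $(q-1)/2 \leq c$, forcing $c = (q-1)/2$ and the map to be a bijection. But $z$ being the midpoint of distinct holes $h_1, h_2$ means $h_1$ and $h_2$ are two distinct preimages of the pair $\{h_1 - z, z - h_1\}$, contradicting injectivity. In the leftover regime $2 \leq c < q$ with $c \geq (q+1)/2$ only \Cref{itm:fct:simple_3AP_and_antipodal_hole_1} is needed (take $y = 0$), and when $c < 2$ or $c \geq q$ the statement is vacuous.

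The main obstacle is \Cref{itm:fct:simple_3AP_and_antipodal_hole_2}: one cannot simply reapply \Cref{itm:fct:simple_3AP_and_antipodal_hole_1} to a shifted set, because the center it produces is not under our control and an independently chosen center for the $A$-pair would generally be different. The covering/pairing count is what forces both requirements onto a single center, and its quantitative heart is that $c \leq (q-1)/2$ leaves \emph{just} enough room that the doubly-covered antipodal pair created by any hole-midpoint center is incompatible with the covering map being a bijection.
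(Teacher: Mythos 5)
Your proof is correct, but it takes a genuinely different route from the paper's on both items. For item 1, the paper establishes $Z \cap A \neq \emptyset$ (where $Z$ is the set of midpoints of distinct holes) by a direct double-counting argument: it bounds the number of ordered pairs $(u,v) \in \bar A^2$ with $(u+v)/2 = z$ for fixed $z \in \bar A$ by $2\lfloor(c-1)/2\rfloor$, compares the resulting total against $c(c-1)$, and resolves the tight odd-$c$ case by showing that equality would force $cz$ to be constant over $z \in \bar A$. You instead invoke Cauchy--Davenport to show $|\bar A + \bar A| \geq \min(q, 2c-1) > c = |2\bar A|$, so some $s \in (\bar A + \bar A) \setminus 2\bar A$ exists and $z := s/2 \in A$ is forced to be a midpoint of distinct holes. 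Your approach is cleaner and outsources the counting to a named theorem, while the paper's is self-contained and entirely elementary. For item 2, the paper partitions $\F_q \setminus \{z-x, z, z+x\}$ into $(q-3)/2$ antipodal pairs about $z$ and notes $c - 2 < (q-3)/2$, so some pair avoids $\bar A$ entirely; you instead partition $\F_q \setminus \{0\}$ into $(q-1)/2$ antipodal pairs, define a covering notion, and derive a contradiction from the map $h \mapsto \{h - z, z - h\}$ being forced to be a bijection (using that the two distinct holes $h_1, h_2$ with midpoint $z$ collide under this map). The paper's version is more direct; yours is more elaborate but has the same quantitative content and makes explicit why the threshold $c \leq (q-1)/2$ is exactly right. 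Both arguments are sound; the only minor note is that the paper gets item 2 as a cardinality inequality with no case analysis, whereas you carry a contradiction through the boundary case $c = (q-1)/2$.
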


The following fact completes the edge case in \Cref{itm:fct:simple_3AP_and_antipodal_hole_2} of \Cref{fct:simple_3AP_and_antipodal_hole}. Its proof is also presented in \Cref{app:arith_comb}.
A similar result over the integers is due to Erd{\H o}s and Tur{\'a}n \cite{erdos1936some}.

\begin{restatable}{fact}{fctmiddleAP}\label{fct:middle_3AP}
Let $q\ge11$ be a prime.
Let $A\subseteq\F_q$ and define $c=|\F_q\setminus A|$.
Assume $c=(q+1)/2$.
Then $A$ contains a $3$-AP.
\end{restatable}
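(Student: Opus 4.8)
The plan is to argue by contradiction and translate the absence of a $3$-AP into a disjointness statement between two subsets of $\F_q$ whose sizes are forced to be large, and then to invoke the restricted-sumset strengthening of the Cauchy--Davenport inequality (the Erd\H{o}s--Heilbronn theorem). Since $c=|\F_q\setminus A|=(q+1)/2$ we have $|A|=(q-1)/2$. The key reformulation I would record first is this: because $q$ is an odd prime, a $3$-AP $\{x,x+y,x+2y\}$ with $y\ne0$ always consists of three distinct elements, and it lies inside $A$ exactly when there are $a,b,c\in A$ with $a+c=2b$ and $a\ne c$ (taking $b=x+y$ as the midpoint; here $a\ne c\iff y\ne0$ since $2\ne 0$ in $\F_q$). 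Hence $A$ contains no $3$-AP if and only if $2A\cap(A\hat{+}A)=\emptyset$, where $2A=\{2b\colon b\in A\}$ and $A\hat{+}A=\{a+c\colon a,c\in A,\ a\ne c\}$ is the restricted sumset.

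Granting this reformulation, the counting is immediate. The map $x\mapsto 2x$ is a bijection of $\F_q$ (as $q$ is odd), so $|2A|=|A|=(q-1)/2$; and by the Erd\H{o}s--Heilbronn theorem applied to $A\subseteq\F_q$ we have $|A\hat{+}A|\ge\min(q,\,2|A|-3)=\min(q,\,q-4)=q-4$. If $A$ had no $3$-AP, then $2A$ and $A\hat{+}A$ would be disjoint subsets of $\F_q$, forcing
\[
q\ \ge\ |2A|+|A\hat{+}A|\ \ge\ \frac{q-1}{2}+(q-4)\ =\ \frac{3q-9}{2},
\]
i.e.\ $q\le 9$, contradicting $q\ge 11$. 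This pins down the hypothesis exactly: for $q=7$ the inequality reads $3+3\le 7$, with no contradiction, matching the fact that $\{0,1,3\}\subseteq\F_7$ is $3$-AP-free.

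I expect the only genuinely load-bearing observation is that one must use the \emph{restricted} sumset rather than the ordinary one: plain Cauchy--Davenport is too weak here, because $2A\subseteq A+A$ always (each $2b=b+b$), so the diagonal representations must be excluded, which is precisely the improvement furnished by Erd\H{o}s--Heilbronn. After that there is no case analysis and the threshold $q\ge 11$ drops out of the arithmetic by itself. (The analogy with Erd\H{o}s--Tur\'{a}n flagged in the statement is that over the integers the same two-line argument --- $2A$ disjoint from $A\hat{+}A$, together with the trivial estimate $|A\hat{+}A|\ge 2|A|-3$ valid for any finite set of integers --- gives a density upper bound for $3$-AP-free subsets of $\{1,\dots,N\}$.)
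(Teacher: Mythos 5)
Your proof is correct. The reformulation (no $3$-AP $\iff 2A\cap(A\mathbin{\hat+}A)=\emptyset$) is exact since over a prime field of odd characteristic a nontrivial $3$-AP has three distinct terms, the Erd\H{o}s--Heilbronn bound $|A\mathbin{\hat+}A|\ge\min(q,2|A|-3)$ applies, and the arithmetic $q\ge(q-1)/2+(q-4)$ gives $q\le 9$, contradicting $q\ge11$. The $\F_7$ example $\{0,1,3\}$ confirms sharpness.

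However, this is a genuinely different route from the paper's. The paper deliberately avoids citing any sumset inequality (see the remark after \Cref{fct:middle_3AP}: they ``choose to work with the simpler but worse bounds because \ldots they admit simpler proofs that we can present for self-containedness''). Their proof first invokes \Cref{fct:simple_3AP_and_antipodal_hole} to produce a center $z\in A$ with $z\pm x\notin A$, translates and dilates so that $0\in A'$ and $\pm\lfloor q/2\rfloor\notin A'$, and then argues: either some antipodal pair $\{-y,y\}$ lies entirely in $A'$ (giving the $3$-AP $-y,0,y$), or $A'$ restricted to $\{-4,\dots,4\}$ falls into a fully determined combinatorial pattern in which a $3$-AP is forced by a short finite check (\Cref{fct:3AP_in_pm4}). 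What your approach buys is brevity and zero case analysis, at the cost of treating Erd\H{o}s--Heilbronn (da Silva--Hamidoune / Alon--Nathanson--Ruzsa) as a black box --- which is a nontrivial import, certainly heavier than the paper's ad hoc argument even if it is textbook. What the paper's approach buys is complete self-containedness, which they explicitly flagged as a design goal. One small stylistic note: you could flag that Erd\H{o}s--Heilbronn is itself typically proved by Combinatorial Nullstellensatz --- the same tool the paper uses to prove \Cref{fct:KO} --- so your argument is not out of keeping with the paper's toolkit, merely with its stated preference for elementary, explicit, small-prime-friendly bounds in this appendix.
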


The above results fall into the richer literature of quantitative Szemer\'edi's theorem. Using more advanced techniques, the bound in \Cref{fct:middle_3AP} can be strengthened to $c\le(1-o(1))\cdot q$ and $3$-AP can be extended into longer APs. Numerous works are along this line and we refer interested readers to \cite{kelley2023strong,leng2024improved} for recent breakthroughs. 

We emphasize that we choose to work with the simpler but worse bounds because (1) they are sufficient for our purposes, (2) they admit simpler proofs that we can present for self-containedness, and (3) they hold for small primes with explicit constants.

Now we use \Cref{fct:simple_3AP_and_antipodal_hole} and \Cref{fct:middle_3AP}, together with algorithms in \Cref{sec:general_cis_sis} and \Cref{sec:general_cis_subset-sum}, to prove the following \Cref{thm:general_cis_cis}, from which \Cref{thm:CIS_intro} follows immediately.

\begin{theorem}\label{thm:general_cis_cis}
Let $q\ge3$ be a prime.
Let $1\le c\le q-2$ be an integer and let $B\subseteq\F_q$ be arbitrary of size $q-c$.
Let $\eps\in(0,1]$ be arbitrary.
Then given $m$ uniform random vectors in $\F_q^n$, with probability at least $1-\eps$ we can find in deterministic $\poly(m,q)$ time a nontrivial $B$-zero-sum, where
\begin{itemize}
\item $m\ge n^{c+1}\cdot O\left(\log(q/\eps)\log(1/\eps)\right)$ if $c=1$ or ($c=3$ and $q=5$).
\item $m\ge n^c\cdot c^{O(c^2)}\cdot\log(q/\eps)\log(1/\eps)$ if otherwise, i.e., ($c\ge2$ and $q>5$) or ($c=2$ and $q=5$).
\end{itemize}
\end{theorem}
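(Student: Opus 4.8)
The plan is to prove \Cref{thm:general_cis_cis} by a case analysis on the hole size $c=|\F_q\setminus B|$ relative to $q$, each case reducing to an algorithm from \Cref{sec:general_cis_sis} or \Cref{sec:general_cis_subset-sum} together with the arithmetic-combinatorial input of \Cref{fct:simple_3AP_and_antipodal_hole} and \Cref{fct:middle_3AP}. The unifying observation is that \Cref{thm:general_cis_paired} solves the $\CIS$ problem with allowed set $aA+b$ in the average case whenever $A=\F_q\setminus\bar A$ and $\bar A$ is a symmetric union of $k$ antipodal pairs, at cost $k^{k(k-1)/2}(n+k+1)^{k+1}\cdot O(\log(q/\eps))$; and that an $(aA+b)$-zero-sum is in particular a $B$-zero-sum as soon as $aA+b\subseteq B$. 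So for a general $B$ I want to pick a shift $z$ such that the shifted hole $\bar B-z$ avoids $0$ (i.e.\ $z\in B$, so $0$ lies in the allowed set after shifting) and contains as many antipodal pairs as possible; setting $\bar A:=(\bar B-z)\cup(z-\bar B)$, this is a symmetric subset of $\F_q\setminus\{0\}$, its number of antipodal pairs $k$ equals the number of antipodal classes met by $\bar B-z$, and \Cref{thm:general_cis_paired} with $a=1$, $b=z$ then outputs a nontrivial zero-sum whose coefficients lie in $(\F_q\setminus\bar A)+z=B\cap(2z-B)\subseteq B$.

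The main case is $2\le c\le(q-1)/2$. Here I invoke the first part of \Cref{fct:simple_3AP_and_antipodal_hole} to obtain $z\in B$ and $x\neq0$ with $z\pm x\notin B$; then $\bar B-z$ avoids $0$ and contains the pair $\{x,-x\}$, so it meets at most $c-1$ antipodal classes. Hence $k\le c-1\le(q-3)/2<\lfloor q/2\rfloor$, and $|\F_q\setminus\bar A|=q-2k\ge3$ so the allowed set has nonzero elements; \Cref{thm:general_cis_paired} then yields a $B$-zero-sum at cost $c^{O(c^2)}n^c\log(q/\eps)$ with failure probability $\eps$.

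For the remaining values of $c$ I use the small-$|B|$ algorithms. When $c$ is large, precisely $2\lfloor(q+3)/4\rfloor\le c$ (which holds for every $c\ge(q+3)/2$), I pick any two-element subset $A\subseteq B$ and run \Cref{thm:general_cis_size-two}, obtaining an $A$-zero-sum, hence a $B$-zero-sum, at cost $q^{O(q^2)}n^{2\lfloor(q+3)/4\rfloor}\log^2(1/\eps)\le c^{O(c^2)}n^c\log^2(1/\eps)$ (using $q\le2c$ in this regime). Together with the main case this leaves exactly one uncovered value, $c=(q+1)/2$ with $q\equiv1\pmod4$, where $2\lfloor(q+3)/4\rfloor=c+1$: if $q=5$ this is the genuinely special case $(c,q)=(3,5)$ and I simply take the $n^{c+1}$ bound \Cref{thm:general_cis_size-two} provides; if $q\ge13$ (the only other primes $\equiv1\pmod4$ with $q\ge5$) I apply \Cref{fct:middle_3AP}, getting a $3$-AP inside $B$, i.e.\ after a shift $0\in B$ and some nonzero $\pm y\in B$, so the shifted hole of size $(q+1)/2$ avoids $0$ and the class of $y$, lies in only $(q-3)/2$ antipodal classes, and by pigeonhole contains at least two antipodal pairs, giving $k\le c-2<\lfloor q/2\rfloor$ and cost $n^{k+1}\le n^{c-1}$ via \Cref{thm:general_cis_paired}. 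Finally $c=1$ is handled by \Cref{thm:general_cis_paired} with $k=1$ when $q\ge5$ (cost $n^2=n^{c+1}$), and by composing \FSubsetSum (\Cref{thm:intro_F3-subset-sum}) with \Cref{lem:cis_worst-avg} when $q=3$.

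The main obstacle is the bookkeeping: verifying that every pair $(c,q)$ with $2\le c\le q-2$ lands in exactly one of the regimes above with a matching bound, and in particular that the boundary $c\approx q/2$ is covered — there a generic shifted hole occupies $c-1=\lfloor q/2\rfloor$ antipodal classes, one too many for \Cref{thm:general_cis_paired}, which is precisely why the stronger input \Cref{fct:middle_3AP} is needed, and why its restriction $q\ge11$ forces $q\in\{5,7\}$ at that boundary to be caught separately (by \Cref{thm:general_cis_size-two}). A secondary point is the probability accounting: whenever an invoked subroutine is itself average-case (such as \Cref{lem:cis_worst-avg} or \Cref{thm:general_cis_subset-sum}), I run $O(\log(1/\eps))$ independent trials and union-bound, which is what produces the stated $\log(q/\eps)\log(1/\eps)$ factor; in the main case the $\log(1/\eps)$ is only slack.
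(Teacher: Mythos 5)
Your proposal is correct, and the case decomposition ($q=3$; $c=1$; $2\le c\le(q-1)/2$ via \Cref{fct:simple_3AP_and_antipodal_hole} and \Cref{thm:general_cis_paired}; large $c$ with $2\lfloor(q+3)/4\rfloor\le c$ via \Cref{thm:general_cis_size-two}; the boundary $c=(q+1)/2$ isolated to $q\equiv1\pmod 4$, with $q=5$ separate) is the same as the paper's. The one place you genuinely depart is the boundary case $c=(q+1)/2$, $q\ge13$: the paper observes that a $3$-AP in $B$ is, after an affine change of variables, the set $\{0,\pm1\}$, and so it suffices to run the worst-case $\SIS$ algorithm of \Cref{thm:general_cis_sis} at $k=\lfloor(q+3)/4\rfloor$ (producing a $(\pm 1)$-zero-sum) and then translate via \Cref{lem:cis_worst-avg}; this yields roughly $n^{\lceil q/4\rceil}$ samples. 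You instead stay inside the \Cref{thm:general_cis_paired} framework: shifting by the center $z$ of the $3$-AP makes the hole $\bar B-z$ avoid $\{0,\pm y\}$, and then the counting $2p+s=(q+1)/2$, $p+s\le(q-3)/2$ gives $p\ge 2$, hence $k=p+s\le c-2<\lfloor q/2\rfloor$ and a cost of roughly $n^{k+1}\le n^{c-1}$. Both comfortably fit the stated $n^c\cdot c^{O(c^2)}$ bound, and both need \Cref{fct:middle_3AP} in exactly the same role; the paper's route gets a sharper exponent ($\approx q/4$ vs.\ $\approx q/2$) on $n$, while yours is arguably more uniform in that every case with $c\le(q+1)/2$ is handled by the single mechanism ``symmetrize the shifted hole, count antipodal classes, invoke \Cref{thm:general_cis_paired}.'' The bookkeeping and probability accounting you describe match the paper.
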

\begin{proof}
Let $\bar B\subseteq\F_q$, which has size $1\le c\le q-2$.
Then we have the following cases. 

\paragraph{The $q=3$ case.}
Note that $c=1$ and $|B|=2$, which is the (shifted) \FSubsetSum problem. Hence we use \Cref{thm:intro_F3-subset-sum} and \Cref{lem:cis_worst-avg} to handle this case. The number of input vectors is
$$
O(n^2)\cdot\lceil\log(3/\eps)\rceil=n^{c+1}\cdot O(\log(1/\eps)).
$$

\paragraph{The $c=1$ and $q\ge5$ case.}
Define $A=\{-\lfloor q/2\rfloor+1,\ldots,\lfloor q/2\rfloor-1\}$.
Under an additive shift $b\in\F_q$, $A+b$ is contained in $B$.
Hence we can apply \Cref{thm:general_cis_paired} with $k=1$ to obtain a nontrivial $A+b\subseteq B$-zero-sum.
The number of input vectors is
$$
1+(n+2)^2\lceil\log(q/\eps)\rceil=n^{c+1}\cdot O\left(\log(q/\eps)\right).
$$

\paragraph{The $2\le c\le(q-1)/2$ case.}
By \Cref{fct:simple_3AP_and_antipodal_hole} we find $z\in\F_q$ and $x,y\in\F_q\setminus\{0\}$ such that $z-x,z+x\notin B$ and $z-y,z,z+y\in B$.
Define $A=B-z$ and $\bar A=\F_q\setminus A$. Then $-x,x\in\bar A$ and $-y,0,y\in A$.
This means that we can find $1\le a_1<\cdots<a_k\le\lfloor q/2\rfloor$ such that $\bar A\subseteq\{\pm a_1,\ldots,\pm a_k\}$; in particular, $1\le k\le c-1<\lfloor q/2\rfloor$ since $\pm x\in\bar A$.
Hence we can apply \Cref{thm:general_cis_paired} with $k$ to obtain a nontrivial $A+b\subseteq B$-zero-sum.
The number of input vectors is
$$
1+k^{k(k+1)/2}(n+2)^{k+1}\cdot\lceil\log(q/\eps)\rceil\le c^{O(c^2)}\cdot n^c\cdot\log(q/\eps).
$$

\paragraph{The $(q+1)/2\le c\le q-2$ and $q\equiv3\pmod4$ case.}
In this case, we pick an arbitrary $A\subseteq B$ of size $2$ and apply \Cref{thm:general_cis_size-two} to obtain a nontrivial $A\subseteq B$-zero-sum.
The number of input vectors is
$$
q^{O(q^2)}\cdot n^{2\lfloor(q+3)/4\rfloor}\cdot\log^2(1/\eps)\le c^{O(c^2)}\cdot n^c\cdot\log^2(1/\eps),
$$
where we compute $2\lfloor(q+3)/4\rfloor=(q+1)/2\le c$.

\paragraph{The $(q+3)/2\le c\le q-2$ and $q\equiv1\pmod4$ case.}
Here we still reduce $B$ into a set of size $2$ and obtain a same bound as above.
The number of input vectors is
$$
q^{O(q^2)}\cdot n^{2\lfloor(q+3)/4\rfloor}\cdot\log^2(1/\eps)\le c^{O(c^2)}\cdot n^c\cdot\log^2(1/\eps),
$$
where we compute $2\lfloor(q+3)/4\rfloor=(q+3)/2\le c$.

\paragraph{The $c=(q+1)/2$ and $q\ge11$ case.}
Let $A=\{0,\pm1\}$. By \Cref{fct:middle_3AP}, we can find a $3$-AP in $B$. Hence for some $a\in\F_q\setminus\{0\}$ and $b\in\F_q$, we have $aA+b\subseteq B$.
By applying \Cref{thm:general_cis_sis} with $k=\lfloor(q+3)/4\rfloor$, we have $\lfloor q/(2k)\rfloor=1$ and thus we can efficiently find a nontrivial $A$-zero-sum given $q^{O(q^2)}n^{\lfloor(q+3)/4\rfloor}$ vectors.
Then by \Cref{lem:cis_worst-avg}, we can find nontrivial $aA+b\subseteq B$-zero-sums with an extra blowup of $O(\log(q/\eps))$.
Hence the number of input vectors is
$$
q^{O(q^2)}n^{\lfloor(q+3)/4\rfloor}\cdot O(\log(q/\eps))\le c^{O(c^2)}\cdot n^c\cdot\log(1/\eps),
$$
where we compute $\lfloor(q+3)/4\rfloor\le(q+1)/2=c$.

\paragraph{The $c=(q+1)/2$ and $q=5$ case.}
Now $c=3$ and $|B|=q-c=2$. By \Cref{thm:general_cis_size-two}, the number of input vectors is
$$
q^{O(q^2)}\cdot n^{2\lfloor(q+3)/4\rfloor}\cdot\log^2(1/\eps)=n^{c+1}\cdot O\left(\log^2(1/\eps)\right),
$$
where we compute $2\lfloor(q+3)/4\rfloor=4=c+1$.
\end{proof}

\subsection{Further optimizations}\label{sec:general_cis_sis_optimize}

Here we discuss extra tricks that can improve the sample complexity $m$ of the results in this section.

\paragraph{Exploring sparsity.}
For clean presentation, we apply \Cref{lem:general_cis_reducible_to_zero-sum} with the simplest choice $r=1$ throughout the section. However it is easy to see that a better choice would be $r\approx\log_q(n)$ when $q$ is relatively small compared to $n$. Then the bound in \Cref{lem:general_cis_starting_point} would be $s\approx(1-q^{-1})n$, which becomes a multiplicative saving of roughly $(1-q^{-1})^k$ for $m$ in \Cref{lem:general_cis_reducible_to_zero-sum} and all later applications.

\paragraph{Dimension reduction.}
Since we apply \Cref{lem:general_cis_reducible_to_zero-sum} directly with $H=\F_q$ through the section, the dimension reduction idea from \Cref{thm:f3-zero-sum-less-weak} applies.
This can save another multiplicative factor of $1/k$ in \Cref{thm:general_cis_sis}, \Cref{thm:general_cis_subset-sum}, and \Cref{thm:general_cis_size-two}; and a factor of $1/(k+1)$ in \Cref{thm:general_cis_centered} and \Cref{thm:general_cis_paired}.

In a bit more detail, if $H=\F_q$, then in \Cref{lem:general_cis_reducible_to_zero-sum} each time we construct a reducible vector, we can safely project later vectors into its complementary space. This in general saves a factor of $1/(\ell+1)$ as $\sum_{i\le n}i^\ell\approx n^{\ell+1}/(\ell+1)$, where each $i^\ell$ is roughly the cost of \Cref{lem:general_cis_reducible_to_zero-sum} with $\ell+1$ partitioned sets for input vectors in a subspace of dimension $i$.

\paragraph{Iterative application.}
All our results in this section use only a single shot of \Cref{lem:general_cis_reducible_to_zero-sum}. It is natural to wonder if some iterative application similar to \Cref{thm:halving_sis} will be beneficial.
Consider the $\SIS$ problem where we want to find a nontrivial $(\pm\lfloor q/(2k)\rfloor)$-zero-sum.
If we apply \Cref{lem:general_cis_reducible_to_zero-sum} once, then we obtain \Cref{thm:general_cis_sis} and a bound of $m\approx k^{k^2/2}n^k$ samples.
Now we show how to improve it when $k$ can be factored as a product of small integers.

Say $k=q_1\cdots q_t$ where each $q_i\ge2$ is an integer. Assume the field size $q$ is much smaller than the vector dimension $n$ for simplicity.
Define $k_i=q_1\cdots q_i$ where $k_0=1$ and $k_t=k$.
We will iteratively apply \Cref{lem:general_cis_reducible_to_zero-sum} to $(\pm\lfloor q/(2k_i)\rfloor)$-zero-sum algorithms from $(\pm\lfloor q/(2k_{i-1})\rfloor)$-zero-sum algorithms. While there is some nontriviality subtlety that needs to be preserved in order to repeatedly apply \Cref{lem:general_cis_reducible_to_zero-sum}, it will just be a minor factor due to $q\ll n$.

The starting point is the $(\pm\lfloor q/2\rfloor)$-zero-sum algorithm $\calA_0$ from \Cref{lem:general_cis_starting_point} that uses roughly $m_0(n)=n$ input vectors.
Then we apply \Cref{lem:general_cis_reducible_to_zero-sum} to obtain a $(\pm\lfloor q/(2k_1)\rfloor)$-zero-sum algorithm that uses roughly $m_1(n)=q_1^{q_1^2/2}n^{q_1}$ input vectors.
Then we apply \Cref{lem:general_cis_reducible_to_zero-sum} to divide $q_2$ which ends up needing roughly
$$
m_2(n)\approx\prod_{j=1}^{q_2}m_1\left(q_2^{q_2-j}n\right)\approx\prod_{j=1}^{q_2}\left(q_2^{q_2-j}\cdot q_1^{q_1^2/2}n^{q_1}\right)\approx q_1^{q_1^2q_2/2}\cdot q_2^{q_2^2/2}\cdot n^{q_1q_2}
$$
input vectors.
Similar calculation shows that in the end, we can find $(\pm\lfloor q/(2k)\rfloor)$-zero-sum when the number of input vectors is 
$$
m_t(n)\approx n^{q_1\cdots q_t}\cdot\prod_{i=1}^tq_i^{(q_i^2/2)\cdot\prod_{j=i+1}^tq_j}.
$$
If each $q_i$ is roughly the same around $r:=k^{1/t}$, then it simplifies to
$$
m_t(n)\approx n^k\cdot\prod_{i=1}^tr^{r^{t-i+2}/2}
=n^k\cdot\left(r^{1/2}\right)^{\sum_{i=1}^tr^{t+2-i}}
\approx n^k\cdot\left(r^{1/2}\right)^{r^{t+1}}
=n^k\cdot r^{kr/2}.
$$
Therefore, if $r$ is small (say, constant), it improves the $k^{k^2/2}$ prefactor significantly to $O(1)^k$; and the worst case is $k$ being prime and $t=1$, for which it falls back to the one-shot case unless we are willing to sacrifice dependence on $n$.

%%%%%%%%%%%%%%%%%%%%%%%%%%%%%%%%%%%%%%%%%%%%%%%%%%

\section{Discussion}\label{sec:discussion}

In this section, we describe improvements, generalizations, and open directions; and also discuss prior works related to our paper.

\subsection{On our results}\label{sec:discussion_our}

\paragraph{Targeted sum and more general $\CIS$.}
While our work focuses on the zero-sum case in both the worst-case and average-case setting, it is meaningful to consider the case where we want to target some vector other than $0$.
This is a ``closest-vector problem (CVP)'' analogue, thinking of $\SIS$ as a ``shortest-vector problem (SVP)''.  The targeted version is at least as hard as the zero-sum version, although one might have an intuition it is ``not too much'' harder --- at least in the average-case.  There \emph{are} some differences though; for example, whereas $\F_q^n$-\SubsetSum is guaranteed to have a solution whenever $m > (q-1)n$ (see \Cref{fct:KO}), this is not true in the targeted case.
For the average case where input vectors are uniformly at random, our algorithm works with minor change.

In addition, one may want to study the generalization of $\CIS$ where the $i$th coefficient has its own allowed set $A_i\subseteq\F_q$. Since our arithmetic combinatorial results depend only on the size of the allowed sets, they also work here; same for the reduction to handle translation and dilation (\Cref{lem:cis_worst-avg}). Therefore our results on the $\CIS$ problem hold with the same bound in this more general setting.

\paragraph{Better runtime.}
We can make the runtime of our algorithms more explicit. Recall that our algorithms rely heavily on the weight reduction, either through the halving trick in \Cref{sec:halving_trick} or through the more general trick in \Cref{sec:general_cis}. It is easy to see that the reduction itself is computationally efficient in linear time, and the main runtime bottleneck lies in the base case where we find a nontrivial (sparse) linear dependence with every field element as an allowed coefficient. Such base case algorithms are presented in \Cref{lem:f3-zero-sum-zeros}, \Cref{lem:sis_halving_sparse}, and \Cref{lem:general_cis_starting_point}; and they boil down to the following simple linear algebra problems: find basis vectors in input vectors and make some invertible linear transform.

Such problems naturally benefit from fast matrix multiplication algorithms.
Let $2\le\omega<2.372$ denote a constant such that $n \times n$ matrix multiplication (and inversion) over $\F_q$ can be done in deterministic classical time $n^{\omega} \polylog(n,q)$.
See \cite{alman2025more} and references within for latest results.
The following \Cref{fct:fast_basis} provides what we need, and its proof can be found in \Cref{app:fast_linear_algebra}.

\begin{restatable}{fact}{fctfastbasis}\label{fct:fast_basis}
There is a deterministic algorithm, given any $m\ge n$ vectors in $\F_q^n$ and running in time $n^{\omega-1}m\cdot\polylog(n,q)$, that outputs a maximal linearly independent set among the vectors.
\end{restatable}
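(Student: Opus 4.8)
The plan is to reduce the task to computing the \emph{column rank profile} of the $n\times m$ matrix $M\in\F_q^{n\times m}$ whose columns are the given vectors (retaining the original index of each column). Recall that the column rank profile is the lexicographically first set of column indices whose columns span the column space of $M$; such a set is automatically a maximal linearly independent subset of the columns, so it suffices to compute it within the stated time budget. To obtain the $n^{\omega-1}m$ dependence, I would process the columns of $M$ in $\lceil m/n\rceil$ consecutive blocks $M=[M_1\mid M_2\mid\cdots]$, each an $n\times n$ matrix (the last possibly narrower), maintaining after block $b$ a set $P_b\subseteq[m]$ of selected column indices with $|P_b|\le n$, together with an invertible matrix $U_b\in\F_q^{n\times n}$ whose action puts the selected columns into reduced column-echelon form.

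To incorporate block $M_{b+1}$, I would first form $N=U_b M_{b+1}$ and then clear, inside $N$, the pivot rows already used by $P_b$ (each of these two operations is a single $n\times n$ matrix multiplication, hence $O(n^\omega)$ field operations). The remaining $(n-|P_b|)\times(\le n)$ submatrix then needs a rank-profile--revealing elimination: a deterministic block-recursive (PLUQ-type) decomposition that simultaneously reports which columns of $M_{b+1}$ enlarge the independent set and returns the change of basis putting the new selection into echelon form. Composing that change of basis with $U_b$ (one more $n\times n$ multiplication) yields $U_{b+1}$, and $P_{b+1}$ is updated accordingly; once $|P_b|=n$ one may stop. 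Since there are $\lceil m/n\rceil\le m/n+1$ blocks, each costing $O(n^\omega)$ field operations, the total is $O((m/n+1)\,n^\omega)=O(n^{\omega-1}m)$ field operations (using $m\ge n$), and multiplying by the $\polylog(n,q)$ cost of a single $\F_q$-operation together with the $O(\log n)$ bookkeeping overhead gives the claimed $n^{\omega-1}m\cdot\polylog(n,q)$ running time. Correctness is immediate: every column added to $P$ is independent of the previously selected columns by construction, and since each block is fully reduced against the current echelon form, no column that would have enlarged the span is ever skipped.

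The one non-elementary ingredient — and hence the main thing to get right — is the per-block step: the fact that, deterministically and in $O(n^\omega)$ field operations, one can compute the rank profile of an $n\times n$ (or $n\times n'$ with $n'\le n$) matrix \emph{and} output the accompanying transformation matrix. This is a known but somewhat technical block-recursive result, which I would invoke by citation (Ibarra--Moran--Hui, Storjohann, and the rank-profile--revealing Gaussian elimination of Jeannerod--Pernet--Storjohann) rather than reprove; everything else in the argument is plain matrix multiplication and index bookkeeping whose cost follows directly from the definition of $\omega$.
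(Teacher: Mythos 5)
Your proposal is correct and uses the same outer strategy as the paper: process the $m$ columns in $\lceil m/n\rceil$ blocks of at most $n$ columns each, maintain a running basis of size at most $n$, and pay $O(n^\omega)$ field operations per block, giving $O(n^{\omega-1}m)$ overall. The difference is in how the per-block step is implemented. The paper proves a self-contained divide-and-conquer subroutine (\Cref{fct:fast_basis_simple}): split the $\le 2n$ vectors in half, recurse on the first half to get $V$, project the second half via $P = V(V^\top V)^{-1}V^\top$, and recurse there. You instead maintain an explicit change-of-basis matrix $U_b$ that keeps the current selection in reduced column-echelon form and invoke a rank-profile--revealing elimination (Ibarra--Moran--Hui / Jeannerod--Pernet--Storjohann) as a black box. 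Your route has the advantage of being robust over finite fields: $V^\top V$ need not be invertible over $\F_q$ even when $V$ has linearly independent columns (e.g.\ a column whose entries sum of squares vanishes mod $q$ is self-orthogonal), so the paper's projector formula as written requires some repair, whereas echelon-form elimination avoids this entirely. The trade-off is that your argument is not self-contained, deferring the deterministic $O(n^\omega)$ rank-profile computation to the cited literature rather than giving the (short) recursive derivation.
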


Using \Cref{fct:fast_basis}, the runtime of our $\SIS$ algorithms can be optimized to $n^{\omega-1}m\cdot\polylog(n,q)$, and the runtime of our \SubsetSum and $\CIS$ algorithms is $\left(n^{\omega-1}m+\poly(q)\right)\cdot\polylog(n,q)$. By way of comparison, the CLZ quantum algorithm takes time at least $m^{\omega}$; so, for example, our \FSubsetSum algorithm takes $n^{\omega+1}$ classical time, whereas CLZ's is $n^{2\omega}$ quantum time.

\paragraph{Better sample complexity.}
The sample complexity $m$ is the main object we optimize in the paper. In \Cref{sec:general_cis}, we prioritized clarity over optimality and obtained simple bounds that have targeted dependence on $n$ but worse dependence in terms of other parameter like $q$ (field size) or $k$ (the partition number in \Cref{lem:general_cis_reducible_to_zero-sum}). In \Cref{sec:general_cis_sis_optimize}, we discussed tricks that bring in additional savings on existing bounds in various settings. However it remains a challenge to significantly improve the dependence on $q$ and $k$ for all cases; and it is even less clear how to (or whether it is possible to) improve the dependence on $n$.

For concreteness, we ask if the $k^{O(k^2)}n^k$ bound in \Cref{thm:general_cis_sis} can be improved to $O(n^k)$ (which would match \Cref{thm:halving_sis} but hold for general $k$) or even $n^{o(k)}$.

Also for the $\F_q^n$-\SubsetSum problem, we ask if it is possible to solve worst-case instances with $O_q\left(n^{O(q)}\right)$ input vectors, which, if true, would improve the $O_q\left(n^{O(q\log q)}\right)$ bound in \cite{II24} and match the average-case complexity.
Finally for the \FSubsetSum problem, we ask if it is easy to solve with only $m=o(n^2)$ input vectors.

\paragraph{General finite field.}
While all the finite fields in our paper are prime field, most of our results work for general finite fields. However it may not be necessary to do so as we can always view the field elements as a vector of number in a prime field.
In a bit more detail, let $p$ be a prime and $t\ge1$ be an integer such that $q=p^t$.
We can view $\F_q$ as a vector space $\F_p^t$; thus dimension-$n$ input vectors $v_1,\ldots,v_m$ over field $\F_q$ can be converted into dimension-$tn$ input vectors $v_1',\ldots,v_m'$ over field $\F_p$.
In addition, any zero-sum of $v_1',\ldots,v_m'$ is a zero-sum of $v_1,\ldots,v_m$ with the same set of coefficients.
Hence we may alternatively just use the zero-sum algorithms over $\F_p$.

\paragraph{Ring of integers modulo $q$.}
Another meaningful generalization is to consider input vectors with entries in $\Z/q\Z$ where $q\ge2$ is an integer. This reduces to $\F_q$ if $q$ is a prime. For composite number $q$, our algorithms are still applicable with minor tweak. Take results in \Cref{sec:general_cis} for example. The basic zero-sum algorithm in \Cref{sec:general_cis_basic_zero-sum} is fine as we can do Gaussian elimination with the subtractive Euclidean algorithm.
Then the reductions in \Cref{sec:general_cis_zero-sum_to_reducible} and \Cref{sec:general_cis_reducible_to_zero-sum} are already good as only addition and subtraction are involved.

On the other hand, tweaking our algorithms may not be necessary. There is a simple reduction, which is essentially \cite[Appendix A]{chen2022quantum} and is credited to Regev, that converts a $(\pm h_1)$-zero-sum algorithm $\calA_1$ over $\Z/q_1\Z$ and a $(\pm h_2)$-zero-sum algorithm $\calA_2$ over $\Z/q_2\Z$ into a $(\pm h_1h_2)$-zero-sum algorithm over $\Z/(q_1q_2\Z)$.
The observation is that $v\equiv0\pmod{q_1q_2}$ is equivalent to (1) $v\equiv0\pmod{q_1}$ and (2) $(v/q_1)\equiv0\pmod{q_2}$.
Therefore we can first run $\calA_1$ to obtain zero-sums modulo $q_1$ (i.e., satisfying condition (1)); then run $\calA_2$ on those zero-sums divided by $q_1$ to obtain zero-sums modulo $q_2$ (i.e., satisfying condition (2)).
The final sample complexity is the sample complexities multiplied.

\subsection{Works by Imran, Ivanyos, Sanselme, and Santha}\label{sec:discussion_ivanyos}

The most comparable papers to the present one are a line of work by Ivanyos and coauthors~\cite{ISS12,IS17,II24}.  These papers are all motivated by the Chevalley--Warning theorem and by designing \emph{quantum} algorithms for versions of the Hidden Subgroup Problem; yet, all contain intermediate results that can be viewed as classical algorithms for $\SIS$ or $\CIS$.  

The work of Ivanyos--Sanselme--Santha~\cite{ISS12} considered the problem of finding nontrivial solutions to (worst-case) systems of ``diagonal'' quadratic equations over $\F_q$, i.e., equations of the form $\sum_j h_{j} x_j^2 = 0$.  When there are $m$ equations, this can be regarded as the $\CIS$ problem with allowed set $A = \{a^2 : a \in \F_q\}$ the quadratic residues.  They gave an efficient algorithm whenever $m \geq (n+1)(n+2)/2$.  As noted in \Cref{sec:f3}, when $q = 3$, this problem is equivalent to \FSubsetSum.

The followup work of Ivanyos and Santha~\cite{IS17} generalized the problem and algorithm to solving systems of diagonal $d$th-power equations; this is equivalent to $\CIS$ with $A = \{a^d : a \in \F_q\}$.  They gave a $\poly(m, \log q)$-time classical algorithm whenever 
$$
m \geq d^{d(d-1)\lceil\log_2(d+1)\rceil/2} (n+1)^{d \lceil \log_2(d+1)\rceil} \approx d^{d^2 \log d} n^{d \log d}.
$$
They also observed that taking $d = q-1$ gives an $\F_q^n$-\SubsetSum algorithm.

Finally, Imran and Ivanyos~\cite{II24} also investigated $\F_q^n$-\SubsetSum and slightly sharpened the result of~\cite{IS17}, finding solutions in $\poly(m)$ time whenever 
$$
m \geq q^{C q \log^2 q} \cdot n^{\lfloor q/2 \rfloor \cdot \lceil \log_2 q \rceil} \approx q^{q \log^2 q} \cdot n^{q \log q}.
$$
A central ingredient in their method is what we call the ``halving trick''.  The base version of this trick lets them solve $\SIS$ with $s = \lfloor q/4 \rfloor$ whenever $m \geq C q n^2$, and they also iterate it to achieve smaller $s$.  In  \Cref{sec:halving_trick}, we give an improvement on the halving trick that eliminates the $q$-factor in the $m$-dependence: it achieves $s = \lfloor q/4 \rfloor$ whenever $m \geq C n^2$.  This means the halving trick can be employed effectively even when $q$ is exponentially large in~$n$.  One can get a lot of mileage out of the halving trick, but our strongest results (\Cref{sec:cis}) are ultimately obtained by abandoning it in favor of a more sophisticated generalization in \Cref{sec:general_cis}.

\subsection{Quantum algorithms from Regev's reduction}\label{sec:discussion_regev}

Aside from the work of Chen, Liu, and Zhandry \cite{chen2022quantum}, at least two other candidate exponential quantum speedups have been derived via Regev's framework of converting an $A$-$\CIS$-type problem (usually in its equivalent dual version) to a quantum decoding~\cite{chailloux2024quantum} problem.  These are: the Yamakawa--Zhandry paper~\cite{yamakawa2024verifiable}, which gives a provable exponential quantum speedup in the random oracle query model; and, the DQI paper of Jordan et~al.~\cite{jordan2024optimization}, which gives a seeming exponential quantum speedup for the ``OPI'' problem of fitting a low-degree univariate polynomial over~$\F_q$ to a range of points (see also the followup~\cite{chailloux2024quantum}).  

It is interesting to understand what aspects of these problems cause them to resist our dequantization efforts.  The fact that both problems are ``worst-case/structured'' --- meaning that the input codes are not assumed to be chosen randomly --- does not necessarily pose a problem for dequantization, as many of our algorithms work in the worst case.
The Yamakawa--Zhandry problem has $q$ exponentially large as a function of~$n$, but this too does not seem to be an inherent problem for classical algorithms, as our algorithms mostly have $\log q$ dependence.  The fact that the set~$A$ in Yamakawa--Zhandry has no structure --- indeed, is random --- certainly makes it harder to dequantize.
That said, we do not know how to dequantize even the OPI problem when the set $A$ is assumed to be an interval of width~$q/2$, the most structured possible~$A$.
It seems that the biggest difficulty in dequantizing both Yamakawa--Zhandry and the DQI algorithm for OPI is the fact that $m = O(n)$ for these problems; whereas, our efficient classical algorithms do not seem to be able to get off the ground unless $m \geq \Omega(n^2)$.

\subsection{\texorpdfstring{$\SIS$}{SIS-inf}-related problems}\label{sec:sis_related}

Here we mention some problems that are either equivalent to $\SIS$, or very nearly equivalent.

\paragraph{Versus traditional \SubsetSum.}  $\F_q^n$-\SubsetSum is a variant of the traditional \SubsetSum problem (with target~$0$), one of the most canonical $\mathsf{NP}$-complete problems.  In fact, the textbook reduction from (say) \textsc{1-in-3-Sat} to \SubsetSum works out much more cleanly for $\F_q^n$-\SubsetSum, for any $q > 2$, since no tricks are needed to prevent carries.\footnote{The reduction is as follows. Given an $m$-clause $n$-variate \textsc{1-in-3-Sat} instance, produce $2n+1$ vectors in dimension $n+m$. For each literal $\ell_i$, make a vector with $1$ in the $i$th coordinate and in the $(i+j)$th coordinate whenever $\ell_i$ satisfies the $j$th clause.  Finally, include the vector $-(1, \dots, 1)$. Correctness uses $1 \not \in \{0,2, 3\}$ mod~$q$.}  We also remark that over integers, the $\{\pm 1\}$-$\CIS$ problem is known as \textsc{Partitioning} (one of Garey and Johnson's 6 canonical $\mathsf{NP}$-complete problems) or \textsc{Pigeonhole-Equal-Sums}.  It is a ``total'' problem by the Pigeonhole Principle whenever $2^m > q^n$. This implies that $m > (\log_2 q) n$, and hence the problem is in the $\mathsf{TFNP}$ class $\mathsf{PPP}$.
We also mention that a variant of $\SIS$ is proven to be $\mathsf{PPP}$-complete \cite{SZZ18}.

\paragraph{The equivalent ``dual'' problem: LWE.}  We recall the \emph{equivalent}, linear algebraic dual problem to $\SIS$.
Given input $H \in \F_q^{n \times m}$, the $\SIS$ problem may be described as seeking a (nonzero) \emph{codeword} $x$, satisfying $\|x\|_\infty \leq s$, in the $\F_q$-linear code whose parity-check matrix is~$H$.  The problem is easily interreducible to the version in which the code is presented by a \emph{generator matrix} $G \in \F_q^{k \times m}$, $k \coloneqq m-n$.  Now the task may be described as seeking an unconstrained $z \in \F_q^k$ such that $zG$ is nonzero but ``short'', $\|zG\|_\infty \leq s$.

This problem looks even more familiar if one starts from the more general targeted version of $\SIS$.  Then, given $H$ and target $t$, one is seeking a short codeword~$x$ in the \emph{affine} code defined by $Hx = t$. In the generator matrix formulation, this is seeking an unconstrained $z\in \F_q^k$ such that $\|zG - b\|_\infty \leq s$ for an arbitrary $b$ satisfying $Hb = t$.  Equivalently, this is seeking a mod-$q$ solution $z \in \F_q^k$ to the $m$ ``noisy equations'' 
\begin{equation}
    g_i \cdot z \approx b_i,
\end{equation}
where $\approx$ denotes the $\ell_\infty$ norm of their difference is at most $s$.
In this (equivalent) formulation, the problem strongly resembles the Learning With Errors (LWE) problem. The main differences are:
\begin{itemize}
    \item LWE usually assumes one is in the \emph{planted random} case, where a ``secret'' solution $z^*$ is chosen at random, then the vectors $g_i$ are chosen uniformly at random, and finally each $b_i$ is taken to be the planted value $g_i \cdot z^*$ with random noise of ``width'' $s$. The noise is typically discrete-Gaussian of standard deviation~$s$, but uniform on $-s, \dots, s$ is also reasonable.
    \item In LWE, the parameters are usually chosen so that $m = k^C$ for some (possibly large) constant~$C$.  In the (equivalent) targeted $\SIS$ problem, this corresponds to ``$m$'' (the number of input vectors) being only slightly larger than ``$n$'' (the dimension): $m = n + n^{1/C}$.
\end{itemize}
Recall that when the input is uniformly random (not planted), there is unlikely to be a solution unless $m \geq (\log_2 q) n$.  Thus in the ``LWE regime'' of $m = n + n^{1/C}$, it only makes sense to study the refutation problem or the planted problem. In the present paper, our $\SIS$ results are only for $m$ at least $\Omega(n^2)$, so we have nothing to say about LWE.

\paragraph{LIN-SAT.} 
The equivalent dual version of $\CIS$ is a common viewpoint for the recent quantum algorithms using Regev's reduction.  In this viewpoint, the input is an $\F_q$-linear code $\calC \subseteq \F_q^m$ of dimension $k= m-n$ together with a subset $A \subseteq \F_q$. The task is to find a codeword $y \in \calC$ such that $y_i \in A$ for all $i \in [m]$.  
In the targeted and even more general version, this would mean $y_i \in b_i + A_i$ for all~$i$, where $b$ is also part of the input.

This form of the problem was called \textsc{LIN-SAT} in \cite{jordan2024optimization}.
This viewpoint is adopted in the Yamakawa--Zhandry problem~\cite{yamakawa2024verifiable}, where $q$ is exponentially large in~$m$, $\calC$ is an efficiently decodeable, list-recoverable code (specifically, a folded Reed--Solomon code), and $A \subseteq \F_q$ is a \emph{random} set of cardinality roughly $q/2$.
It is also the viewpoint in the Optimal Polynomial Interpolation (OPI) problem from~\cite{jordan2024optimization}, where we work in the targeted version, $q \approx m$, $k \approx m/10$, and $\calC$ is the Reed--Solomon code.

\paragraph{The $q = 3$ case.}  
We remark that the $q = 3$ case of $\{\pm 1\}$-$\CIS$ is particularly attractive. In its original, primal version, it is asking to partition a sequence of $m$ vectors from $\F_q^n$ into two subsequences of equal sum.  In its dual version, it is asking to find a codeword in a given $\F_3$-linear code with all symbols in $\{\pm 1\}$, i.e., at maximal Hamming distance~$m$ from the all-zero codeword.  This ``ternary syndrome decoding with maximal weight'' problem was carefully studied in the context of the Wave cryptosystem~\cite{debris2019wave}. To write the problem ``LWE''-style, it is equivalent to solving a system of \emph{disequations} over~$\F_3$, 
\begin{equation}
    g_i \cdot s \neq 0.
\end{equation}
This problem of \emph{learning from disequations}, which also makes sense over fields larger than~$\F_q$, has been studied in several prior works, dating back to Friedl et~al.~\cite{FIMSS02}, and developed by Ivanyos~\cite{Iva07}, Arora--Ge~\cite{arora2011new}, and Ivanyos--Prakash--Santha~\cite{IPS18}.

\subsection{Zero-sum theory}\label{sec:related_zero-sum}

The following arguably surprising fact dates to the late 1960s.

\begin{fact}[\cite{Ols69,van69}]\label{fct:KO}
    Let $q$ be a prime.
    Given sequence $h_1, \dots, h_m \in \F_q^n$, there is always a nonempty subsequence summing to~$0$ provided $m > (q-1)n$ (and this bound is tight).
\end{fact}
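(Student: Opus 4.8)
The plan is to prove the upper bound $m > (q-1)n$ by induction on $n$, using the Chevalley–Warning theorem as the engine. The base case $n=1$ is exactly the Erd\H{o}s–Ginzburg–Ziv style statement (in fact the classical $\mathsf{EGZ}$ theorem handles $n=1$ with $m = 2q-1$, but for $n=1$ any $m \geq q$ suffices by a pigeonhole/polynomial argument), so the real content is the inductive step. For the inductive step with $m > (q-1)n$ vectors in $\F_q^n$, I would first peel off a guaranteed zero-sum in the last coordinate. Concretely, consider the scalars $h_1[n], \dots, h_m[n] \in \F_q$; since $m > (q-1)n \geq q$ (assuming $n \geq 1$), one can repeatedly extract nonempty subsequences summing to $0$ in this single coordinate — each extraction uses at most $q$ of the scalars. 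This produces many disjoint blocks $B_1, B_2, \dots$ each with $\sum_{i \in B_j} h_i[n] = 0$; the number of full blocks we can guarantee is at least $\lfloor (m - (q-1))/ ? \rfloor$, and one checks the arithmetic gives at least $(q-1)(n-1) + 1$ blocks. For each block $B_j$, form the \emph{projected vector} $\tilde h_j \in \F_q^{n-1}$ obtained by dropping the (now-zero) last coordinate of $\sum_{i \in B_j} h_i$. Apply the induction hypothesis to the $\tilde h_j$'s in $\F_q^{n-1}$: there is a nonempty sub-collection of blocks whose projected sum is $0$, hence (since the last coordinates already vanish) whose full sum over the original vectors is $0$ in $\F_q^n$. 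This is a nonempty zero-sum subsequence of the original sequence.

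The cleanest way to package the per-coordinate extraction, and the one I would actually write, is via Chevalley–Warning directly in $n$ dimensions rather than inductively: set $N = (q-1)n + 1$, introduce variables $x_1, \dots, x_N$, and consider the system of $n$ polynomial equations $P_t(x) := \sum_{i=1}^{N} h_i[t]\, x_i^{q-1} = 0$ for $t = 1, \dots, n$. Each $P_t$ has degree $q-1$, so $\sum_t \deg P_t = (q-1)n < N$, and Chevalley–Warning guarantees that the number of common solutions in $\F_q^N$ is divisible by $q$; since $x = 0$ is one solution, there is a nonzero solution $a = (a_1, \dots, a_N)$. Letting $S = \{i : a_i \neq 0\}$, and using that $a_i^{q-1} = 1$ for $i \in S$, the equations become $\sum_{i \in S} h_i[t] = 0$ for every $t$, i.e. $\sum_{i \in S} h_i = 0$ with $S$ nonempty. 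This is the whole proof of the upper bound in a few lines.

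For tightness, I would exhibit the standard extremal example: take $n$ blocks of $q-1$ copies each of the standard basis vectors $e_1, \dots, e_n$ (so $m = (q-1)n$). Any nonempty subsequence summing to $0$ must use, in each coordinate $t$, a number of copies of $e_t$ that is $\equiv 0 \pmod q$; but only $q-1 < q$ copies are available and we cannot use $0$ of all of them (nonemptiness), contradiction — so no zero-sum exists, showing $m = (q-1)n$ is not enough.

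The main obstacle is purely bookkeeping: stating Chevalley–Warning in the exact form needed (divisibility of the solution count, over a prime field, for a system whose total degree is strictly below the number of variables) and making sure the ``nonzero solution'' cleanly translates to ``nonempty subsequence'' via the identity $a^{q-1} \in \{0,1\}$. There is no genuine difficulty beyond citing Chevalley–Warning correctly and handling the trivial edge cases ($n = 0$, or $q = 2$ if one wanted it, though here $q$ is an odd prime throughout).
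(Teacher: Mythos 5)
Your final argument (the direct Chevalley--Warning one) is correct and complete, and it is in fact the standard proof of the Olson/van Emde Boas bound: take $N=(q-1)n+1$ of the vectors, set up $n$ equations $\sum_i h_i[t]\,x_i^{q-1}=0$ each of degree $q-1$, note the total degree $(q-1)n$ is strictly less than $N$, apply Chevalley--Warning to get a nonzero solution $a$, and read off $S=\{i: a_i\neq 0\}$. Your tightness example is also correct. The paper presents a closely related but deliberately self-contained variant: rather than citing Chevalley--Warning as a black box (and using the divisibility-of-solution-count conclusion), it introduces the ``zero-detector'' polynomial $Z(x)=\prod_i(1-x_i^{q-1})$ alongside a ``satisfaction'' polynomial $\mathrm{SAT}(x)$, and shows that the difference $\mathrm{OK}=\mathrm{SAT}-Z$ is nonzero as a function by examining its unique individual-degree-$\leq q-1$ representation and observing that the top-degree monomial $\pm x_1^{q-1}\cdots x_m^{q-1}$ from $Z$ cannot be cancelled because $\deg\mathrm{SAT}\leq(q-1)^2 n<(q-1)m$. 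That route trades a citation for a short, elementary degree comparison, essentially re-deriving the needed Chevalley--Warning consequence inline; your version is shorter if one is willing to cite the theorem, and the two are equivalent in substance. One caveat: the inductive sketch you open with (peeling off per-coordinate zero-sum blocks) is left with a ``?'' and does not actually work out as stated --- with $m>(q-1)n$ and blocks of size up to $q$ you cannot in general guarantee $(q-1)(n-1)+1$ disjoint blocks (for instance $q=5$, $n=2$, $m=9$ gives at most one block but the induction would need five) --- but since you explicitly abandon that route in favor of the direct Chevalley--Warning argument, this does not affect the correctness of the proof you ultimately give.
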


In other words, for $m > (q-1)n$ the $\F_q^n$-\SubsetSum problem (and hence $\SIS$ for any~$s$) is ``total'': there is always a solution.  Unlike other similar constraint satisfaction problems, this puts $\F_q^n$-\SubsetSum into the ``$\mathsf{TFNP}$'' framework; in particular, for $m > (q-1)n$ the task is $\mathsf{PPA}_q$~\cite{GKSZ20,SZZ18}.

\Cref{fct:KO} holds more generally for $p$-groups, and is one of the seminal results in the field of \emph{zero-sum theory}.  For a survey of this research area, which investigates \SubsetSum and related problems like $\SIS$ with $s = 1$ over abelian groups, see e.g.,~\cite{GG06}.

The original proof of \Cref{fct:KO} was via an exponential-time algorithm.  These days it is common to derive it from the Chevalley--Warning Theorem, or more generally Combinatorial Nullstellensatz.  For interested readers, we record here a streamlined proof.

\begin{proof}[Proof of \Cref{fct:KO}]
    Recall that $x \mapsto x^{q-1}$ maps zero/nonzero to $0/1$ in $\F_q$. Thus
    we need to find nonzero $x \in \F_q^m$ such that $\text{EQ}_i(x) \coloneqq \sum_{j=1}^m h_{j}[i]x_j^{q-1} = 0$ for $i = 1\dots n$, where $h_j[i]$ is the $i$th coordinate of $h_j$.
    We encode the conditions with polynomials:
    \begin{equation}
        \text{SAT}(x) \coloneqq (1-\text{EQ}_1(x)^{q-1})(1-\text{EQ}_2(x)^{q-1}) \cdots (1-\text{EQ}_n(x)^{q-1})  \in \{0,1\}
    \end{equation}
    signifies whether $x$ gives a zero-sum, and 
    \begin{equation}
        Z(x) \coloneqq (1-x_1^{q-1})(1-x_2^{q-1})\cdots (1-x_m^{q-1}) \in \{0,1\}
    \end{equation}
    detects whether $x = 0$.
    Thus we seek any $x \in \F_q^m$ such that $\text{OK}(x) \coloneqq \text{SAT}(x) - Z(x) \neq 0$.
    Recall that every function $f : \F_q^m \to \F_q$ has a unique representation as a polynomial with individual degree at most $q-1$, obtained by taking any interpolating polynomial and reducing via $x_i^q \mapsto x_i$.  But even after reduction, $\text{OK}(x)$ will have a term $\pm x_1^{q-1} \cdots x_m^{q-1}$ of degree $(q-1)m$ coming from $Z(x)$. This cannot be canceled out from $\text{SAT}(x)$ because $\text{SAT}$ has degree at most $(q-1)^2n$, which is strictly less than $(q-1)m$ by assumption.
    Thus $\text{OK} : \F_q^m \to \F_q$ has unique polynomial representation which is nonzero, and is thus a nonzero function, i.e., there is at least one $x \in \F_q^m$ with $\text{OK}(x) \neq 0$.
\end{proof}

\subsection{Post-quantum cryptosystem}\label{sec:discussion_crypto}

We discuss two recent proposals for NIST post-quantum cryptosystems whose security relies on the hardness of variants of $\SIS$.  We stress that both cryptosystems have $m \approx 2n$, whereas our algorithms require $m \geq \Omega(n^2)$.  

The \emph{Wave} signature scheme~\cite{debris2019wave,BCCCDGKLNSST22} is essentially based on the assumed hardness of average-case $\{\pm 1\}$-$\CIS$ over $\F_3^n$, which is equivalent in hardness to average-case \FSubsetSum.  More precisely, it is based on a version with: (i)~a planted trapdoor; (ii)~a relaxation that only ``most'' coordinates of the solution need be in $\{\pm 1\}$. Nevertheless, the most stringent cryptanalysis of the system~\cite{BCDL19} focused on the basic average-case $\CIS$ problem.  The recommended parameter setting for optimal security was roughly $m \approx 2n$, and \cite{BCDL19} presented convincing evidence that a break required $2^{\Omega(n)}$ time.  For $m \gg n$, a Blum--Kalai--Wasserman/Wagner-style algorithm~\cite{blum2003noise,wagner2002generalized,ducas2025wagner} will solve the problem in time $\exp(O(n/\log(m/n)))$.
But it is interesting to speculate if this can be improved for $m = n^{c}$ where $1 < c < 2$.
We know from our \Cref{thm:intro_F3-subset-sum} that the task becomes easy once $m \ge n^2/3$.

The \cite{DKLLSSS18,LDKLSSS19} CRYSTALS-Dilithium signature scheme was selected for standardization by NIST's post-quantum cryptography initiative.  One of the key hardness assumptions underlying its security is ``Module-$\SIS$'', which is a variant of $\SIS$.  A concrete, comparable version of $\SIS$ that is presumed intractable has $q \approx 2^{23}$, $s \sim q/8$, and $m \sim 1.9 n$, with $n = 1280$.  Our $\SIS$ algorithms are successful at handling $q \gg n$, but require $m \geq Cn^4$ in order to achieve $s \sim q/8$.  Again, it would be interesting to investigate subexponential-time tradeoffs that interpolate between the polynomial-time-solvable case of $m \approx n^4$ and the presumed exponential-time case of $m \approx n$.

\section*{Acknowledgments}

We are very grateful to David Gosset for his contributions to the early stages of this work. RO thanks Siddhartha Jain for helpful discussions.  KW thanks Ce Jin, Qipeng Liu, and Hongxun Wu for relevant references. We acknowledge the use of Gemini and ChatGPT to search the literature and suggest proof strategies. We thank anonymous reviewers for helpful comments.

KW is supported by the National Science Foundation under Grant No. DMS-2424441, and by the IAS School of Mathematics.

\bibliographystyle{alphaurl} 
\bibliography{ref}

\appendix

\section{\textit{Ad hoc} arithmetic combinatorics}\label{app:arith_comb}

We prove the arithmetic combinatorial results here. 
While significantly better bounds are known \cite{kelley2023strong,leng2024improved} using more sophisticated techniques, we choose to provide what we need with minimal efforts as this is not the focus of our work.

We first prove \Cref{fct:lev_long_AP}.

\fctlevlongAP*

\begin{proof}
Let $\bar A=\F_q\setminus A=\{a_1,\ldots,a_c\}$ where $c=|\bar A|\le\log_4(q+2)$.
By translation, we assume $0\notin\bar A$.
Partition $\F_q\setminus\{0\}$ into $4$ contiguous intervals $I_1,I_2,I_3,I_4$, each of length at most $\lceil(q-1)/4\rceil$.

For each $s\in\F_q\setminus\{0\}$, define $v^{(s)}\in\{0,1,2,3\}^c$ where the $i$th entry of $v^{(s)}$ indicates the interval that $s\cdot a_i$ lies in, i.e., $v^{(s)}[i]=b\in\{0,1,2,3\}$ if $s\cdot a_i\in I_b$.
Now we divide into the following cases.
\begin{itemize}
\item If for some $s$ and $b\in\{0,1,2\}$ we have $v^{(s)}[i]=b$ for all $i\in[c]$, then $s\bar A$ is contained in $I_b$, which is an interval of length $|I_b|\le\lceil(q-1)/4\rceil\le(q-1)/2$. This means $sA$ contains an interval of length at least $(q+1)/2$, which is an AP of the same length in $A$ after dilation.
\item The above cases correspond to $4$ patterns in $\{0,1,2,3\}^c$. If none of them appear, it remains $4^c-4$ possible patterns. On the other hand, we have $q-1$ choices of $s$. Since $q-1\ge4^c-3$, there exist distinct $s,s'\in\F_q\setminus\{0\}$ such that $v^{(s)}=v^{(s')}$.
Then for $\bar s=s-s'\ne0$, each $\bar s\cdot a_i=s\cdot a_i-s'\cdot a_i\in(I_b-I_b)\setminus\{0\}$ for some $b\in\{1,2,3,4\}$.
Since each $I_b$ is an interval of length at most $\lceil(q-1)/4\rceil$, every element in $I_b-I_b$ has absolute value at most $\lceil(q-1)/4\rceil-1\le(q-1)/4-1/2$ since $q-1\pmod4\in\{0,2\}$.
This means $\bar sA$ contains the interval of length $q-2\left((q-1)/4-1/2\right)-1=(q+1)/2$, which is an AP of the same length in $A$ after dilation.
\qedhere
\end{itemize}
\end{proof}

Now we turn to \Cref{fct:simple_3AP_and_antipodal_hole} and \Cref{fct:middle_3AP}.

\fctsimpleAPandantipodalhole*

\begin{proof}
Define $\bar A=\F_q\setminus A$.
For distinct $u,v\in\bar A$, let $z(u,v)=\frac{u+v}2$, which is well-defined as $q\ge3$.
Consider $Z=\{z(u,v)\colon u,v\in\bar A\text{ and }u\ne v\}$. We will show that $Z\cap A\ne\emptyset$.

For each $z\in\bar A$, we have
\begin{equation}\label{eq:fct:simple_3AP_and_antipodal_hole_1}
|\{(u,v)\colon u,v\in\bar A\text{ and }u\ne v\text{ and }z(u,v)=z\}|\le2\lfloor(c-1)/2\rfloor,
\end{equation}
since $u,z(u,v),v$ is a nontrivial $3$-AP in $\bar A$.
Hence 
$$
P:=\sum_{z\in\bar A}|\{(u,v)\colon u,v\in\bar A\text{ and }u\ne v\text{ and }z(u,v)=z\}|\le2c\lfloor(c-1)/2\rfloor.
$$
On the other hand, we have
$$
Q:=\sum_{z\in Z}|\{(u,v)\colon u,v\in\bar A\text{ and }u\ne v\text{ and }z(u,v)=z\}|
=|\{(u,v)\colon u,v\in\bar A\text{ and }u\ne v\}|=c(c-1).
$$
Assume towards the contradiction that $Z\setminus\bar A=Z\cap A=\emptyset$.
Then $Z\subseteq\bar A$ and hence $P\ge Q$.
\begin{itemize}
\item If $c$ is even, then $Q=c(c-1)>2c\lfloor(c-1)/2\rfloor=P$. A contradiction.
\item If $c$ is odd, then $Q=c(c-1)=2c\lfloor(c-1)/2\rfloor=P$. This means every inequality in \Cref{eq:fct:simple_3AP_and_antipodal_hole_1} is an equality. That is, for every $z\in\bar A$, the set $\bar A\setminus\{z\}$ is formed by $(c-1)/2$ nontrivial antipodal pairs $\{u,v\}$ with the common center $z$.
Note that each such a pair $\{u,v\}$ satisfies $u+v=2z$.
Hence $\sum_{u\in\bar A}u=(1+2\cdot(c-1)/2)\cdot z=c\cdot z$ holds for any $z\in\bar A$, which means $c\cdot z=c\cdot z'$ for any $z,z'\in\bar A$.
Since $2\le|\bar A|=c<q$, this is impossible.
\end{itemize}
Now we fix an arbitrary $z\in Z\cap A$ and arbitrary $u\ne v\in\bar A$ satisfying $z(u,v)=z$. Define $x=z-u$. Then \Cref{itm:fct:simple_3AP_and_antipodal_hole_1} immediately holds.

To find $y$ satisfying \Cref{itm:fct:simple_3AP_and_antipodal_hole_2}, we observe that $\{z-x,z+x\}\subseteq\bar A\subseteq\F_q\setminus\{z\}$ by \Cref{eq:fct:simple_3AP_and_antipodal_hole_1}.
Since $q\ge3$, we know that $\F_q\setminus\{z-x,z,z+x\}$ is a disjoint partition of $(q-3)/2$ antipodal pairs $\{z-y,z+y\}$ for $y\notin\{0,\pm x\}$.
As $\left|\bar A\setminus\{z-x,z+x\}\right|=c-2<(q-3)/2$, there exists a pair $\{z-y,z+y\}$ for some $y\notin\{0,\pm x\}$ that is not contained in $\bar A$, i.e., $z-y,z+y\in A$ as desired.
\end{proof}

To handle the edge case that $c=(q+1)/2$ and prove \Cref{fct:middle_3AP}, we need the following simple fact.

\begin{fact}\label{fct:3AP_in_pm4}
Let $c_{-4},c_{-3},\ldots,c_4\in\{0,1\}$ be arbitrary satisfying $c_0=0$ and $c_i+c_{-i}=1$ for $i=1,2,3,4$.
Then there always exist $-4\le j<k<\ell\le4$ such that $j+\ell=2k$ and $c_j=c_k=c_\ell=0$.
\end{fact}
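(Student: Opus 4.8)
The plan is to prove this purely combinatorial statement about the $9$ values $c_{-4},\dots,c_4\in\{0,1\}$ by a short case analysis. First I would record the constraints: $c_0=0$ is forced, and the four antipodal pairs $\{c_{-i},c_i\}$ for $i=1,2,3,4$ each contain exactly one $0$ and one $1$. So the ``zero set'' $Z=\{j\in\{-4,\dots,4\}: c_j=0\}$ always contains $0$ and contains exactly one of each $\pm i$; in particular $|Z|=5$. The goal is to exhibit a $3$-term arithmetic progression $j<k<\ell$ inside $Z$.

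The key idea is that among $\{1,2,3,4\}$ exactly one element of each pair $\{i,-i\}$ lies in $Z$; encode this by a sign vector $(\sigma_1,\sigma_2,\sigma_3,\sigma_4)\in\{+,-\}^4$ where $\sigma_i$ is the sign of the chosen representative. This gives $16$ cases, but many collapse: I would look for a $3$-AP through $0$, i.e.\ a pair $\{-a,a\}\subseteq Z$, which would happen iff $\sigma_a$ and $-\sigma_a$ are both realized — impossible by the antipodal constraint — so no $3$-AP through $0$ of that symmetric type exists, and I must instead find $3$-APs not centered at $0$. Useful short progressions to hunt for: $\{1,2,3\}$ (if $\sigma_1=\sigma_2=\sigma_3=+$), $\{2,3,4\}$ (if $\sigma_2=\sigma_3=\sigma_4=+$), their negatives, and the ``spread'' ones like $\{-2,0,2\}$, $\{-4,0,4\}$, $\{-4,-1,2\}$, $\{-4,-2,0\}$ etc. Then I would argue: if none of the monochromatic triples $\{1,2,3\},\{2,3,4\},\{-1,-2,-3\},\{-2,-3,-4\}$ occurs, the sign pattern on $(\sigma_1,\sigma_2,\sigma_3,\sigma_4)$ is heavily restricted; a quick check shows the surviving patterns all contain a $3$-AP using $0$ together with two same-sign elements among $\{\pm2,\pm4\}$ (note $\{0,2,4\}$ and $\{-4,-2,0\}$ are $3$-APs, so it suffices that $\sigma_2=\sigma_4$), or a ``one-sided'' $3$-AP like $\{-4,-1,2\}$ or $\{-2,1,4\}$. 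Since $\sigma_2=\sigma_4$ fails only when $\sigma_2\ne\sigma_4$, I'd split on that: if $\sigma_2=\sigma_4$ we are done via $\{0,\pm2,\pm4\}$; if $\sigma_2\ne\sigma_4$ then WLOG $2\in Z$, $-4\in Z$, and I would chase $\sigma_1,\sigma_3$ to land one of $\{-4,-1,2\}$, $\{2,3,4\}$ (needs $3,4\in Z$, excluded here since $-4\in Z$), $\{-4,-3,-2\}$ (needs $-2\in Z$, i.e.\ $\sigma_2=-$, contradicting $2\in Z$), etc., ultimately forcing a valid progression.

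The main obstacle — really the only one — is organizing the $16$ (effectively $8$ up to the global $\pm$ symmetry $c_j\mapsto c_{-j}$) sign patterns so the argument is clean rather than a brute list; I expect that exploiting the reflection symmetry, plus the observation that $\{0,2,4\}$ and $\{-4,-2,0\}$ immediately dispatch every pattern with $\sigma_2=\sigma_4$, cuts the work down to just a handful of patterns with $\sigma_2\ne\sigma_4$, each of which yields a $3$-AP by inspection. I would present it as: (i) reduce to the case $\sigma_2\ne\sigma_4$; (ii) by reflection assume $2\in Z$ and $-4\in Z$; (iii) finish by checking the four subcases on $(\sigma_1,\sigma_3)$, in each one pointing to an explicit $3$-AP among $\{-4,-3,-2,-1,0,1,2,3,4\}$ contained in $Z$. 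This keeps the proof to a few lines and avoids any lengthy computation.
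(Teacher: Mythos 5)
Your proposal is correct. Let me verify the crux: after reducing (by reflection) to $\sigma_2=\sigma_4$ failing and $2,-4\in Z$ (together with $0\in Z$), the split on $\sigma_1$ already finishes: if $1\in Z$ then $\{0,1,2\}$ is a $3$-AP in $Z$, and if $-1\in Z$ then $\{-4,-1,2\}$ is; the value of $\sigma_3$ never matters, so your ``four subcases on $(\sigma_1,\sigma_3)$'' collapse to two. Your step~(i) is also fine: $\sigma_2=\sigma_4$ immediately yields $\{0,2,4\}\subseteq Z$ or $\{-4,-2,0\}\subseteq Z$.

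The paper's proof is also a short case analysis but organized as a linear chain of forced deductions by contradiction: WLOG $c_1=0$, then avoiding $\{0,1,2\}$ forces $c_2=1$ (so $c_{-2}=0$), then avoiding $\{-2,1,4\}$ forces $c_4=1$ (so $c_{-4}=0$), and then $\{-4,-2,0\}$ gives the contradiction. In effect the paper fixes $\sigma_1$ first and propagates, whereas you first dispatch the symmetric condition $\sigma_2=\sigma_4$ via the APs through $0$ and then chase $\sigma_1$. Both are correct and of comparable length; yours is a slightly different decomposition that front-loads the observation about $\{0,\pm2,\pm4\}$, while the paper's is a single deduction chain that never needs to consider $\sigma_3$ or $\sigma_2=\sigma_4$ explicitly. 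The one place your write-up is a bit loose is the middle paragraph about the ``monochromatic triples'' and ``heavily restricted'' patterns — that detour isn't needed once you go straight to the $\sigma_2$-versus-$\sigma_4$ split, and the final plan in your last paragraph is what actually closes the argument.
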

\begin{proof}
Assume towards the contradiction that such $j,k,\ell$ do not exist. 

By symmetry, we fix $c_1=0,c_{-1}=1$.
To avoid $c_0=c_1=c_2=0$, we must have $c_{-2}=0,c_2=1$.
To avoid $c_{-2}=c_1=c_4=0$, we now must have $c_{-4}=0,c_4=1$.
However at this point we cannot avoid $c_{-4}=c_{-2}=c_0=0$, which is a contradiction.
\end{proof}

Now we prove \Cref{fct:middle_3AP}.

\fctmiddleAP*

\begin{proof}
Since $c=(q+1)/2$ and $q\ge11$, we know $2\le c<q$.
By \Cref{itm:fct:simple_3AP_and_antipodal_hole_1} of \Cref{fct:simple_3AP_and_antipodal_hole}, we can find $x\ne0$ and $z\in A$ such that $z-x,z+x\notin A$.

Define $m=\lfloor q/2\rfloor$ and $t=m/x\in\F_q$.
Consider $A'=t(A-z)$. Then $0\in A'$ and $\pm m\notin A'$.
In addition, any $3$-AP in $A'$ corresponds to a $3$-AP in $A$ by translation and dilation.
Hence it suffices to find a $3$-AP in $A'$.

Observe that $0\in A'\subseteq\F_q\setminus\{\pm m\}$.
Since $q>2$, we know that $\F_q\setminus\{\pm m\}$ is a disjoint partition of $(q-3)/2$ antipodal pairs $\{\pm y\}$ for $y\notin\{0,\pm m\}$.
Define $\bar{A'}=\F_q\setminus A'$.
As $\left|\bar{A'}\setminus\{\pm m\}\right|=c-2=(q-3)/2$, we run into one of the following two cases.
\begin{itemize}
\item There exists a pair $\{\pm y\}$ for some $y\notin\{0,\pm m\}$ that is not contained in $\bar{A'}$. 

Then we observe that $-y,0,y\in A'$ is a $3$-AP.
\item For every $y\notin\{0,\pm m\}$, exactly one element of the pair $\{\pm y\}$ is contained in $\bar{A'}$.

Since $q\ge11$ and $m=\lfloor q/2\rfloor\ge5$, we now focus on the range $i=-4,-3,\ldots,4\in\F_q$ of $9$ elements.
For each $i$ in the range, define $c_i=0$ if $i\in A'$; and $c_i=1$ if $i\in\bar{A'}$.
Then we have $c_0=0$ and $c_i+c_{-i}=1$.
By \Cref{fct:3AP_in_pm4}, there exist $-4\le j<k<\ell\le4$ such that $j+\ell=2k$ and $c_j=c_k=c_\ell=0$, which means $j,k,\ell\in A'$.
Hence $j,k,\ell$ is a $3$-AP in $A'$.
\end{itemize}
For both cases, we find a $3$-AP in $A'$ (and thus $A$), which completes the proof of \Cref{fct:middle_3AP}.
\end{proof}

\section{Fast basis search}\label{app:fast_linear_algebra}

We use fast matrix multiplication to obtain fast algorithms to find basis vectors and prove \Cref{fct:fast_basis}.

We will need the following subroutine.

\begin{fact}\label{fct:fast_basis_simple}
There is a deterministic algorithm, given any $1\le\ell\le2n$ vectors in $\F_q^n$ and running in time $\ell^{\omega-1}n\cdot\polylog(n,q)$, that outputs a maximal linearly independent set among the vectors.
\end{fact}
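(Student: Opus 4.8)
The plan is to reduce the task to fast matrix multiplication via a recursive rank-revealing Gaussian elimination. First I would arrange the $\ell$ input vectors as the rows of a matrix $M\in\F_q^{\ell\times n}$ and observe that a maximal linearly independent set among the vectors is exactly a set of rows realizing the row rank profile of $M$ --- that is, the pivot rows of any row echelon form of $M$. So it suffices to compute a row echelon form of $M$ together with the list of original row indices that become pivots, and then output those rows.

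To do this fast, I would invoke the classical recursive Gaussian-elimination-with-pivoting scheme (the recursive PLUQ, or LSP, decomposition): on an $a\times b$ matrix, split it along the longer dimension into two halves, recursively decompose the first half, use a single (rectangular) matrix multiplication to eliminate the already-found pivots from the second half, recursively decompose the updated second half, and stitch the two pivot profiles together. Deterministic pivot selection (always taking the valid pivot of smallest index) keeps the procedure deterministic, and the standard analysis gives $O\bigl(ab\cdot\min(a,b)^{\omega-2}\bigr)$ field operations, up to factors polylogarithmic in $(a,b)$; each $\F_q$ operation costs $\polylog(q)$.

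Then I would just carry out the bookkeeping. With $\{a,b\}=\{\ell,n\}$, the field-operation count is $O\bigl(\ell n\cdot\min(\ell,n)^{\omega-2}\bigr)\cdot\polylog(n)$; if $\ell\le n$ this is $O(\ell^{\omega-1}n)\cdot\polylog(n)$, and if $n<\ell\le 2n$ it is $O(n^{\omega-1}\ell)\cdot\polylog(n)\le O(\ell^{\omega-1}n)\cdot\polylog(n)$, since $n<\ell$ forces $n^{\omega-1}\ell\le\ell^{\omega-1}n$. Combined with the $\polylog(q)$ cost of arithmetic this gives the claimed $\ell^{\omega-1}n\cdot\polylog(n,q)$ running time, and the pivot rows produced by the elimination are precisely a maximal linearly independent subset of the input vectors.

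The main thing to be careful about is the complexity claim for the recursive elimination: one must verify that the matrix-multiplication updates performed on the (generally unbalanced) sub-blocks sum to the stated $O\bigl(ab\min(a,b)^{\omega-2}\bigr)$ bound, and that maintaining the pivot and permutation information stays within the same budget. This is classical, but for a self-contained write-up I would prove it by induction on $a+b$, partitioning each elimination update into $\lceil a/r\rceil\times\lceil b/r\rceil$ blocks whose side length equals the current pivot count $r$, invoking $r\times r$ matrix multiplication at cost $O(r^{\omega})$ per block, and telescoping the resulting sum over the recursion tree.
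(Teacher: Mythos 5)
Your proof is correct, and it takes a genuinely different route from the paper's. The paper argues by recursion on the number of vectors: split the $\ell$ vectors into two halves, recursively extract a basis $V\in\F_q^{n\times t}$ from the first half, project the second half onto a complement of the column space of $V$ via the Gram projector $P=V(V^\top V)^{-1}V^\top$ (implemented with a few rectangular matrix products), recurse on the projected vectors, and merge; the recurrence $T(\ell)=2T(\ell/2)+\ell^{\omega-1}n\cdot\polylog(n,q)$ then solves to the stated bound. You instead reduce the problem directly to computing the row rank profile of the $\ell\times n$ matrix via recursive PLUQ/LSP decomposition, and then verify that the classical $O\bigl(\ell n\cdot\min(\ell,n)^{\omega-2}\bigr)$ field-operation bound collapses to $O(\ell^{\omega-1}n)$ in both regimes $\ell\le n$ and $n<\ell\le 2n$. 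Both reach the same bound, but yours is arguably on firmer footing: the paper's formula for $P$ requires $V^\top V$ to be invertible, which is \emph{not} guaranteed over $\F_q$ even when $V$ has full column rank (a nonzero self-orthogonal vector already gives a singular $1\times 1$ Gram matrix), so the paper's projection step as written is not always well-defined, whereas your echelon-based argument avoids inner-product structure entirely and sidesteps this. The price is invoking the heavier (though standard) PLUQ/LSP machinery, which, as you rightly flag, would itself need to be spelled out for a fully self-contained write-up; the paper's halving recursion is more elementary, modulo the Gram-matrix issue (which is fixable by replacing the orthogonal projector with a Gaussian-elimination-style update that zeroes out $V$'s pivot coordinates in $U$, without ever forming $V^\top V$).
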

\begin{proof}
Let $T(\ell)$ be the runtime upper bound and apparently $T(1)=n\cdot\polylog(n,q)$ as claimed.

The algorithm proceeds in a recursive fashion for $\ell\ge2$: we partition the $\ell$ vectors into two parts with $\ell_1=\lfloor\ell/2\rfloor$ and $\ell_2=\lceil\ell/2\rceil$ vectors each, denoted $v_1,\ldots,v_{\ell_1}$ and $u_1,\ldots,u_{\ell_2}$ respectively.
We recursively find a maximal linearly independent set of size $t\le\ell_1$, stacked as $V\in\F^{n\times t}$, from the first part.
Then we project $u_1,\ldots,u_{\ell_2}$ onto the orthogonal space spanned by columns of $V$; and find a maximal linearly independent set there.
The final output is by merging all the vectors found above.

To analyze the runtime, recall that the projector onto the column space of $V$ is defined as $P=V(V^\top V)^{-1}V^\top$.
Define $U=\begin{bmatrix}u_1&\cdots&u_{\ell_2}\end{bmatrix}$.
Then the projection of $u_1,\ldots,u_{\ell_2}$ is equivalent to computing $U-PU=U-V(V^\top V)^{-1}V^\top U$.
Since $V\in\F^{n\times t},t\le\ell_1=\lfloor\ell/2\rfloor\le n$ and $U\in\F^{n\times\ell_2},\ell_2\le\lceil\ell/2\rceil\le n$, the computation of $A=V^\top V$ and $B=V^\top U$ takes time $\ell^{\omega-1}n\cdot\polylog(n,q)$ by partitioning $U,V$ into square blocks and combining $O(n/\ell)$ multiplications of $O(\ell)$ by $O(\ell)$ matrices.
Then $A^{-1}$ takes time $\ell^\omega\cdot\polylog(n,q)\le\ell^{\omega-1}n\cdot\polylog(n,q)$.
Finally $VA^{-1}B$ takes time $\ell^{\omega-1}n\cdot\polylog(n,q)$ by a similar block matrix multiplication fashion.
Hence the runtime $T(\ell)$ satisfies the following recursion
\begin{align*}
T(\ell)
&=\underbrace{T(\ell_1)}_\text{first recursion}+\underbrace{\ell^{\omega-1}n\cdot\polylog(n,q)}_\text{projection}+\underbrace{T(\ell_2)}_\text{second recursion}+\underbrace{\ell n\cdot\polylog(n,q)}_\text{merge}\\
&=\polylog(n,q)\cdot\left(\sum_{0\le i\le\log(\ell)}2^i\cdot(\ell/2^i)^{\omega-1}\cdot n\right)\\
&=\ell^{\omega-1}n\log(n)\cdot\polylog(n,q)=\ell^{\omega-1}n\cdot\polylog(n,q)
\tag{since $\omega\ge2$ and $\ell\le2n$}
\end{align*}
as desired.
\end{proof}

Now we are ready to prove \Cref{fct:fast_basis}.

\fctfastbasis*

\begin{proof}
We divide the $m$ input vectors into $k=\lceil m/n\rceil$ batches, where each batch contains at most $n$ vectors.
We maintain a set $S$ of linearly independent vectors and go over each batch to update $S$.
Since each batch has at most $n$ vectors and $|S|\le n$, each update on $S$ is equivalent to finding a basis for at most $2n$ vectors in $\F^n$, which has runtime $n^\omega\cdot\polylog(n,q)$ by \Cref{fct:fast_basis_simple}.
Hence the total runtime is $n^\omega k\cdot\polylog(n,q)=n^{\omega-1}m\cdot\polylog(n,q)$.
\end{proof}

\end{document}